\title{Finding Induced Subgraphs from Graphs with Small Mim-Width}
\author{Yota Otachi}{Graduate School of Informatics, Nagoya University, Nagoya, Japan \and \url{https://www.math.mi.i.nagoya-u.ac.jp/~otachi/} }{otachi@nagoya-u.jp}{https://orcid.org/0000-0002-0087-853X}
{JSPS KAKENHI Grant Numbers
  JP18H04091, 
  JP20H05793, 
  JP21K11752, 
  JP22H00513. 
}
\author{Akira Suzuki}{Graduate School of Information Sciences, Tohoku University, Sendai, Japan \and \url{https://www.ecei.tohoku.ac.jp/alg/suzuki/}}{akira@tohoku.ac.jp}{https://orcid.org/0000-0002-5212-0202}{JSPS KAKENHI Grant Numbers JP20K11666, JP20H05794.}
\author{Yuma Tamura}{Graduate School of Information Sciences, Tohoku University, Sendai, Japan}{tamura@tohoku.ac.jp}{https://orcid.org/0009-0001-5479-7006}{JSPS KAKENHI Grant Number JP21K21278.}
\authorrunning{Y. Otachi, A. Suzuki, Y. Tamura} 
\keywords{mim-width, graph algorithm, NP-hardness, induced subgraph problem, cluster vertex deletion} 
\newcites{app}{Additional references in the appendix}
\newcommand{\indG}[1]{G[#1]}
\newcommand{\Nei}[2]{N(#1; #2)}
\newcommand{\bij}{L}
\newcommand{\mimwe}[1]{\mathsf{mim}(#1)}
\newcommand{\mimwT}[2]{\mathsf{mimw}(#1, #2)}
\newcommand{\mimwG}[1]{\mathsf{mimw}(#1)}
\newcommand{\lmimwG}[1]{\mathsf{lmimw}(#1)}
\newcommand{\Prop}{\Pi}
\newcommand{\compProp}{\overline{\Prop}}
\newcommand{\seg}[2]{[#1, #2]}
\newcommand{\segsingle}[1]{[#1]}
\newcommand{\baseorder}{<}
\newcommand{\originorder}{\baseorder_{0}}
\newcommand{\chainorder}[1]{\baseorder_{#1}}
\newcommand{\nodecut}[1]{G_{#1, \overline{#1}}}
\newcommand{\head}[2]{\mathsf{head}_{#1}(#2)}
\newcommand{\tail}[2]{\mathsf{tail}_{#1}(#2)}
\newcommand{\heador}[2]{\mathsf{head}(#1, #2)}
\newcommand{\tailor}[2]{\mathsf{tail}(#1, #2)}
\newcommand{\eqrelation}[1]{\equiv_{#1}^{\mathsf{cl}}}
\newcommand{\eqrelationcc}[1]{\equiv_{#1}^{\mathsf{cc}}}
\newcommand{\eqrelationp}[1]{\equiv_{#1}^{\mathsf{po}}}
\newcommand{\eqrelationc}[1]{\equiv_{#1}^{\mathsf{con}}}
\newcommand{\eqrelationcp}[1]{\equiv_{#1}^{\mathsf{cp}}}
\newcommand{\rep}[1]{\mathsf{rep}_{#1}}
\newcommand{\ovl}[1]{\overline{#1}}
\newcommand{\clf}[3]{f_{#1}(#2,#3)}
\newcommand{\polf}[3]{f_{#1}^\mathsf{po}(#2,#3)}
\newcommand{\Repset}[1]{\mathscr{R}_{#1}}
\newcommand{\rev}[1]{#1}
\newcommand{\ONE}{\mbox{(I)}}
\newcommand{\TWO}{\mbox{(I\hspace{-.1em}I)}}
\newcommand{\THREE}{\mbox{{(I\hspace{-.1em}I\hspace{-.1em}I)}}}
\begin{document}

	\maketitle

	\begin{abstract}
        In the last decade, algorithmic frameworks based on a structural graph parameter called mim-width have been developed to solve generally NP-hard problems.
        However, it is known that the frameworks cannot be applied to the \textsc{Clique} problem, and the complexity status of many problems of finding dense induced subgraphs remains open when parameterized by mim-width.
        In this paper, we investigate the complexity of the problem of finding a maximum induced subgraph that satisfies prescribed properties from a given graph with small mim-width.
        We first give a meta-theorem implying that various induced subgraph problems are NP-hard for bounded mim-width graphs.
        Moreover, we show that some problems, including \textsc{Clique} and \textsc{Induced Cluster Subgraph}, remain NP-hard even for graphs with (linear) mim-width at most~$2$.
        In contrast to the intractability, we provide an algorithm that, given a graph and its branch decomposition with mim-width at most~$1$, solves \textsc{Induced Cluster Subgraph} in polynomial time.
        We emphasize that our algorithmic technique is applicable to other problems such as \textsc{Induced Polar Subgraph} and \textsc{Induced Split Subgraph}.
        Since a branch decomposition with mim-width at most~$1$ can be constructed in polynomial time for block graphs, interval graphs, permutation graphs, cographs, distance-hereditary graphs, convex graphs, and their complement graphs, our positive results reveal the polynomial-time solvability of various problems for these graph classes.
	\end{abstract}

 \section{Introduction}
	Efficiently solving intractable graph problems by using structural graph parameters has been extensively studied over the past few decades.
	Tree-width is arguably one of the most successful parameters in this research direction.
	Courcelle's celebrated result indicates that every problem expressible in $\textsf{MSO}_2$ logic is solvable in linear time for bounded tree-width graphs~\cite{Courcelle90}.
	Various graph problems, including \textsc{Independent Set}, \textsc{Clique}, \textsc{Dominating Set}, \textsc{Independent Dominating Set}, $k$-\textsc{Coloring} for a fixed $k$, \textsc{Feedback Vertex Set}, and \textsc{Hamiltonian Cycle}, can be written in $\textsf{MSO}_2$ logic, and hence Courcelle's theorem covers a wide range of problems.
    Later, Courcelle et al.~also gave an analogous result for a more general parameter than tree-width, namely, clique-width: every problem expressible in $\textsf{MSO}_1$ logic is solvable in linear time for bounded clique-width graphs (under the assumption that a $k$-expression for a fixed $k$ of an input graph is given)~\cite{CourcelleMR00}.
	However, these results are not applicable directly to problems on interval graphs and permutation graphs, because these graph classes have unbounded clique-width (and thus unbounded tree-width).

	In 2012, Vatshelle introduced mim-width~\cite{Vatshelle12}, and recently, algorithms based on mim-width have been widely developed~\cite{BelmonteV13,BergougnouxDJ23,BergougnouxK21,BergougnouxPT22,Bonomo-BrabermanBMP24,BrettellHMPP22,BrettellHMP22,BuixuanTV13,GalbyLR21,GalbyMR20,JaffkeKST19,JaffkeKT20a,JaffkeKT20b}.
	Roughly speaking, mim-width is an upper bound on the size of maximum induced matching along a branch decomposition of a graph.
	(In \Cref{sec:pre}, its formal definition will be given.)
	Mim-width is a more general structural parameter than clique-width in the sense that the class of bounded mim-width graphs properly contains the class of bounded clique-width graphs.
	Furthermore, many graph classes of unbounded clique-width have bounded mim-width: for example, interval graphs, permutation graphs, convex graphs, $k$-polygon graphs for a fixed $k$, circular $k$-trapezoid graphs for a fixed $k$, and $H$-graphs for a fixed graph $H$. (See~\cite{BelmonteV13,FominGR20} for more details.)
    Bergougnoux et al.~gave an algorithmic meta-theorem~\cite{BergougnouxDJ23}, which states that every problem expressible in \textsf{A}\&\textsf{C DN} logic is solvable in polynomial time for bounded mim-width graphs (under the assumption that a suitable branch decomposition of an input graph is given).
	\textsc{Independent Set}, \textsc{Dominating Set}, \textsc{Independent Dominating Set}, $k$-\textsc{Coloring} for a fixed $k$, \textsc{Feedback Vertex Set} etc.~can be expressed in \textsf{A}\&\textsf{C DN} logic.
    Thus, Bergougnoux et al.~showed that many problems are solvable in polynomial time for a much wider range of graph classes than the class of bounded clique-width graphs.

	Unfortunately, \textsf{A}\&\textsf{C DN} logic does not cover all problems expressible in $\textsf{MSO}_2$ logic.
	\textsc{Clique} and \textsc{Hamiltonian Cycle} cannot be written in \textsf{A}\&\textsf{C DN} logic, whereas they can be expressed in $\textsf{MSO}_1$ logic and $\textsf{MSO}_2$ logic, respectively.
    This means that the meta-theorem by Bergougnoux~et~al.~is not applicable to these problems.
	In fact, it is known that \textsc{Clique} is NP-hard for graphs with \rev{linear} mim-width\footnote{The linear mim-width of a graph $G$ is the mim-width when a branch decomposition of $G$ is restricted to a caterpillar. The formal definition will be given in \Cref{sec:pre}.} at most $6$~\cite{Vatshelle12} and \textsc{Hamiltonian Cycle} is NP-hard for graphs with \rev{linear} mim-width~$1$~\cite{JaffkeKT20a}.
    \rev{Note that by combining some known facts, we can show that \textsc{Clique} on graphs with mim-width at most 1 can be solved in polynomial time (see the discussion in the second paragraph of \Cref{sec:cluster}).}
	These results lead us to ask the following questions:
	\begin{itemize}
		\item What kind of problems expressible in $\textsf{MSO}_2$ logic are NP-hard for bounded mim-width graphs?
		\item Is \textsc{Clique} NP-hard for graphs with mim-width less than~$6$?
		\item Given a graph with mim-width at most~$1$, \rev{which $\textsf{MSO}_2$-expressible problems are polynomial-time solvable?}
	\end{itemize}

	\subsection{Our contributions}
	To answer the questions above, in this paper, we systematically study the complexity of the \textsc{Induced $\Prop$ Subgraph} problems and their \rev{complementary} problems, called the \textsc{$\Prop$ Vertex Deletion} problems, on bounded (linear) mim-width graphs.
	We first show that for any nontrivial hereditary graph property $\Prop$ that admits all cliques, \rev{there is a constant $w$ such that} \textsc{Induced $\Prop$ Subgraph} and \textsc{$\Prop$ Vertex Deletion} are NP-hard for \rev{graphs with (linear) mim-width at most $w$}.
	For example, \textsc{Clique}, \textsc{Induced Cluster Subgraph}, \textsc{Induced Polar Subgraph}, and \textsc{Induced Split Subgraph} satisfy the aforementioned conditions, and hence all of them are NP-hard for bounded (linear) mim-width graphs.
	As a byproduct, we also show that connected and dominating variants of them are NP-hard for bounded (linear) mim-width graphs.
	Moreover, we give sufficient conditions for \textsc{Induced $\Prop$ Subgraph} and \textsc{$\Prop$ Vertex Deletion} to be NP-hard for graphs with (linear) mim-width at most~$2$.
	\textsc{Clique}, \textsc{Induced Cluster Subgraph}, \textsc{Induced Polar Subgraph}, and \textsc{Induced Split Subgraph} are proven to be in fact NP-hard even for graphs with (linear) mim-width at most~$2$.
    We thus reveal that there are various NP-hard problems for bounded mim-width graphs, although they can be expressed in $\textsf{MSO}_2$ logic.
    Especially, our result for \textsc{Clique} strengthens the known result that \textsc{Clique} is NP-hard for graphs with mim-width at most $6$~\cite{Vatshelle12}.

	To complement the intractability, we next seek polynomial-time solvable cases for graphs with mim-width at most~$1$.
	Here we focus on \textsc{Induced Cluster Subgraph}, also known as \textsc{Cluster Vertex Deletion}.
	\textsc{Induced Cluster Subgraph} is known to be NP-hard for bipartite graphs~\cite{HsiehLLP24,Yannakakis81}, while it is solvable in polynomial time for split graphs, block graphs, interval graphs~\cite{Cao18}, cographs~\cite{LeL22}, bounded clique-width graphs~\cite{CourcelleMR00}, and convex graphs\footnote{If a given graph is convex (more generally $K_3$-free), \textsc{Induced Cluster Subgraph} is equivalent to \textsc{Induced $\Prop$ Subgraph} such that $\Prop$ is the class of graphs with maximum degree at most~$1$, which is solvable in polynomial time for convex graphs~\cite{BuixuanTV13}.}.
    Surprisingly, the complexity status of \textsc{Induced Cluster Subgraph} on chordal graphs is still open.
	We show that, given a graph $G$ with mim-width at most~$1$ accompanied by its branch decomposition with mim-width at most~$1$, \textsc{Induced Cluster Subgraph} is solvable in polynomial time.
    Although the complexity of computing a branch decomposition with mim-width at most~$1$ of a given graph is still open in general, our result yields a unified polynomial-time algorithm for \textsc{Induced Cluster Subgraph} that works on block graphs, interval graphs, permutation graphs, cographs, distance-hereditary graphs, convex graphs, and their complement graphs because all these graphs have mim-width at most~$1$ and their branch decompositions of mim-width at most~$1$ can be obtained in polynomial time~\cite{BelmonteV13,Oum05,Vatshelle12}.\footnote{As far as we know, it was not explicitly stated in any literature that block graphs and distance-hereditary graphs have mim-width at most~$1$. This follows from the facts that a graph is distance-hereditary if and only if its rank-width is at most~$1$~\cite{Oum05}, and block graphs are distance-hereditary graphs.}
	Consequently, we give \rev{independent} proofs for some of the results in~\cite{Cao18,LeL22} via mim-width.
	Moreover, to the best of our knowledge, this is the first polynomial-time algorithm for \textsc{Induced Cluster Subgraph} on permutation graphs.
	We also emphasize that our algorithmic technique can be applied to other problems such as \textsc{Induced Polar Subgraph}, \textsc{Induced Split Subgraph}, and so on.
    Combining our results, we give the complexity dichotomy of the above problems with respect to mim-width.

\subsection{Previous work on mim-width} \label{sec:previous_work}

    Mim-width is a relatively new graph structural parameter introduced by Vatshelle~\cite{Vatshelle12} and it has attracted much attention in recent years to design efficient algorithms of problems on graph classes that have unbounded tree-width and clique-width.
    Combined with the result of Belmonte and Vatshelle~\cite{BelmonteV13}, Bui-Xuan et al.\ provided XP algorithms of \textsc{Locally Checkable Vertex Subset and Vertex Partitioning} problems (\textsc{LC-VSVP} for short) parameterized by mim-width $w$, \rev{assuming that a branch decomposition with mim-width $w$ of a given graph can be computed in polynomial time}~\cite{BuixuanTV13}.
    Many problems, including \textsc{Independent Set}, \textsc{Dominating Set}, \textsc{Independent Dominating Set}, and \textsc{$k$-Coloring}, are expressible in the form of \textsc{LC-VSVP}.
    Jaffke et al.\ later generalized the result to the distance versions of \textsc{LC-VSVP}~\cite{JaffkeKST19}.
    As the name suggests, \textsc{LC-VSVP} can capture problems whose solutions are defined only by local constraints.
    \textsc{Longest Induced Path}~\cite{JaffkeKT20a} and \textsc{Feedback Vertex Set}~\cite{JaffkeKT20b} are the first problems with global constraints for which it was shown that there exist XP algorithms parameterized by mim-width.
    Bergougnoux and Kant\'{e} designed a framework to deal with problems with global constraints for bounded mim-width graphs~\cite{BergougnouxK21}.
    The remarkable meta-theorem given by Bergougnoux et al.\ is not only a generalization of all the above results in this section, but also a powerful tool for solving more complicated problems on bounded mim-width graphs~\cite{BergougnouxDJ23}.
    \textsc{Subset Feedback Vertex Set} is one of the few examples where there exists an XP algorithm parameterized by mim-width~\cite{BergougnouxPT22} although the meta-theorem does not work for it.

    Unfortunately, computing the mim-width of a given graph is W[1]-hard, and there is no polynomial-time approximation algorithm within constant factor unless $\text{NP} = \text{ZPP}$~\cite{SatherV16}.
    Even the complexity of determining whether a given graph has mim-width at most~$1$ is a long-standing open problem.
    Fortunately, it is known that various \rev{graph classes} have constant mim-width and their branch decompositions with constant mim-width are computable in polynomial time~\cite{BelmonteV13,BrettellHMPP22,BrettellHMP22,FominGR20,KangKST17,MunaroY23}.
    In particular, some famous graphs, such as block graphs, interval graphs, permutation graphs, cographs, distance-hereditary graphs, and convex graphs, have mim-width at most~$1$ and their branch decomposition with mim-width at most~$1$ can be obtained in polynomial time~\cite{BelmonteV13,Oum05}.
    The class of leaf power graphs, which is the more general class than interval graphs and block graphs, also have mim-width at most~$1$~\cite{JaffkeKT18}, although it is not known whether an optimal branch decomposition of a given leaf power graph can be obtained in polynomial time.
    On the other hand, the following graph classes have unbounded mim-width: strongly chordal split graphs~\cite{Mengel18}, co-comparability graphs~\cite{KangKST17,Mengel18}, circle graphs~\cite{Mengel18}, and chordal bipartite graphs~\cite{Brault-BaronCM15}.

    In contrast to a wealth of research on developing XP algorithms parameterized by mim-width and establishing lower and upper bounds on mim-width for specific graph classes, there has been limited research on the NP-hardness of problems for \rev{graph classes} with constant mim-width~\cite{JaffkeKT20a,JaffkeLS23,Vatshelle12}.

\section{Preliminaries} \label{sec:pre}

	Let $G = (V,E)$ be a graph.
    We assume that all the graphs in this paper are simple, undirected, and unweighted.
    We denote by $V(G)$ and $E(G)$ the vertex set and the edge set of $G$, respectively.
    We usually deal with undirected graphs.
    If a graph is directed, it will be explicitly stated.
    For two vertices $u,v$ of $G$, we denote by $uv$ an undirected edge joining $u$ and $v$, and denote by $(u,v)$ a directed edge from $u$ to $v$.
	For a vertex $v$ of $G$, we denote by $\Nei{G}{v}$ the \emph{(open) neighborhood} of $v$ in $G$, that is, $\Nei{G}{v}= \{ w \in V \mid vw \in E \}$.
	The \emph{degree} of a vertex $v$ of $G$ is the size of $\Nei{G}{v}$.
	For a vertex subset $V^\prime \subseteq V$, we denote by $\indG{V^\prime}$ the subgraph induced by $V^\prime$.
    We use the shorthand $G-V^\prime$ for $\indG{V\setminus V^\prime}$. 
    For positive integers $i$ and $j$ with $i \le j$, we write $\seg{i}{j}$ as the shorthand for the set $\{i, i+1, \ldots, j\}$ of integers.
	In particular, we write $\seg{1}{j} = \segsingle{j}$.

    For two graphs $G_1=(V_1,E_1)$ and $G_2=(V_2,E_2)$ with $V_1 \cap V_2 = \emptyset$, the \emph{disjoint union} of $G_1$ and $G_2$ is the graph whose vertex set is $V_1 \cup V_2$ and edge set is $E_1 \cup E_2$.
    For a graph $H$ and a positive integer $\ell$, $\ell H$ means the disjoint union of $\ell$ copies of $H$.
	The \emph{complement} of $G$, denoted by $\ovl{G}$, is the graph on the same vertex set $V(G)$ with the edge set $\{uv \mid u,v \in V(G), uv \notin E(G)\}$.
	An \emph{independent set} $I$ of $G$ is a vertex subset of $G$ such that any two vertices in $I$ are non-adjacent.
	A \emph{clique} $K$ of $G$ is a vertex subset of $G$ such that any two vertices in $K$ are adjacent.
	Obviously, an independent set of $G$ forms a clique of $\overline{G}$, and vice versa.
	A \emph{dominating set} $D$ of $G$ is a vertex subset of $G$ \rev{such that} $\Nei{G}{v} \cap D \neq \emptyset$ for every vertex $v \in V(G) \setminus D$.
	A graph $G$ is said to be \emph{connected} if there is a path between any two vertices of $G$.
	A maximal connected subgraph of $G$ is called a \emph{connected component} of $G$.
     A \emph{cut vertex} of $G$ is a vertex whose removal from $G$ increases the number of connected components.

	\subsection{Graph classes}
	A graph is \emph{bipartite} if its vertex set can be partitioned into two independent sets.
	For disjoint vertex sets $A$ and $B$ of a graph $G$, we denote by $\indG{A,B}$ the \emph{bipartite subgraph} with the vertex set $A\cup B$ and the edge set $\{ab \in E(G) \mid a \in A, b \in B\}$.
     A bipartite graph $G = (A\cup B, E)$ consisting of disjoint independent sets $A$ and $B$ is called a \emph{chain graph} if there is an ordering $ a_1, a_2, \dots, a_{|A|}$ of vertices in $A$ such that $\Nei{G}{a_1} \subseteq \Nei{G}{a_2} \subseteq \cdots \subseteq \Nei{G}{a_{|A|}}$.
     Note that, if $A$ has such an ordering, then $B$ also has an ordering $b_1, b_2, \dots, b_{|B|}$ of vertices in $B$ such that $\Nei{G}{b_1} \subseteq \Nei{G}{b_2} \subseteq \cdots \subseteq \Nei{G}{b_{|B|}}$.

	A \emph{tree} is a connected acyclic graph.
	A vertex of a tree is called a \emph{leaf} if it has degree~$1$; otherwise, it is an \emph{internal vertex}.
	A \emph{rooted tree} $T$ is a tree with a specific vertex $r$ called the \emph{root} of $T$.
    For a rooted tree $T$ and two adjacent vertices $x$ and $y$ of $T$, we say that $x$ is the \emph{parent} of $y$, and conversely, $y$ is a \emph{child} of $x$ if $x$ lies on a path from $y$ to $r$.
	A \emph{full binary tree} is a rooted tree such that each vertex has zero or exactly two children.
	A tree $T$ is a \emph{caterpillar} if it contains a path $P$ called a \emph{spine} such that every leaf of $T$ is adjacent to a vertex of $P$.
	In this paper, we assume that the spine $P$ is maximum, that is, there is no path longer than $P$.
	The vertices of degree at most~$1$ in $P$ are called the \emph{endpoints} of $P$.
	A tree $T$ is called \emph{subcubic} if every internal vertex of $T$ has degree exactly~$3$.

	We denote by $K_n$ and $P_n$ the complete graph and the path graph with $n$ vertices, respectively.
	We say that a graph $G$ is \emph{$H$-free} if $G$ does not contain a graph isomorphic to $H$ as an induced subgraph.
	More generally, $G$ is said to be \emph{$(H_1, H_2, \ldots)$-free} if $G$ contains none of $H_1, H_2, \ldots$ as induced subgraphs.
	A graph $G$ is a \emph{threshold} graph if and only if $G$ is $(2K_2, P_4, C_4)$-free~\cite{ChvatalH73}.

	\subsection{Mim-width}
	For an edge subset $E^\prime$ of a graph $G$, we denote $V(E^\prime) = \{v,w \in V(G) \mid vw \in E^\prime \}$.
	An edge subset $M\subseteq E(G)$ is an \emph{induced matching} of $G$ if every vertex of $\indG{V(M)}$ has degree exactly~$1$.
	For a vertex subset $A \subseteq V(G)$, let $\mimwe{A}$ be the maximum size of an induced matching in the bipartite subgraph $\indG{A, \ovl{A}}$, where $\ovl{A} = V(G) \setminus A$.

	A \emph{branch decomposition} of a graph $G$ is a pair $(T, \bij)$, where $T$ is a subcubic tree with $|V(G)|$ leaves and $\bij$ is a bijection from $V(G)$ to the leaves of $T$.
	In particular, a branch decomposition $(T, \bij)$ is called \emph{linear} if $T$ is a caterpillar.
	To distinguish vertices of $T$ from those of the original graph $G$, we call the vertices of $T$ \emph{nodes}.
	For each edge $e$ of $T$, as $T$ is acyclic, removing $e$ from $T$ results in two trees $T_1^e$ and $T_2^e$.
	Let $(A_1^e, A_2^e)$ be a vertex bipartition of $G$, where $A_i^e = \{\bij^{-1}(\ell) \mid  \text{$\ell$ is a leaf of $T_i^e$} \}$ for each $i \in \{1,2\}$.
	The \emph{mim-width} $\mimwT{T}{\bij}$ of a branch decomposition $(T, \bij)$ of $G$ is defined as $\max_{e\in E(T)} \mimwe{A_1^e}$.
	The \emph{mim-width} $\mimwG{G}$ of $G$ is the minimum mim-width over all branch decompositions of $G$.
	Similarly, the \emph{linear mim-width} $\lmimwG{G}$ of $G$ is the minimum mim-width over all linear branch decompositions of $G$.
	Note that $\mimwG{G} \le \lmimwG{G}$ holds for any graph $G$.

	In this paper, to make a branch decomposition easier to handle, we often consider its rooted variant.
	A \emph{rooted layout} of a graph $G$ is a pair $(T^\prime, \bij)$, where $T^\prime$ is a rooted full binary tree with $|V(G)|$ leaves and $\bij$ is a bijection from $V(G)$ to the leaves of $T^\prime$.
	The mim-width of a rooted layout $(T^\prime, \bij)$ is defined in the same way as a branch decomposition.
	A rooted layout of $G$ is obtained from a branch decomposition $(T, \bij)$ of $G$ with the same mim-width by inserting a root $r$ to an arbitrary edge of $T$.
	(If $|V(T)| = 1$, we regard the unique node of $T$ as the root $r$ of $T^\prime$.)

	Here we note propositions concerning mim-width.
	Vatshelle showed that for a graph $G$ and a vertex $v \in V(G)$, it holds that $\mimwG{G - {v}} \le \mimwG{G}$~\cite{Vatshelle12}.
    One can see that the proof given by Vatshelle suggests the next proposition.

	\begin{proposition} \label{prop:mimw_subgraph}
		For a graph $G$ and an induced subgraph $G^\prime$ of $G$, it holds that $\mimwG{G^\prime} \le \mimwG{G}$ and $\lmimwG{G^\prime} \le \lmimwG{G}$.
	\end{proposition}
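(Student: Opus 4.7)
The plan is to start from an optimal branch decomposition of $G$ and ``prune'' it into a branch decomposition of $G^\prime$ of no larger mim-width, handling the linear and non-linear cases in a unified manner. Let $(T, \bij)$ be a branch decomposition of $G$ with $\mimwT{T}{\bij} = \mimwG{G}$. For each vertex $v \in V(G) \setminus V(G^\prime)$, I would delete the leaf $\bij(v)$ from $T$; the unique neighbor of $\bij(v)$ now has degree at most $2$, so I would then suppress that node by replacing its two incident edges with a single edge joining its two remaining neighbors (if only one remains, simply delete it). Applying this operation for every vertex in $V(G)\setminus V(G^\prime)$ yields a subcubic tree $T^\prime$ with $|V(G^\prime)|$ leaves, together with the restriction $\bij^\prime$ of $\bij$ to $V(G^\prime)$. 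The pair $(T^\prime, \bij^\prime)$ is a branch decomposition of $G^\prime$.

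Next I would bound its mim-width. Every edge $e^\prime \in E(T^\prime)$ is either an edge of $T$ that survived all prunings, or an edge created by a suppression, which naturally corresponds to one of the two edges of $T$ that were merged. In either case there is an edge $e \in E(T)$ such that the bipartition $(A_1^{e^\prime}, A_2^{e^\prime})$ of $V(G^\prime)$ is exactly the restriction of the bipartition $(A_1^{e}, A_2^{e})$ of $V(G)$ to $V(G^\prime)$. Since $G^\prime$ is an induced subgraph of $G$, any induced matching of $\indG[G^\prime]{A_1^{e^\prime}, A_2^{e^\prime}}$ is also an induced matching of $\indG[G]{A_1^{e}, A_2^{e}}$, so $\mimwe{A_1^{e^\prime}} \le \mimwe{A_1^{e}} \le \mimwT{T}{\bij}$. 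Taking the maximum over $e^\prime \in E(T^\prime)$ gives $\mimwG{G^\prime} \le \mimwT{T^\prime}{\bij^\prime} \le \mimwT{T}{\bij} = \mimwG{G}$.

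For the linear inequality, I would just observe that the pruning operation preserves the caterpillar property: deleting a leaf of a caterpillar leaves a subdivision of a smaller caterpillar, and suppressing a degree-$2$ node produces a caterpillar again. Hence starting from a linear branch decomposition of $G$ with $\mimwT{T}{\bij} = \lmimwG{G}$ yields a linear branch decomposition of $G^\prime$, and the same induced-matching argument gives $\lmimwG{G^\prime} \le \lmimwG{G}$.

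I do not expect a serious obstacle; the only points requiring some care are the bookkeeping in the suppression step (checking that the resulting tree is still subcubic and that leaves remain in bijection with $V(G^\prime)$) and the degenerate cases where $T$ or $T^\prime$ has very few nodes, both of which are routine.
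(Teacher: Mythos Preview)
Your proof is correct and follows exactly the approach the paper alludes to: the paper does not spell out a proof but simply notes that Vatshelle's argument for deleting a single vertex (remove the corresponding leaf, suppress the resulting degree-$2$ node, and observe that any induced matching across a cut of the pruned decomposition is still an induced matching across the corresponding cut of the original) immediately yields the proposition by iteration. Your pruning construction and the observation that caterpillars are preserved under leaf deletion and suppression are precisely this argument made explicit, so there is nothing to add.
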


	We here focus on graphs with mim-width at most~$1$.
	It is known that a graph $G$ is a chain graph if and only if $G$ is a bipartite graph with a maximum induced matching of size at most~$1$~\cite{HammerPS90}.
	Thus, we obtain the following proposition.
	\begin{proposition} \label{prop:mimone_chain}
		Let $(T,\bij)$ be a branch decomposition of a graph $G$.
		Then, $\mimwT{T}{\bij} \le 1$ if and only if for any edge $e$ of $T$, the bipartite subgraph $\indG{A_1^e, A_2^e}$ of $G$ is a chain graph.
	\end{proposition}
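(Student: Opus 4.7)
The plan is to unfold the definition of $\mimwT{T}{\bij}$ and then invoke the characterization of chain graphs cited immediately before the statement. By definition, $\mimwT{T}{\bij} = \max_{e\in E(T)} \mimwe{A_1^e}$, and $\mimwe{A_1^e}$ is the maximum size of an induced matching in the bipartite subgraph $\indG{A_1^e, A_2^e}$. Hence $\mimwT{T}{\bij} \le 1$ is exactly the statement that, for every edge $e$ of $T$, the bipartite subgraph $\indG{A_1^e, A_2^e}$ admits no induced matching of size~$2$.

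Next I would observe that the notation $\indG{A_1^e, A_2^e}$ denotes a bipartite graph on the bipartition $(A_1^e, A_2^e)$: by the definition given earlier, its edge set consists only of edges of $G$ with one endpoint in $A_1^e$ and the other in $A_2^e$. Therefore we can apply the cited result of Hammer, Peled, and Sun~\cite{HammerPS90}, which asserts that a bipartite graph is a chain graph if and only if its maximum induced matching has size at most~$1$. Applying this equivalence edge-by-edge to each $\indG{A_1^e, A_2^e}$ immediately yields both directions of the claim. The proof is essentially a translation of definitions combined with one invocation of the chain graph characterization, so no real obstacle is expected; the only care needed is to make explicit that $\indG{A_1^e, A_2^e}$ is genuinely bipartite so that the characterization applies without extra work.
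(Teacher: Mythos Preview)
Your proposal is correct and matches the paper's approach exactly: the paper simply cites the characterization of chain graphs from~\cite{HammerPS90} and states that the proposition follows, without spelling out a formal proof. Your write-up just makes explicit the unfolding of definitions that the paper leaves implicit.
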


        Moreover, for a graph $G$ and a vertex subset $A \subset V(G)$, it is not hard to see that $\mimwe{A} \le 1$ on $G$ if and only if $\mimwe{A} \le 1$ on $\ovl{G}$ from the definition of a chain graph.
        This implies the following proposition.

	\begin{proposition}[\cite{Vatshelle12}] \label{prop:mimone_comp}
		Suppose that a graph $G$ has mim-width at most~$1$.
		Then, any branch decomposition $(T,\bij)$ of $G$ with $\mimwT{T}{\bij} \le 1$ is also the branch decomposition of $\ovl{G}$ with $\mimwT{T}{\bij} \le 1$.
		Consequently, $\mimwG{G} \le 1$ if and only if $\mimwG{\ovl{G}} \le 1$.
	\end{proposition}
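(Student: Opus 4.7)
The plan is to prove the statement edge by edge along the tree $T$. Fix a branch decomposition $(T,\bij)$ of $G$ with $\mimwT{T}{\bij}\le 1$; since $V(G)=V(\ovl{G})$, the very same pair $(T,\bij)$ is a branch decomposition of $\ovl{G}$ as well, so the only thing to check is the mim-width contribution at each edge of $T$. Fix an edge $e$ of $T$ with induced bipartition $(A_1^e, A_2^e)$. This bipartition is the same in $G$ and in $\ovl{G}$; what differs is the bipartite subgraph across the cut, and $\ovl{G}[A_1^e, A_2^e]$ is precisely the \emph{bipartite complement} of $\indG{A_1^e, A_2^e}$, i.e., it has an edge $ab$ with $a\in A_1^e$ and $b\in A_2^e$ exactly when $\indG{A_1^e, A_2^e}$ does not.

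By Proposition~\ref{prop:mimone_chain}, the inequality $\mimwe{A_1^e}\le 1$ (in $G$, or analogously in $\ovl{G}$) is equivalent to the corresponding bipartite subgraph across the cut being a chain graph. The whole proposition therefore reduces to the single fact that the bipartite complement of a chain graph is again a chain graph. I would prove this directly from the neighborhood-containment definition: if $a_1,\ldots,a_{|A_1^e|}$ is an ordering of $A_1^e$ witnessing the chain property of $\indG{A_1^e,A_2^e}$, then in $\ovl{G}[A_1^e,A_2^e]$ the neighborhood of each $a_i$ is $A_2^e\setminus \Nei{\indG{A_1^e,A_2^e}}{a_i}$, so the chain of containments reverses and the reverse ordering $a_{|A_1^e|},\ldots,a_1$ witnesses the chain property for $\ovl{G}[A_1^e,A_2^e]$.

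Taking the maximum over all edges of $T$ then yields $\mimwT{T}{\bij}\le 1$ on $\ovl{G}$, establishing the main claim. The ``consequently'' statement is immediate: if $\mimwG{G}\le 1$, any optimal branch decomposition of $G$ simultaneously certifies $\mimwG{\ovl{G}}\le 1$, and the reverse direction follows from the same argument applied to $\ovl{G}$ together with $\ovl{\ovl{G}}=G$. There is no real obstacle in this argument; the only step with any content is the bipartite-complement invariance of the chain-graph property, which is essentially a one-line computation once Proposition~\ref{prop:mimone_chain} is in hand.
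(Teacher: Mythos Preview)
Your proof is correct and follows exactly the approach the paper sketches: the paper reduces the proposition to the observation that $\mimwe{A}\le 1$ on $G$ iff $\mimwe{A}\le 1$ on $\ovl{G}$, which it says follows ``from the definition of a chain graph,'' and you supply precisely that computation via Proposition~\ref{prop:mimone_chain} and the reversal of the neighborhood-containment chain under bipartite complementation.
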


	Combined with the observation that any induced cycle of length at least~$5$ has mim-width~$2$ and the strong perfect graph theorem~\cite{ChudnovskyRSR06}, \Cref{prop:mimone_comp} leads to the following proposition.

	\begin{proposition}[\cite{Vatshelle12}] \label{prop:mimone_perfect}
		All graphs with mim-width at most~$1$ are perfect graphs.
	\end{proposition}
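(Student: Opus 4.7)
The plan is to invoke the Strong Perfect Graph Theorem, which characterizes perfect graphs as exactly those containing no odd hole (induced $C_{2k+1}$ with $k \geq 2$) and no odd antihole (complement of such a cycle) as an induced subgraph. Hence, to prove the proposition, it suffices to rule out both kinds of forbidden configurations from any graph $G$ with $\mimwG{G} \leq 1$.

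I first verify the auxiliary observation stated in the excerpt: every induced cycle $C_n$ with $n \geq 5$ satisfies $\mimwG{C_n} \geq 2$. Given any branch decomposition $(T, \bij)$ of $C_n$, the subcubic tree $T$ (having $n \geq 4$ leaves) admits an internal edge whose removal splits the leaves into two sets $A, \ovl{A}$ each of size at least $2$; in particular $2 \leq |A| \leq n - 2$. It then suffices to exhibit two cycle edges crossing this cut that form an induced matching in the bipartite subgraph $\indG{A, \ovl{A}}$. When $A$ is a single contiguous arc, the two boundary crossings already work: because $n \geq 5$ and $2 \leq |A| \leq n-2$, the two ``cross'' pairs of their endpoints have cyclic distance at least $2$ in $C_n$ and are therefore non-adjacent. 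When $A$ is not a single arc, there are at least four crossings, and one chooses a pair arising from two different arcs of $A$; their endpoints again lie at cyclic distance at least $2$ by the same kind of argument. In either case $\mimwe{A} \geq 2$, proving $\mimwG{C_n} \geq 2$.

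Given the auxiliary observation, suppose $\mimwG{G} \leq 1$. If $G$ contained an induced $C_n$ with $n \geq 5$, then \Cref{prop:mimw_subgraph} would give $\mimwG{C_n} \leq \mimwG{G} \leq 1$, a contradiction; hence $G$ has no induced $C_n$ with $n \geq 5$, and in particular no odd hole. By \Cref{prop:mimone_comp} we also have $\mimwG{\ovl{G}} \leq 1$, so the same argument applied to $\ovl{G}$ shows that $\ovl{G}$ contains no odd hole, equivalently, $G$ contains no odd antihole. The Strong Perfect Graph Theorem now yields that $G$ is perfect.

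The main technical obstacle is the auxiliary observation $\mimwG{C_n} \geq 2$, since it must be argued uniformly over all branch decompositions rather than just a convenient linear one. The key step is to locate an internal tree-edge producing a balanced cut; after that, the cycle's lack of chords together with the bound $n \geq 5$ guarantees that the chosen crossings have no diagonal edges among their endpoints, giving the desired induced matching.
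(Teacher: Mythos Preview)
Your proof follows exactly the route the paper indicates: exclude long induced cycles via \Cref{prop:mimw_subgraph} together with the observation that $\mimwG{C_n}\ge 2$ for $n\ge 5$, then pass to $\ovl{G}$ via \Cref{prop:mimone_comp} to exclude odd antiholes, and finish with the Strong Perfect Graph Theorem. This is precisely the argument the paper sketches in one sentence.

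There is one small imprecision in your verification of the auxiliary observation. In the non-contiguous case you write that ``one chooses a pair arising from two different arcs of $A$; their endpoints again lie at cyclic distance at least~$2$.'' This is not true for an arbitrary such pair: for $C_5$ with $A=\{v_1,v_3\}$, the crossings $v_1v_2$ and $v_3v_4$ come from different arcs but $v_2v_3$ is also a crossing edge, so they do not form an induced matching. The statement is of course correct; one clean uniform argument is to use \Cref{prop:mimone_chain}: if $\mimwe{A}\le 1$ then $\indG{A,\ovl{A}}$ is a chain graph, so some vertex $a\in A$ has $\ovl{A}$-neighbourhood containing every $\ovl{A}$-boundary vertex; since $a$ has degree at most~$2$ in $C_n$, there are at most two $\ovl{A}$-boundary vertices, which together with $2\le |A|\le n-2$ and $n\ge 5$ quickly yields a contradiction. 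This is a minor fix; the structure of your proof is correct and matches the paper.
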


	\subsection{Graph properties and problems}

	Let $\Prop$ be a fixed graph property.
	We often regard $\Prop$ as a collection of graphs satisfying the graph property.
	A graph property $\Prop$ is \emph{nontrivial} if there exist infinitely many graphs satisfying $\Prop$ and there exist infinitely many graphs
	that do not satisfy $\Prop$.
	A graph property $\Prop$ is said to be \emph{hereditary} if for any graph $G$ satisfying $\Prop$, every induced subgraph of $G$ also satisfies $\Prop$.
	We denote by $\compProp$ the \emph{complementary property} of $\Prop$, that is, $\compProp = \{\overline{G} : G \in \Prop \}$.

	For a graph $G$, a vertex subset $S \subseteq V(G)$ is called a \emph{$\Prop$-set} of $G$ if $\indG{S}$ satisfies $\Prop$.
	The \textsc{Induced $\Prop$ Subgraph} problem asks for \rev{a $\Prop$-set $S$ of maximum size for a given graph $G$}.
	If $\indG{S}$ is also required to be connected, then the problem is called the \textsc{Connected Induced $\Prop$ Subgraph} problem.
	For example, \textsc{Independent Set} is equivalent to \textsc{Induced $K_2$-free Subgraph}, and \textsc{Clique} is equivalent to \textsc{Induced $2K_1$-free Subgraph} and \textsc{Connected Induced $P_3$-free Subgraph}.
	Note that a vertex set $S$ of $G$ is a $\Prop$-set if and only if $S$ is a $\compProp$-set of $\overline{G}$.
	In \textsc{Induced $\Prop$ Subgraph}, if the $\Prop$-set $S$ is also required to be a dominating set of $G$, then the problem is called the \textsc{Dominating Induced $\Prop$ Subgraph} problem.

	Under the polynomial-time solvability, \textsc{Induced $\Prop$ Subgraph} is equivalent to the \textsc{$\Prop$ Vertex Deletion} problem, which asks for a minimum vertex subset $S^\prime$ of $G$ such that $G - S^\prime$ satisfies $\Prop$.
	The vertex subset $S^\prime$ is called a \emph{$\Prop$-deletion set} of $G$.
	The \textsc{Vertex Cover} problem is equivalent to \textsc{$K_2$-free Vertex Deletion}.
	If $\indG{S^\prime}$ is also required to be connected, then the problem is called the \textsc{Connected $\Prop$ Vertex Deletion} problem.
	In \textsc{$\Prop$ Vertex Deletion}, if the $\Prop$-deletion set $S^\prime$ is also required to be a dominating set of $G$, then the problem is called the \textsc{Dominating $\Prop$ Vertex Deletion} problem.

	\section{NP-hardness} \label{sec:hardness}

    \subsection{General case} \label{sec:hardness_general}

	In this section, we show the NP-hardness of \textsc{Induced $\Prop$ Subgraph} and \textsc{$\Prop$ Vertex Deletion} on \rev{graphs with linear mim-width at most $w$, where $w$ is some constant.}

	\begin{theorem}\label{the:NP-hard_main}
		Let $\Prop$ be a fixed nontrivial hereditary graph property \rev{that admits all cliques}.
        Then there is a constant $w$ such that \textsc{Induced $\Prop$ Subgraph} and \textsc{$\Prop$ Vertex Deletion}, as well as their connected variants and their dominating variants, are NP-hard for graphs with linear mim-width at most $w$, \rev{even if a branch decomposition with mim-width at most $w$ of an input graph is given.}
	\end{theorem}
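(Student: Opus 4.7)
The plan is to emulate a Lewis--Yannakakis--style template reduction while preserving a small linear branch decomposition throughout. Since $\Prop$ is nontrivial and hereditary, it has a minimal forbidden induced subgraph $H$, and since $\Prop$ admits every clique, $H$ is not complete, so it has two non-adjacent vertices $u, u'$. Then $H-u$ lies in $\Prop$ by minimality, while attaching a new vertex whose neighborhood in the gadget equals that of $u$ in $H$ recreates a forbidden copy of $H$. This trigger is the basic gadget mechanism I exploit.

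I reduce from \textsc{Clique}, which is NP-hard on graphs of linear mim-width at most $6$~\cite{Vatshelle12} (and in fact $2$, as is shown in the next subsection of this paper). Given an instance $G$ together with its linear branch decomposition $(T, \bij)$, I build $G'$ by attaching to every $x \in V(G)$ a private copy $H_x$ of $H-u$ so that $\indG{\{x\} \cup V(H_x)}$ is an induced copy of $H$, and, for every non-edge of $G$, a small auxiliary copy of $H$ that penalizes including both endpoints in the same $\Prop$-set. With the right target size, an optimum $\Prop$-set of $G'$ must, per vertex $x$, either take all of $V(H_x)$ and leave $x$ out, or take $x$ together with all but one vertex of $V(H_x)$; the first option marks $x$ as absent from the clique, the second as present, and the auxiliary gadgets enforce that the ``present'' vertices form a clique in $G$.

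To control linear mim-width I refine $(T, \bij)$ into $(T', \bij')$ by replacing each leaf $\bij(x)$ with a small caterpillar that enumerates $\{x\} \cup V(H_x)$ and by inserting the auxiliary-gadget leaves near their two anchors. A cut inside such a sub-caterpillar isolates at most $|V(H)|$ vertices, so the induced matching there is bounded by a constant depending only on $|V(H)|$. For a cut inherited from $(T, \bij)$, every private gadget sits on one side in its entirety, and each auxiliary gadget contributes at most a constant additive term because its size is bounded by $|V(H)|$ and it meets $V(G)$ only at its two anchors. Hence $\lmimwG{G'} \le w$ for a constant $w$ depending only on $\Prop$, and the decomposition $(T', \bij')$ witnessing this bound is produced by the reduction. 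The \textsc{$\Prop$ Vertex Deletion} case follows by complementing target sizes, and the connected and dominating variants are obtained by appending a constant-size forced gadget (a universal apex or pendant trees) that makes every optimum automatically connected or dominating; each such augmentation adds only $O(1)$ to the mim-width.

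The hard part will be the uniform gadget design: the structure of $H$ is essentially arbitrary, so a single template must cover degenerate situations where $H$ is edgeless, disconnected, or very small. The extreme case $H = 2K_1$ collapses $\Prop$ to the class of cliques and follows directly from the cited NP-hardness of \textsc{Clique}. A second technical point is certifying that no induced matching across a cut of $T'$ mixes many gadget edges with many $G$-edges; this reduces to checking that each gadget is attached to at most two anchors in $V(G)$, which the construction ensures by design.
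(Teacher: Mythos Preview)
Your mim-width argument has a real hole. You assert that across an inherited cut each auxiliary gadget ``contributes at most a constant additive term,'' but you have one such gadget per non-edge of $G$, and a cut $(A,\bar A)$ can have $\Theta(|A|\cdot|\bar A|)$ non-edges crossing it. Concretely, place the gadget for a non-edge $\{a,b\}$ on $a$'s side; then some gadget vertex $g_{ab}$ is adjacent to $b$ across the cut. If $\{b_1,\dots,b_m\}\subseteq\bar A$ is independent in $G$ and each $b_i$ has a non-neighbour $a_i\in A$, the edges $g_{a_ib_i}b_i$ form an induced matching of size $m$ in $G'$: distinct gadget vertices are mutually non-adjacent, $g_{a_ib_i}$ sees only $a_i$ and $b_i$ among original vertices, and $b_ib_j\notin E(G')$ since no gadget adds that edge. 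Bounded mim-width of $G$ does not bound the independence number on one side of a cut, so $m$ is uncontrolled. The paper avoids this entirely by reducing from \textsc{Vertex Cover} on graphs of girth at least $7$, working with $\compProp$, and taking the complement at the end; the girth hypothesis is precisely what forces every cut of the complement to have induced matching at most $2$ (a size-$3$ matching in the complement would yield a $6$-cycle in the original), after which concatenating constant-size pieces adds only $\lmimwG{\ovl{F}}$.

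Your correctness sketch also breaks when the minimal forbidden graph $H$ is disconnected. Take $H=2K_2$: the per-vertex gadget $H_x\cong K_1+K_2$ satisfies $\Prop$, but $V(H_x)\cup V(H_y)$ for two distinct $x,y$ already contains $2K_2$, so even the ``leave $x$ out, take all of $V(H_x)$'' option is infeasible once two gadgets coexist. The paper's Lewis--Yannakakis machinery (the carefully chosen base $F$ of $\compProp$-forbidden subgraphs together with the $\ell n$-fold blow-up) is there precisely to tame such interactions; a single minimal $H$ with per-vertex copies does not suffice. Finally, the connected and dominating variants in the paper come for free from the blow-up structure via \Cref{lem:deletion_set_components}, not from an added apex, which could itself create new forbidden subgraphs and need not lie in any optimum.
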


    \rev{}
    Since \textsc{Induced $\Prop$ Subgraph} is the \rev{complementary} problem of \textsc{$\Prop$ Vertex Deletion}, we only prove the hardness of \textsc{$\Prop$ Vertex Deletion}.

    The \emph{girth} of a graph $G$ is the length of a shortest cycle in $G$.
	We reduce \textsc{Vertex Cover} on graphs with girth at least~$7$, which is known to be NP-complete~\cite{Poljak74}, to \textsc{$\Prop$ Vertex Deletion} by following the classical reduction technique of Lewis and Yannakakis~\cite{LewisY80}.

	First, we define a sequence on a graph.
	Consider a graph $H$ with $p$ connected components $H_1, H_2, \ldots, H_p$.
	Suppose that $H_i$ for $i \in \segsingle{p}$ has a cut vertex $c$ and the removal of $c$ from $H_i$ results in $q$ connected components $C_{i,1}, C_{i,2}, \ldots, C_{i,q}$ with $|V(C_{i,1})| \ge  |V(C_{i,2})| \ge \cdots \ge |V(C_{i,q})|$.
	For each $j \in \segsingle{q}$, we denote by $H_{i,j}$ the subgraph induced by $V(C_{i,j}) \cup \{c\}$ and $n_{i,j} = |V(H_{i,j})|$.
	The cut vertex $c$ gives a non-increasing sequence $\alpha_{c} = \langle n_{i,1}, n_{i,2}, \ldots, n_{i,q} \rangle$.
	For two sequences $\alpha_{c}$ and $\alpha_{c^\prime}$ according to cut vertices $c$ and $c^\prime$ of $H_i$, we write $\alpha_{c^\prime} <_L \alpha_{c}$ if $\alpha_{c^\prime}$ is smaller than $\alpha_{c}$ in the sense of lexicographic order.
	Let $\alpha_i$ be the lexicographically smallest sequence among all sequences according to the cut vertices of $H_i$.
	If $H_i$ has no cut vertex, we let $\alpha_i =  \langle |V(H_i)| \rangle$.
	Define $\beta_{H} =  \langle \alpha_1, \alpha_2, \ldots, \alpha_p \rangle$, where we assume that $\alpha_1 \ge_L \alpha_2 \ge_L \cdots \ge_L \alpha_p$.
	For example, for the graph $H$ depicted in \figurename~\ref{fig:forbidden_subgraph}(a), we have $\beta_{H} =  \langle  \langle 4,2 \rangle, \langle 2,2 \rangle \rangle$.
	For two graphs $H$ with $p$ connected components and $H^\prime$ with $q$ connected components, we write $\beta_{H^\prime} <_R \beta_{H}$ if $\beta_{H^\prime}$ is smaller than $\beta_{H}$ in the sense of lexicographic order: more precisely, assuming that  $\beta_{H} =\langle \alpha_1, \alpha_2, \ldots, \alpha_p \rangle$ and $\rev{\beta_{H^\prime}} =\langle \alpha_1^\prime, \alpha_2^\prime, \ldots, \alpha_q^\prime \rangle$, there exists an integer $i \in \segsingle{\min \{p,q\}}$ such that $\alpha_j^\prime = \alpha_j$ for every $j \in \segsingle{i-1}$ and $\alpha_i^\prime <_L \alpha_i$; or $q < p$ and $\alpha_i^\prime = \alpha_i$ for every $i \in \segsingle{q}$.

	Consider the complementary property $\compProp$ of $\Prop$.
	Note that, since all cliques satisfy $\Prop$, all independent sets satisfy $\compProp$.
	Let $F$ be a graph satisfying the following two conditions:
	\begin{enumerate}
		\item there is an integer $\ell \ge 1$ such that $\ell F$ violates $\compProp$, whereas $(\ell-1) F$ satisfies $\compProp$; and
		\item for any integer $\ell^\prime \ge 1$ and any graph $F^\prime$ with $\beta_{F^\prime} <_R \beta_{F}$, $\ell^\prime F^\prime$ satisfies $\compProp$.
	\end{enumerate}
	We call $F$ the \emph{base of $\compProp$-forbidden subgraphs}.
	Notice that the existence of $F$ is guaranteed because $\compProp$ is nontrivial.
	Moreover, $F$ and $\ell$ depend on $\Prop$ solely and are independent of an instance of \textsc{Vertex Cover}, that is, $F$ and $\ell$ are fixed.

	Let $F_1, F_2, \ldots, F_p$ be $p$ connected components of $F$, where $\alpha_1 \ge_L \alpha_2 \ge_L \cdots \ge_L \alpha_p$.
	We denote by $c_1$ the cut vertex of $F_1$ that realizes $\alpha_1$ (see \figurename~\ref{fig:forbidden_subgraph}(a)) and by $F_{1,1}$ the induced subgraph of $F_1$ corresponding to $n_{1,1}$ (see \figurename~\ref{fig:forbidden_subgraph}(b)).
    If $F_1$ has no cut vertex, then $c_1$ is any vertex of $F_1$.
	We then arbitrarily choose a vertex from $\Nei{F_{1,1}}{c_1}$ and label it as $d$.
	Notice that $\Nei{F_{1,1}}{c_1} \neq \emptyset$; otherwise, since $\alpha_1, \alpha_2, \ldots, \alpha_p$ are lexicographically sorted, $\ell F$ \rev{is} an independent set and violates $\compProp$, which contradicts that all independent sets satisfy $\compProp$.
	Let $F^\prime$ be the graph obtained by removing $V(F_{1,1})\setminus \{c_1\}$ from $F$ (see \figurename~\ref{fig:forbidden_subgraph}(c)).

	\begin{figure}[t]
		\centering
		\begin{tabular}{ccc}

			\begin{minipage}[t]{0.25\linewidth}
				\centering
				\begin{tikzpicture}
					\def\edgew{0.5}
					\def\gspace{1.5}
                    \def\nodew{2.5}

					\node[draw, circle, very thick, fill = black, inner sep=\nodew] at (0:\edgew) (v1){};
					\node[draw, circle, very thick, label=180:{$c_1$}, fill = black, inner sep=\nodew] at (90:\edgew) (v2){};
					\node[draw, circle, very thick, fill = black, inner sep=\nodew] at (180:\edgew) (v3){};
					\node[draw, circle, very thick, fill = black, inner sep=\nodew] at (270:\edgew) (v4){};
					\node[draw, circle, very thick, fill = black, inner sep=\nodew] at ($(v2) + (0,1.5*\edgew)$) (v5){};

					\draw [very thick] (v1) -- (v2) -- (v3) -- (v4) -- (v1);
					\draw [very thick] (v2) -- (v5);

					\node[draw, circle, very thick, fill = black, inner sep=\nodew] at ($(v4) + (\gspace,0)$) (w1){};
					\node[draw, circle, very thick, fill = black, inner sep=\nodew] at ($(v5) + (\gspace,0)$) (w3){};
					\node[draw, circle, very thick, label=0:{$c_2$}, fill = black, inner sep=\nodew] at ($(w1)!0.5!(w3)$) (w2){};

					\draw [very thick] (w1) -- (w2) -- (w3);
				\end{tikzpicture}

			\end{minipage} &




			\begin{minipage}[t]{0.25\linewidth}
				\centering
				\begin{tikzpicture}
					\def\edgew{0.5}
					\def\gspace{2}
                    \def\nodew{2.5}

					\node[draw, circle, very thick, fill = black, inner sep=\nodew] at (0:\edgew) (v1){};
					\node[draw, circle, very thick, label=180:{$c_1$}, fill = black, inner sep=\nodew] at (90:\edgew) (v2){};
					\node[draw, circle, very thick, label=180:{$d$}, fill = black, inner sep=\nodew] at (180:\edgew) (v3){};
					\node[draw, circle, very thick, fill = black, inner sep=\nodew] at (270:\edgew) (v4){};

					\draw [very thick] (v1) -- (v2) -- (v3) -- (v4) -- (v1);
				\end{tikzpicture}
			\end{minipage} &




			\begin{minipage}[t]{0.25\linewidth}
				\centering
				\begin{tikzpicture}
					\def\edgew{0.45}
					\def\gspace{1.25}
                    \def\nodew{2.5}

					\node[circle, very thick] at (0:\edgew) (v1){};
					\node[draw, circle, very thick, label=180:{$c_1$}, fill = black, inner sep=\nodew] at (90:\edgew) (v2){};
					\node[circle, very thick] at (180:\edgew) (v3){};
					\node[circle, very thick] at (270:\edgew) (v4){};
					\node[draw, circle, very thick, fill = black, inner sep=\nodew] at ($(v2) + (0,1.5*\edgew)$) (v5){};

					\draw [very thick] (v2) -- (v5);

					\node[draw, circle, very thick, fill = black, inner sep=\nodew] at ($(v4) + (\gspace,0)$) (w1){};
					\node[draw, circle, very thick, fill = black, inner sep=\nodew] at ($(v5) + (\gspace,0)$) (w3){};
					\node[draw, circle, very thick, fill = black, inner sep=\nodew] at ($(w1)!0.5!(w3)$) (w2){};

					\draw [very thick] (w1) -- (w2) -- (w3);
				\end{tikzpicture}
			\end{minipage}
			\\
			(a) & ~~~(b) & (c)
		\end{tabular}
		\caption{Let $F$ be the graph depicted in~(a). The cut vertex $c_1$ of the left connected component $F_1$ of $F$ gives $\alpha_1 = \langle 4, 2 \rangle$  and the cut vertex $c_2$ of the right connected component $F_2$ of $F$ gives $\alpha_2 = \langle 2, 2 \rangle$, where $\alpha_1 >_L \alpha_2$. Thus, if $F$ is selected as the base of $\compProp$-forbidden subgraphs, $F_{1,1}$ and $F^\prime$ are defined as the graphs depicted in~(b) and~(c), respectively.}
		\label{fig:forbidden_subgraph}
	\end{figure}
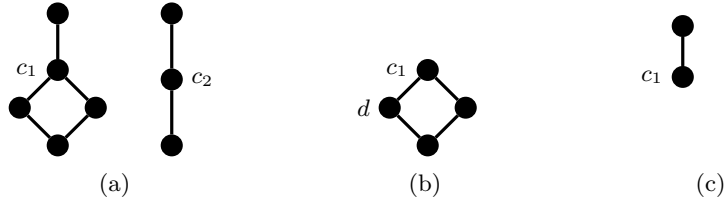

	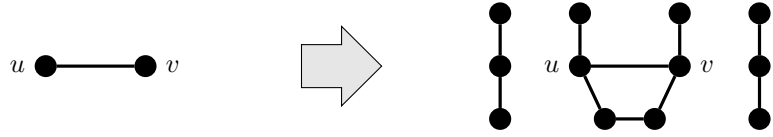
\begin{figure}[t]
		\centering
		\begin{tikzpicture}
			\def\edgew{1.75}
			\def\vheight{0.7}
           \def\nodew{2.5}

			\node[draw, circle, very thick, label=180:{$u$}, fill = black, inner sep=\nodew] at (0,0) (v1){};
			\node[draw, circle, very thick, label=0:{$v$}, fill = black, inner sep=\nodew] at (0.75*\edgew,0) (v2){};

			\draw [very thick] (v1) -- (v2);

			\node at (3.75, 0)[single arrow,draw=black,fill=black!10,minimum height=3em, minimum width=3em, single arrow head extend=0.3em] (arrow){};

			\begin{scope}[xshift=20em]
				\node[draw, circle, very thick, label=180:{$u$}, fill = black, inner sep=\nodew] at (0,0) (w1){};
				\node[draw, circle, very thick, label=0:{$v$}, fill = black, inner sep=\nodew] at (0.75*\edgew,0) (w2){};
				\node[circle, very thick] at ($(w1)!0.25!(w2)$) (x1){};
				\node[circle, very thick] at ($(w1)!0.75!(w2)$) (x2){};
				\node[draw, circle, very thick, fill = black, inner sep=\nodew] at ($(x1) + (0,-\vheight)$) (y1){};
				\node[draw, circle, very thick, fill = black, inner sep=\nodew] at ($(x2) + (0,-\vheight)$) (y2){};
				\draw [very thick, fill = black, inner sep=\nodew] (w1) -- (w2) -- (y2) -- (y1) -- (w1);

				\node[draw, circle, very thick, fill = black, inner sep=\nodew] at ($(w1) + (0,\vheight)$) (z1){};
				\node[draw, circle, very thick, fill = black, inner sep=\nodew] at ($(w2) + (0,\vheight)$) (z2){};
				\draw [very thick, fill = black] (w1) -- (z1);
				\draw [very thick, fill = black] (w2) -- (z2);
				\node[draw, circle, very thick, fill = black, inner sep=\nodew] at ($(z1) + (-0.6*\edgew,0)$) (u11){};
				\node[draw, circle, very thick, fill = black, inner sep=\nodew] at ($(w1) + (-0.6*\edgew,0)$) (u12){};
				\node[draw, circle, very thick, fill = black, inner sep=\nodew] at ($(w1) + (-0.6*\edgew,-\vheight)$) (u13){};
				\draw [very thick, fill = black] (u11) -- (u12) -- (u13);
				\node[draw, circle, very thick, fill = black, inner sep=\nodew] at ($(z2) + (0.6*\edgew,0)$) (u21){};
				\node[draw, circle, very thick, fill = black, inner sep=\nodew] at ($(w2) + (0.6*\edgew,0)$) (u22){};
				\node[draw, circle, very thick, fill = black, inner sep=\nodew] at ($(w2) + (0.6*\edgew,-\vheight)$) (u23){};
				\draw [very thick, fill = black] (u21) -- (u22) -- (u23);
			\end{scope}

		\end{tikzpicture}
		\caption{A transformation of an edge $uv$ with $F_{1,1}$ and $F^\prime$, which are the graphs depicted in \figurename~\ref{fig:forbidden_subgraph}(b) and~(c), respectively.}
		\label{fig:instance_trans}
	\end{figure}

	We now construct an input graph $G$ for \textsc{$\Prop$ Vertex Deletion} from an input graph $H$ with girth at least~$7$ for \textsc{Vertex Cover}.
	Let $n = |V(H)|$ and $H^\ast$ be the disjoint union of $\ell n$ copies of $H$.
	\rev{We assume that $n\ge 2$, $k < n-1$, and $H$ has at least one edge; otherwise, \textsc{Vertex Cover} is trivially solvable.}
	For each vertex $u$ of $H^\ast$, make a copy of $F^\prime$ and identify $c_1$ with $u$.
	For each edge $uv$ of $H^\ast$, make a copy of $F_{1,1}$ and identify $c_1$ and $d$ with $u$ and $v$, respectively.
	(See \figurename~\ref{fig:instance_trans}.)
    \rev{Let $H^\prime$ be the graph resulting from} the above transformation.
	Finally, we let $G = \overline{H^\prime}$.
	Since $\ell$, $F^\prime$, and $F_{1,1}$ are fixed, $G$ can be constructed in polynomial time in the size of $H$.

	In~\cite{LewisY80}, it is shown that $H$ has a vertex cover of size at most $k$ if and only if $H^\prime$ has a $\compProp$-deletion set $S$ of size at most $k\ell n$.
    \rev{Notice that $H^\prime$ has $\ell n \ge 2$ connected components because $H^\prime$ is obtained from $H^\ast$, which is the disjoint union of $\ell n$ copies of $H$.
    Moreover, we have the following lemma.}

    \begin{lemma} \label{lem:deletion_set_components}
        Suppose that $H^\prime$ has a $\compProp$-deletion set $S$ of size at most $k\ell n$.
        Then the following two claims (a) and (b) are true:
        \begin{enumerate}[(a)]
            \item there are two connected components $C_1$ and $C_2$ of $H^\prime$ such that $V(C_1) \setminus S \neq \emptyset$ and $V(C_2) \setminus S \neq \emptyset$; and
            \item there are two connected components $C_1^\prime$ and $C_2^\prime$ of $H^\prime$ such that $ V(C_1^\prime) \cap S  \neq \emptyset$ and $V(C_2^\prime) \cap S \neq \emptyset$.
        \end{enumerate}
    \end{lemma}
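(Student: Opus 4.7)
The plan is to prove claims (a) and (b) separately, each by contradiction, by exploiting two structural features of $H'$ that follow from the Lewis--Yannakakis-style construction. First, because $H^\ast$ is the disjoint union of $\ell n$ copies of $H$, every connected component of $H'$ intersects $V(H^\ast)$ in the vertex set of a single connected component of one copy of $H$, and therefore in at most $n$ vertices. Second, because $H$ has at least one edge, each of the $\ell n$ copies of $H$ in $H^\ast$ contributes at least one component of $H'$ that contains an edge $uv\in E(H^\ast)$; for any such component, the $F'$-gadget attached at $u$, together with the $F_{1,1}$-gadget attached on $uv$ (sharing only the vertex $c_1=u$), form an induced copy of $F$ in $H'$ by the defining property of the Lewis--Yannakakis gadgets.

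For (a), I will suppose towards a contradiction that at most one component $C^\dagger$ of $H'$ satisfies $V(C^\dagger)\setminus S \ne \emptyset$, so that every other component is entirely contained in $S$. By the first observation, $|V(C^\dagger)\cap V(H^\ast)|\le n$, and thus $|S\cap V(H^\ast)|\ge |V(H^\ast)|-n=\ell n^2-n$. Combined with the hypothesis $|S|\le k\ell n$, this yields $\ell(n-k)\le 1$, contradicting $\ell\ge 1$ together with $n-k\ge 2$, which in turn follows from $k<n-1$.

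For (b), I will suppose towards a contradiction that $S$ meets at most one component $C^\dagger$ of $H'$, so that every other component of $H'$ survives intact in $H'-S$. By the second observation, there are at least $\ell n$ components of $H'$ each containing an induced copy of $F$, and these copies are pairwise vertex-disjoint because they lie in distinct components. After deleting $C^\dagger$, at least $\ell n-1\ge \ell$ such copies remain (using $n\ge 2$), so $H'-S$ contains an induced $\ell F$. Since $\Prop$ is hereditary, so is $\compProp$, and $\ell F\notin\compProp$ by the choice of $F$; hence $H'-S\notin\compProp$, contradicting that $S$ is a $\compProp$-deletion set.

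The main obstacle I expect is the careful verification that the glued $F'$- and $F_{1,1}$-gadgets really induce a copy of $F$ in $H'$, which amounts to checking that no extra edges appear in $H'$ between the internal vertices of different gadgets; this follows directly from the construction, since each gadget is attached only at its designated identification vertex. The remaining work is numerical bookkeeping with the standing assumptions $\ell\ge 1$, $n\ge 2$, and $k\le n-2$, which ensure that both inequalities $\ell(n-k)\ge 2$ and $\ell n-1\ge\ell$ are strict enough to force the two contradictions above.
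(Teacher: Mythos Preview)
Your proof is correct and follows essentially the same approach as the paper: both parts proceed by contradiction, using the size bound $|S|\le k\ell n$ together with $k<n-1$ for~(a), and the existence of at least $\ell n-1\ge\ell$ untouched components each containing an induced copy of $F$ for~(b). The only cosmetic difference is that in~(a) you count vertices inside $V(H^\ast)$ (using $|V(C^\dagger)\cap V(H^\ast)|\le n$) whereas the paper counts all vertices (using that each component of $H'$ has at least $n$ vertices); both routes yield the same bound $|S|\ge n(\ell n-1)>k\ell n$.
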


    \begin{proof}
        In the claim (a), assume for a contradiction that there is at most one connected component $C$ of $H^\prime$ such that $V(C) \setminus S \neq \emptyset$.
        In other words, $V(H^\prime-C) \subseteq S$ holds.
        Recall that $k < n-1$ and $H^\prime$ has $\ell n \ge 2$ connected components.
        Moreover, each connected component of $H^\prime - C$ has at least $n$ vertices from the construction of $H^\prime$.
        Thus, we have
        \[
        |S| \ge  |V(H^\prime -C)| \ge  n ( \ell n - 1) > (k+1) (\ell n -1)  = k\ell n + \ell n -k -1  > k\ell n,
        \]
        a contradiction.

        To prove the claim (b), assume for a contradiction that there is at most one connected component $C^\prime$ of $H^\prime$ such that $V(C^\prime) \cap S \neq \emptyset$.
        In other words, there are at least $\ell n -1$ ($ \ge \ell$ because $n\ge 2$) connected components of $H^\prime - C^\prime$ that contain no vertex in $S$.
        Consider $\ell$ connected components of $H^\prime - C^\prime$.
        Since each of them contains $F$ as an induced subgraph, $H^\prime - C^\prime$ contains $\ell F$ as an induced subgraph.
        However, $\ell F$ violates $\compProp$ because $F$ is the base of $\compProp$-forbidden subgraphs.
        This contradicts that $S$ is a $\compProp$-deletion set of $H^\prime$.
    \end{proof}

	\rev{Observe that $S$ is a $\compProp$-deletion set of $H^\prime$ of size at most $k\ell n$ if and only if $S$ is a $\Prop$-deletion set of $G = \overline{H^\prime}$ of size at most $k\ell n$.
    Combined with \Cref{lem:deletion_set_components}, this implies that $H$ has a vertex cover of size at most $k$ if and only if $G$ has a $\Prop$-deletion set $S$ of size at most $k\ell n$ such that the induced subgraphs $\indG{S}$ and $G-S$ are both connected, and $S$ and $V(G)\setminus S$ are dominating sets of $G$.}

	Our remaining task is to show that $G$ has \rev{linear mim-width at most $w$ for some constant $w$}.
	To this end, we consider a sequence of subgraphs of $H^\prime$.
	Let $V(H^\ast) = \{v_1, v_2, \ldots, v_n\}$ and $E(H^\ast) = \{e_1, e_2, \ldots, e_m\}$, where $n$ and $m$ are the numbers of vertices and edges of $H^\ast$, respectively.
	We make a sequence $\mathcal{H} = \langle H^\ast = H_0, H_1, \ldots, H_{n+m}=H^\prime \rangle$ such that $H_{i}$ for $i \in \segsingle{n}$ is obtained from $H_{i-1}$ by attaching a copy of $F^\prime$ to $v_i$, and $H_i$ for $i \in \seg{n+1}{n+m}$ is obtained from $H_{i-1}$ by attaching a copy of $F_{1,1}$ to $e_{i-n}$.
	Since $G = \overline{H^\prime}$, the following lemma completes the proof of \Cref{the:NP-hard_main}.
	(Recall that $F$ is fixed and hence $\lmimwG{\overline{F}}$ is a constant.)

	\begin{lemma}\label{lem:girth_mimw}
		For any graph $H_i$ in the sequence $\mathcal{H} = \langle H^\ast = H_0, H_1, \ldots, H_{n+m}=H^\prime \rangle$, \rev{a linear branch decomposition of $H_i$ with mim-width at most $\lmimwG{\overline{F}}+2$ can be obtained in polynomial time in the size of $H^\ast$.}
	\end{lemma}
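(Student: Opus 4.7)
The plan is to proceed by induction on $i$, constructing at each step a linear branch decomposition $(T,L)$ whose mim-width, viewed as a branch decomposition of $\overline{H_i}$, is at most $\lmimwG{\overline{F}}+2$; this is what ultimately propagates to the required mim-width bound on $G=\overline{H^\prime}$ in \Cref{the:NP-hard_main}. The ordering I would build lists the $\ell n$ copies of $H$ in $H^\ast$ consecutively, and as each new gadget is attached in the sequence $\mathcal{H}$, its interior vertices are inserted as a single contiguous block near the attachment points.

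For the base case $i=0$, I would first establish $\mimwG{\overline{H}}\le 2$ using the girth condition. A size-$k$ induced matching of the bipartite slice $\overline{H}[A,\overline{A}]$ corresponds to vertices $u_1,\ldots,u_k\in A$ and $v_1,\ldots,v_k\in\overline{A}$ such that $u_iv_j\in E(H)$ for all $i\ne j$ and $u_iv_i\notin E(H)$; the set $\{u_1,\ldots,u_k,v_1,\ldots,v_k\}$ then induces $K_{k,k}$ minus a perfect matching in $H$, which for $k=3$ is exactly a $C_6$, contradicting girth at least $7$. Hence $k\le 2$ and $\mimwG{\overline{H}}\le 2$. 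Since $\overline{H^\ast}$ is the join of $\ell n$ copies of $\overline{H}$, listing copies sequentially yields a linear branch decomposition in which cuts between copies produce complete bipartite slices (induced matching $\le 1$) and cuts splitting a single copy contribute at most $\mimwG{\overline{H}}\le 2$; internal matching edges cannot coexist with join edges to other copies, since any such pairing creates an extra bipartite edge between the internal endpoint and an outside vertex. Therefore $\lmimwG{\overline{H^\ast}}\le 2\le\lmimwG{\overline{F}}+2$.

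For the inductive step $H_{i-1}\to H_i$, I would insert the new gadget's interior vertices as a single contiguous block: immediately after the attachment vertex for an $F^\prime$-gadget, or right after the earlier of the two attachment vertices for an $F_{1,1}$-gadget. The internal order within the block follows an optimal linear branch decomposition of $\overline{F^\prime}$ or $\overline{F_{1,1}}$; by \Cref{prop:mimw_subgraph}, each has mim-width at most $\lmimwG{\overline{F}}$, because $F^\prime$ and $F_{1,1}$ are induced subgraphs of $F$ and therefore their complements are induced subgraphs of $\overline{F}$. The crucial feature of contiguous placement is that every cut in the extended ordering splits at most one gadget at a time, so the induced matching across any cut decomposes into at most $\lmimwG{\overline{F}}$ edges internal to the split gadget plus at most $2$ edges inherited from the base cross-cluster/within-copy structure.

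The hard part will be the detailed cut analysis for edge gadgets $F_{1,1}$, whose two attachment vertices $v_j$ and $v_k$ may lie arbitrarily far apart in the ordering. For cuts lying strictly between the inserted block (just after $v_j$) and the far attachment $v_k$, the interaction between the gadget's interior and $v_k$---appearing as non-edges in $\overline{H_i}$ via the gadget's attachment edges---must be tracked carefully. However, these non-edges only remove a bounded number of edges from the otherwise complete-bipartite join between the copy containing $v_j, v_k$ and the other copies, so the induced matching contributed by the non-gadget part still fits within the $\le 2$ budget and the total stays bounded by $\lmimwG{\overline{F}}+2$. Since each gadget insertion is polynomial in the constant gadget size and there are $n+m=\mathrm{poly}(|V(H^\ast)|)$ insertions in total, the entire construction runs in polynomial time in the size of $H^\ast$.
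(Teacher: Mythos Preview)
Your overall induction scheme and the base-case argument are fine, and you correctly identify that one must work in $\overline{H_i}$. However, your inductive construction differs from the paper's in a way that creates a real gap.

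The paper does \emph{not} insert the new gadget's interior near its attachment vertices; it simply \emph{concatenates} the block $V(F_i)$ at the end of the linear layout of $\overline{H_{i-1}}$. This gives a clean dichotomy: every cut of $T_i$ has either $A_1^e\subseteq V(H_{i-1})$ and $V(F_i)\subseteq A_2^e$ (Case~I), or $A_1^e\subseteq V(F_i)$ and $V(H_{i-1})\subseteq A_2^e$ (Case~II). In Case~I one either falls back on the inductive hypothesis (if $V(M)\subseteq V(H_{i-1})$) or, if some gadget vertex appears in $M$, extracts from any three matching edges a $C_6$ in $H_i$ and recursively shortens it to a short cycle in $H$, contradicting girth $\ge 7$. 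In Case~II the fact that all of $V(H_{i-1})$ lies on one side forces $V(M)\cap V(H_{i-1})\subseteq\{u,w\}$ (the at most two attachment vertices), so deleting those $\le 2$ edges leaves an induced matching entirely inside $\overline{F_i}$, bounded by $\lmimwG{\overline{F}}$.

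Your insertion scheme breaks this dichotomy: a cut inside the inserted block has $V(H_{i-1})$ split across \emph{both} sides. Your claimed decomposition ``$\le \lmimwG{\overline{F}}$ gadget-internal edges plus $\le 2$ base edges'' is not justified. If the matching has no gadget-internal edge, the inductive hypothesis only gives $\lmimwG{\overline{F}}+2$ for the $\overline{H_{i-1}}$-part, not $2$; and you do not control the cross edges (gadget vertex on one side, non-attachment $H_{i-1}$-vertex on the other), which exist abundantly in $\overline{H_i}$ since gadget interiors are complement-adjacent to almost all of $V(H_{i-1})$. For vertex gadgets this might still be patched with a careful case analysis, but for edge gadgets you yourself flag it as ``the hard part'' and then give only a heuristic about ``non-edges removing a bounded number of edges from the join''---this is not an argument, since the relevant cuts sit inside a single copy of $H$, not between copies. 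The paper's concatenation avoids all of this: putting $V(F_i)$ at the end guarantees that $V(H_{i-1})$ is never split when the gadget is split, which is exactly what makes Case~II go through with the simple $+2$ bound.
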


	\begin{proof}
		We prove the lemma by induction, where the base case is $H_0 = H^\ast$.
		Note that $H^\ast$ has girth at least~$7$ because $H^\ast$ consists of copies of $H$ whose girth is at least~$7$.
		Consider a linear branch decomposition $(T_0,\bij_0)$ of $\overline{H_0}$, where $\bij_0$ is an arbitrary bijection from $V(H_0)$ to the leaves of $T_0$.
		To show that $\mimwT{T_0}{\bij_0} \le 2 \le \lmimwG{\overline{F}}+2$, assume for a contradiction that there is an edge $e$ of $T_0$ such that $\mimwe{A_1^e} \ge 3$ for the bipartition $(A_1^e, A_2^e)$.
		Let $x_1x_2, y_1y_2, z_1z_2$ be edges that form an induced matching in $\indG{A_1^e, A_2^e}$, where $x_1, y_1, z_1 \in A_1^e$ and $x_2, y_2, z_2 \in A_2^e$.
		Then, $x_1y_2, y_2z_1, z_1x_2, x_2y_1, y_1z_2, z_2x_1 \notin E(\overline{H})$ and hence they form a cycle of length~$6$ in $H_0$.
		This contradicts that the girth of $H_0$ is at least~$7$.

		Consider the case of $i > 0$.
		We here define a \emph{concatenation} of two linear branch decompositions.
		Let $G_1$ and $G_2$ be vertex-disjoint induced subgraphs of a graph $G$ such that $V(G_1) \cup V(G_2) = V(G)$, and let $(T_1,\bij_1)$ and $(T_2,\bij_2)$ be linear branch decompositions of $G_1$ and $G_2$, respectively.
		A concatenation of $(T_1,\bij_1)$ and $(T_2,\bij_2)$ is to construct a new linear branch decomposition $(T,\bij)$ of $G$ as follows.
        For each $i \in \{1,2\}$, let $e_i$ be an edge incident to an endpoint of the spine of $T_i$.
		Insert nodes $t_1$ and $t_2$ into $e_1$ and $e_2$, respectively, and then connect $t_1$ and $t_2$ by an edge.
		(If $|V(T_i)| = 1$ for $i \in \{1,2\}$, we define $t_i$ as the unique node of $T_i$.)
		Observe that $T$ is a subcubic caterpillar.
		Finally, set a bijection $\bij$ from $V(G)$ to the leaves of $T$ such that $\bij(v) = \bij_1(v)$ if $v \in V(G_1)$ and $\bij(v) = \bij_2(v)$ if $v \in V(G_2)$.

		By the induction hypothesis, there exists a linear branch decomposition $(T_{i-1},\bij_{i-1})$ of $\overline{H_{i-1}}$ such that $\mimwT{T_{i-1}}{\bij_{i-1}} \le \lmimwG{\overline{F}}+2$.
		Recall that $H_i$ is constructed by attaching a copy of $F^\prime$ to $v_i$ or a copy of $F_{1,1}$ to $e_{i-n}$.
		We denote by $F_i$ the subgraph of $H_i$ obtained by removing all vertices in $V(H_{i-1})$.
		We may assume that $|V(F_i)| \ge 1$; otherwise, $H_i = H_{i-1}$ and thus we immediately conclude that $\lmimwG{\overline{H_i}} \le \lmimwG{\overline{F}}+2$.
		Let $(T^\prime_i,\bij^\prime_i)$ be a linear branch decomposition of $\overline{F_i}$ such that $\mimwT{T^\prime_i}{\bij^\prime_i} \le \lmimwG{\overline{F}}$.
		Notice that, since $\overline{F_i}$ is an induced subgraph of $\overline{F}$, such a linear branch decomposition exists by \Cref{prop:mimw_subgraph}.
        \rev{Moreover, it can be constructed in constant time because $\ovl{F}$ is fixed.}
		We define $(T_i, \bij_i)$ as a linear branch decomposition obtained by a concatenation of $(T_{i-1},\bij_{i-1})$ and $(T^\prime_i,\bij^\prime_i)$.
       \rev{Clearly, the construction of $(T_i, \bij_i)$ can be done in polynomial time in the size of $H^\ast$.}

		To show that $\mimwT{T_{i}}{\bij_{i}} \le \lmimwG{\overline{F}}+2$, assume for a contradiction that there is an edge $e$ of $T_i$ such that the bipartite subgraph $\indG{A_1^e, A_2^e}$ of $\ovl{H_i}$ has an induced matching $M$ of size $\lmimwG{\overline{F}}+3$, where $(A_1^e, A_2^e)$ is the bipartition of $V(H_i)$ given by $e$.
		From the construction of $(T_i, \bij_i)$, the following two cases are considered: \ONE\ $A_1^e \subseteq V(H_{i-1})$ and $V(F_i) \subseteq A_2^e$; and \TWO\ $A_1^e \subseteq V(F_i)$ and $V(H_{i-1}) \subseteq A_2^e$.

		\smallskip
		\noindent \textbf{Case~\ONE:}
		Let $e^\prime$ be an edge of $T_{i-1}$ such that $A_1^{e^\prime} =  A_1^e$ and $A_2^{e^\prime} = A_2^e \setminus V(F_i)$.
		If $V(M) \subseteq V(H_{i-1})$, then $M$ is also an induced matching of the bipartite subgraph $\indG{A_1^{e^\prime}, A_2^{e^\prime}}$ defined by the linear branch decomposition $(T_{i-1}, \bij_{i-1})$.
		This implies that $\mimwT{T_{i-1}}{\bij_{i-1}} \ge |M| = \lmimwG{\overline{F}}+3$, which contradicts that $\mimwT{T_{i-1}}{\bij_{i-1}} \le \lmimwG{\overline{F}}+2$.

		Without loss of generality, we assume that $M$ has three distinct edges $x_1x_2, y_1y_2, z_1z_2$ such that $x_1, y_1, z_1 \in A_1^e \subseteq V(H_{i-1})$, $x_2 \in A_2^e \cap V(F_i)$, and $y_2, z_2 \in A_2^e$.
		Then, the sequence $\langle x_2, y_1, z_2, x_1, y_2, z_1, x_2 \rangle$ of vertices forms a cycle $C$ of length~$6$ of $H_i$.
		If $i \in \segsingle{n}$, as $x_2 \in V(F_i)$ is adjacent to at most one vertex in $V(H_{i-1})$ from the construction of $H_i$, then we have $y_1 = z_1$, a contradiction.
        Suppose that  $i \in \seg{n+1}{n+m}$.
        Recall that, from the construction of $H_i$, each vertex of $F_i$ is not adjacent to vertices in $V(H_{i-1})$ except for the endpoints of $e_{i-n}$.
		Since $x_2 \in V(F_i)$ is adjacent to the distinct vertices $y_1, z_1 \in V(H_{i-1})$, we have $e_{i-n} = y_1z_1$.
		Furthermore, $x_1 \in V(H_{i-1})$ is not adjacent to any vertex in $V(F_i)$ and hence we have $y_2, z_2 \in V(H_{i-1})$.
		Therefore, we obtain the cycle $C_1 = \langle y_1, z_2, x_1, y_2, z_1, y_1 \rangle$ with smaller length than that of $C$, where the vertices of $C_1$ are in $V(H_{i-1})$.
		Similarly, if $C_1$ contains vertices of $F_j$ for $j \in \seg{n+1}{i}$, there exists a smaller cycle of $H_{i-1}$ that contains no vertices of $F_j$.
		We eventually obtain a cycle $C^\prime$ of $H$ of length less than~$6$, which contradicts that $H$ has girth at least~$7$.

		\smallskip
		\noindent \textbf{Case~\TWO:}
        Recall that at most two vertices in $V(H_{i-1})$, say $u$ and $w$, are adjacent to some vertex in $V(F_i)$ on $H_i$ and thus no vertex in $V(H_{i-1}) \setminus \{u,w\}$ is adjacent to any vertex in $V(F_i)$ on $H_i$.
        If some vertex in $V(M)$ is in $V(H_{i-1}) \setminus \{u,w\}$, then we can take three distinct edges $x_1x_2, y_1y_2, z_1z_2 \in M$ such that $x_1, y_1, z_1 \in A_1^e \subseteq V(F_i)$, $x_2 \in V(H_{i-1}) \setminus \{u,w\} \subseteq A_2^e$, and $y_2, z_2 \in A_2^e$.
        However, this implies that $x_2$ is adjacent to $y_1$ and $z_1$ on $H_i$, which contradicts that no vertex in $V(H_{i-1}) \setminus \{u,w\}$ is adjacent to any vertex in $V(F_i)$ on $H_i$.

        If there is no vertex in $V(M)$ is in $V(H_{i-1}) \setminus \{u,w\}$, then there is an induced matching $M^\prime \subseteq M$ of $\indG{A_1^e, A_2^e}$ such that $V(M^\prime) \subseteq V(F_i)$ and $|M^\prime| \ge |M|-|V(M) \cap \{u,w\} | \ge \lmimwG{\overline{F}}+1$.
		For an edge $e^\prime$ of $T_i^\prime$ such that $A_1^{e^\prime} = A_1^e$ and $A_2^{e^\prime} = A_2^e \setminus V(H_{i-1})$, $M^\prime$ is also an induced matching of $\indG{A_1^{e^\prime} A_2^{e^\prime}}$ defined by the linear branch decomposition $(T^\prime_i, \bij^\prime_i)$.
		This implies that $\mimwT{T^\prime_i}{\bij^\prime_i} \ge \lmimwG{\overline{F}}+1$, which contradicts that $\mimwT{T^\prime_i}{\bij^\prime_i} \le \lmimwG{\overline{F}}$.
	\end{proof}

  \subsection{Special cases} \label{sec:NP-hardness_special}

    We strength \Cref{the:NP-hard_main} for several problems.
	We first give an improved upper bound on mim-width such that \textsc{Induced $\Prop$ Subgraph} and \textsc{$\Prop$ Vertex Deletion} are NP-hard.

	\begin{theorem}\label{the:NP-hard_mim}
		Let $\Prop$ be a fixed nontrivial hereditary graph property that admits all cliques.
        We denote by $F$ the base of $\compProp$-forbidden subgraphs.
        Then \textsc{Induced $\Prop$ Subgraph} and \textsc{$\Prop$ Vertex Deletion}, as well as their connected variants and their dominating variants, are NP-hard for graphs with mim-width at most $\max \{2, \mimwG{\overline{F}}+1\}$, even if a branch decomposition with mim-width at most $\max \{2, \mimwG{\overline{F}}+1\}$ of an input graph is given.
	\end{theorem}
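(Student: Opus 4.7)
The plan is to reuse the reduction from the proof of \Cref{the:NP-hard_main}---namely the reduction from \textsc{Vertex Cover} on graphs of girth at least~$7$---and retain the same target instance $G = \overline{H^\prime}$. The task then reduces to constructing, in polynomial time, a (tree-based, not necessarily linear) branch decomposition of $G$ of mim-width at most $w := \max\{2, \mimwG{\overline{F}}+1\}$, thereby improving the bound $\lmimwG{\overline{F}}+2$ obtained in \Cref{lem:girth_mimw}.

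I would build the decomposition inductively along the same sequence $\mathcal{H} = \langle H^\ast = H_0, H_1, \dots, H_{n+m} = H^\prime \rangle$. The base case is that \emph{any} branch decomposition of $\overline{H^\ast}$ has mim-width at most~$2$: an induced matching of size $\geq 3$ in some cut would yield a $6$-cycle in $H^\ast$, contradicting girth~$\geq 7$. This is exactly the girth argument from \Cref{lem:girth_mimw}, and it works for arbitrary subcubic trees, not only for caterpillars.

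The key step is the inductive attachment of the gadget $F_i$ at its primary interface $c_1$. I would work with the \emph{extended gadget} $G^+_i := H_i[V(F_i) \cup \{c_1\}]$, which is isomorphic to $F^\prime$ for $F^\prime$-attachments and to $F_{1,1}$ with $d$ removed for $F_{1,1}$-attachments; in both cases $G^+_i$ is an induced subgraph of $F$, so \Cref{prop:mimw_subgraph} yields a branch decomposition $T^+_i$ of $\overline{G^+_i}$ of mim-width at most $\mimwG{\overline{F}}$, computable in constant time since $F$ is fixed. Removing the leaf of $T^+_i$ that corresponds to $c_1$ and suppressing the resulting degree-$2$ node produces $\tilde{T}^+_i$, a branch decomposition of $\overline{F_i}$. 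Then $\tilde{T}^+_i$ is attached to $T_{i-1}$ by subdividing the edge of $T_{i-1}$ incident to $\ell_{c_1}$ with a new internal node $r$ and joining $r$ to the root of $\tilde{T}^+_i$. This construction is clearly polynomial in the size of the input.

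The main obstacle is the mim-width analysis of $T_i$. The crucial case is a cut within $\tilde{T}^+_i$: partition an induced matching $M$ by the type of its $A_2^e$-endpoint, namely, (a) in $V(F_i) \setminus A_1^e$, (b) equal to $c_1$, (c) equal to $d$ (for $F_{1,1}$-attachments only), or (d) in $V(H_{i-1}) \setminus \{c_1, d\}$. The central observation is that the (a)- and (b)-edges together form an induced matching of a tree cut of $T^+_i$ in $\overline{G^+_i}$ and so number at most $\mimwG{\overline{F}}$; the (c)-edges number at most one because $d$ is a single vertex; and the (d)-edges are mutually exclusive with the other three types because a non-interface vertex of $V(H_{i-1})$ is adjacent in $\overline{H_i}$ to every vertex of $V(F_i)$. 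This yields $|M| \leq \mimwG{\overline{F}} + 1 \leq w$. The remaining cuts (those on the new edges around $r$, and those arising from edges of $T_{i-1}$ with $V(F_i)$ attached on one side) are handled by analogous case analyses based on the same interface-adjacency structure together with the inductive bound on $T_{i-1}$, and in each case they yield $|M| \leq w$. The improvement from ``$+2$'' to ``$+1$'' over \Cref{lem:girth_mimw} is exactly the savings obtained by absorbing the interface vertex $c_1$ into the gadget's subtree.
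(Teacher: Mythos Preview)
Your approach is correct and reaches the same bound, but it differs from the paper's proof in two notable ways. First, the paper \emph{grafts} the gadget subtree at the root of $T_{i-1}$, so that for every cut inside $T_{i-1}$ the new gadget vertices lie entirely on the far side; this lets Cases~\ONE$^\prime$ and~\TWO$^\prime$ be dispatched verbatim by the argument of \Cref{lem:girth_mimw}. You instead attach the gadget near $\ell_{c_1}$, which forces $V(F_i)$ onto the same side as $c_1$ for cuts inside $T_{i-1}$; this requires a separate (though easy) argument for those cuts, which you only sketch. Second, and more interestingly, for the edge-gadget case the paper keeps \emph{both} interface vertices $u,w$ in $F_i^+ \cong F_{1,1}$, removes both leaves, and then---to obtain the $+1$ bound---analyzes an auxiliary decomposition $T^\ast_i$ of $\overline{F_i^+}$ obtained by relocating $\ell_u$ next to $\ell_w$ (moving one leaf raises mim-width by at most~$1$, so $\mimwT{T^\ast_i}{\cdot}\le \mimwG{\overline{F}}+1$, and every relevant matching already lives in a cut of $T^\ast_i$). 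Your route instead keeps only $c_1$ in $G^+_i \cong F_{1,1}-d$, bounds the (a)+(b) edges by $\mimwG{\overline{F}}$ directly via $T^+_i$, and treats $d$ as an explicit $+1$. Both ideas land on the same bound; yours is arguably more elementary since it avoids the $T^\ast_i$ relocation trick, while the paper's keeps the $T_{i-1}$-cut analysis identical to the earlier lemma.

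One small point to fix: ``suppressing the resulting degree-$2$ node'' and then referring to ``the root of $\tilde{T}^+_i$'' is inconsistent---after suppression that node no longer exists. You should simply \emph{not} suppress $p_{c_1}$ and use it as the attachment point. This matters for your (a)+(b) bound: you need that for every cut $e$ inside the gadget subtree the corresponding cut of $T^+_i$ places $c_1$ on the $A_2$ side, and that is exactly what rooting at $p_{c_1}$ guarantees.
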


	\begin{proof}
        We perform the same reduction as the proof of \Cref{the:NP-hard_main}.
        Recall that $\mathcal{H} = \langle H^\ast = H_0, H_1, \ldots, H_{n+m}=H^\prime \rangle$ is the sequence of subgraphs of $H^\prime$ such that $H_{i}$ for $i \in \segsingle{n}$ is obtained from $H_{i-1}$ by attaching a copy of $F^\prime$ to $v_i$, and $H_i$ for $i \in \seg{n+1}{n+m}$ is obtained from $H_{i-1}$ by attaching a copy of $F_{1,1}$ to $e_{i-n}$.
		We modify the proof of \Cref{lem:girth_mimw} and show that $\mimwG{\overline{H_i}} \le \max \{2, \mimwG{\overline{F}}+1\}$ for any $H_i \in \mathcal{H}$ by induction on $i$.

		First, instead of the concatenation of two linear branch decompositions, we define a \emph{graft} of two rooted layouts.
		Let $G_1$ and $G_2$ be vertex-disjoint induced subgraphs of a graph $G$ such that $V(G_1) \cup V(G_2) = V(G)$, and let $(T_1,\bij_1)$ and $(T_2,\bij_2)$ be rooted layouts of $G_1$ and $G_2$, where $T_1$ and $T_2$ have the roots $r_1$ and $r_2$, respectively.
		A graft of $(T_1,\bij_1)$ and $(T_2,\bij_2)$ is to construct a new rooted layout $(T,\bij)$ of $G$ as follows.
		A tree $T$ is obtained by connecting $r_1$ of $T_1$ and $r_2$ of $T_2$ with an edge $e$, and then inserting the root $r$ of $T$ into $e$.
		We set a bijection $\bij$ from $V(G)$ to the leaves of $T$ such that $\bij(v) = \bij_1(v)$ if $v \in V(G_1)$ and $\bij(v) = \bij_2(v)$ if $v \in V(G_2)$.

		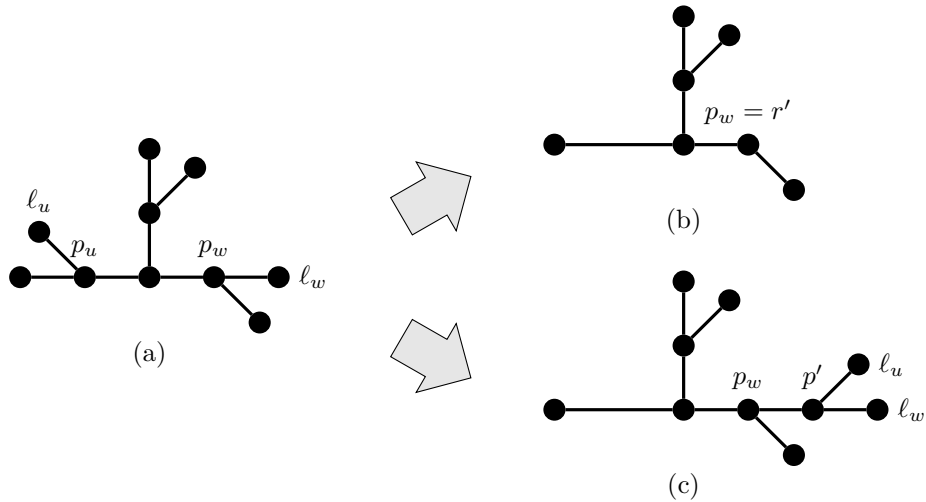
\begin{figure}[t]
			\centering
			\begin{tikzpicture}
				\def\edgew{0.85}
                \def\nodew{2.5}

				\node[draw, circle, very thick, label=90:{$p_u$}, fill = black, inner sep=\nodew] at (0,0) (t1){};
				\node[draw, circle, very thick, fill = black, inner sep=\nodew] at (0:\edgew) (t2){};
				\node[draw, circle, very thick, label=90:{$\ell_u$}, fill = black, inner sep=\nodew] at (135:\edgew) (t3){};
				\node[draw, circle, very thick, fill = black, inner sep=\nodew] at (180:\edgew) (t4){};
				\node[draw, circle, very thick, fill = black, inner sep=\nodew] at ($(t2) + (90:\edgew)$) (t5){};
				\node[draw, circle, very thick, fill = black, inner sep=\nodew] at ($(t5) + (90:\edgew)$) (t6){};
				\node[draw, circle, very thick, fill = black, inner sep=\nodew] at ($(t5) + (45:\edgew)$) (t7){};
				\node[draw, circle, very thick, label=90:{$p_w$}, fill = black, inner sep=\nodew] at ($(t2) + (0:\edgew)$) (t8){};
				\node[draw, circle, very thick, label=0:{$\ell_w$}, fill = black, inner sep=\nodew] at ($(t8) + (0:\edgew)$) (t9){};
				\node[draw, circle, very thick, fill = black, inner sep=\nodew] at ($(t8) + (-45:\edgew)$) (t10){};

				\draw [very thick] (t1) -- (t2);
				\draw [very thick] (t3) -- (t1) -- (t4);
				\draw [very thick] (t5) -- (t2) -- (t8);
				\draw [very thick] (t7) -- (t5) -- (t6);
				\draw [very thick] (t9) -- (t8) -- (t10);

				\node[circle, very thick] at ($(t2) + (-90:1.2*\edgew)$) (a){(a)};

				\node at (4.5, 1)[single arrow,draw=black,fill=black!10,minimum height=3em, minimum width=3em, single arrow head extend=0.3em, rotate = 30] (arrow){};

				\begin{scope}[xshift=20em, yshift=5em]
					\node[draw, circle, very thick, fill = black, inner sep=\nodew] at (0:\edgew) (t2){};
					\node[draw, circle, very thick, fill = black, inner sep=\nodew] at (180:\edgew) (t4){};
					\node[draw, circle, very thick, fill = black, inner sep=\nodew] at ($(t2) + (90:\edgew)$) (t5){};
					\node[draw, circle, very thick, fill = black, inner sep=\nodew] at ($(t5) + (90:\edgew)$) (t6){};
					\node[draw, circle, very thick, fill = black, inner sep=\nodew] at ($(t5) + (45:\edgew)$) (t7){};
					\node[draw, circle, very thick, label=90:{$p_w = r^\prime$}, fill = black, inner sep=\nodew] at ($(t2) + (0:\edgew)$) (t8){};
					\node[draw, circle, very thick, fill = black, inner sep=\nodew] at ($(t8) + (-45:\edgew)$) (t10){};

					\draw [very thick] (t2) -- (t4);
					\draw [very thick] (t5) -- (t2) -- (t8);
					\draw [very thick] (t6) -- (t5) -- (t7);
					\draw [very thick] (t8) -- (t10);

					\node[circle, very thick] at ($(t2) + (-90:1.2*\edgew)$) (b){(b)};
				\end{scope}

				\node at (4.5, -1)[single arrow,draw=black,fill=black!10,minimum height=3em, minimum width=3em, single arrow head extend=0.3em, rotate = -30] (arrow){};

				\begin{scope}[xshift=20em, yshift=-5em]
					\node[draw, circle, very thick, fill = black, inner sep=\nodew] at (0:\edgew) (t2){};
					\node[draw, circle, very thick, fill = black, inner sep=\nodew] at (180:\edgew) (t4){};
					\node[draw, circle, very thick, fill = black, inner sep=\nodew] at ($(t2) + (90:\edgew)$) (t5){};
					\node[draw, circle, very thick, fill = black, inner sep=\nodew] at ($(t5) + (90:\edgew)$) (t6){};
					\node[draw, circle, very thick, fill = black, inner sep=\nodew] at ($(t5) + (45:\edgew)$) (t7){};
					\node[draw, circle, very thick, label=90:{$p_w$}, fill = black, inner sep=\nodew] at ($(t2) + (0:\edgew)$) (t8){};
					\node[draw, circle, very thick, label=90:{$p^\prime$}, fill = black, inner sep=\nodew] at ($(t8) + (0:\edgew)$) (t9){};
					\node[draw, circle, very thick, fill = black, inner sep=\nodew] at ($(t8) + (-45:\edgew)$) (t10){};
					\node[draw, circle, very thick, label=0:{$\ell_w$}, fill = black, inner sep=\nodew] at ($(t9) + (0:\edgew)$) (t11){};
					\node[draw, circle, very thick, label=0:{$\ell_u$}, fill = black, inner sep=\nodew] at ($(t9) + (45:\edgew)$) (t12){};

					\draw [very thick] (t2) -- (t4);
					\draw [very thick] (t5) -- (t2) -- (t8);
					\draw [very thick] (t6) -- (t5) -- (t7);
					\draw [very thick] (t9) -- (t8) -- (t10);
					\draw [very thick] (t11) -- (t9) -- (t12);

					\node[circle, very thick] at ($(t2) + (-90:1.2*\edgew)$) (c){(c)};
				\end{scope}
			\end{tikzpicture}
			\caption{The examples of (a) a branch decomposition $(T^+_i,\bij^+_i)$ of $\overline{F_i^+}$, (b) a rooted layout $(T^{\prime}_i,\bij^{\prime}_i)$ of $\overline{F_i}$, and (c) a branch decomposition $(T^\ast_i,\bij^\ast_i)$ of $\overline{F_i^+}$, where $i \in \seg{n+1}{n+m}$ and $p_u \neq p_w$.}
			\label{fig:graft}
		\end{figure}

        Recall that $F_i$ for $i \in \segsingle{n+m}$ is the subgraph of $H_i$ obtained by removing all vertices in $V(H_{i-1})$.
        Analogous to the proof of \Cref{lem:girth_mimw}, we will construct a rooted layout $(T_i, \bij_i)$ of $\ovl{H_i}$ by a graft of rooted layouts of $\ovl{H_{i-1}}$ and $\ovl{F_i}$.
		To obtain a better bound of mim-width, we define a suitable rooted layout $(T^{\prime}_i,\bij^{\prime}_i)$ of $\ovl{F_i}$ as follows.

		When $i \in \segsingle{n}$, let $F_i^+$ be the subgraph of $H_i$ corresponding to a copy of $F^\prime$ attached to $v_i \in V(H)$.
        In other words, $F_i^+$ is the subgraph induced by $V(F_i) \cup \{ v_i \}$.
        (Assume that $|V(F_i^+)| \ge 2$ because $|V(F_i^+)| = 1$ is a trivial case where $V(F_i) = \emptyset$.)
		Consider an arbitrary branch decomposition $(T^+_i,\bij^+_i)$ of $\overline{F_i^+}$ such that $\mimwT{T^+_i}{\bij^+_i} \le \mimwG{\overline{F}}$.
		We denote by $\ell_{v_i}$ the leaf of $T^{+}_i$ with $\bij^{+}_i(v_i) = \ell_{v_i}$.
		The rooted full binary tree $T_i^\prime$ is constructed by removing $\ell_{v_i}$ from $T^{+}_i$ and set the neighbor of $\ell_{v_i}$ as the root $r^\prime$ of $T_i^\prime$.

		When $i \in \seg{n+1}{n+m}$, let $F_i^+$ be the subgraph of $H_i$ corresponding to a copy of $F_{1,1}$ attached to $e_{i-n} = uw \in E(H)$.
        In other words, $F_i^+$ is the subgraph induced by $V(F_i) \cup \{ u,w \}$.
  	(Assume that $|V(F_i^+)| \ge 3$ for the same reason as above.)
        Consider an arbitrary branch decomposition $(T^+_i,\bij^+_i)$ of $\overline{F_i^+}$ such that $\mimwT{T^+_i}{\bij^+_i} \le \mimwG{\overline{F}}$.
		We denote by $\ell_u$ and $\ell_w$ the leaves of $T^{+}_i$ with $\bij^{+}_i(u) = \ell_u$ and $\bij^{+}_i(w) = \ell_w$, respectively.
        In addition, denote by $p_u$ and $p_w$ the nodes of $T^{+}_i$ adjacent to $\ell_u$ and $\ell_w$, respectively.
        For a vertex $v$ of a graph with degree exactly~$2$ whose neighbors are two vertices $u$ and $w$, \emph{smoothing} $v$ is the process of removing $v$ and adding a new edge $uw$.
		If $p_u \neq p_w$, then the rooted full binary tree $T^{\prime}_i$ is constructed from $T^+_i$ by removing $\ell_u$ and $\ell_w$, smoothing $p_u$, and setting $p_w$ as the root $r^{\prime}$ of $T^{\prime}_i$.
		(See Figures~\ref{fig:graft}(a) and~(b).)
		If $p_u = p_w$, remove $\ell_u$, $\ell_v$, and $p_u$, and then set the neighbor of $p_u$ in $T^{+}_i$ other than $\ell_u$ and $\ell_v$ as the root $r^{\prime}$ of $T^{\prime}_i$.

		In any case where $i \in \segsingle{n+m}$, we let $L^{\prime}_i(x) = L^{+}_i(x)$ for every $x \in V(F_i)$.
		Finally, we define $(T_i, \bij_i)$ as a rooted layout obtained by a graft of $(T_{i-1},\bij_{i-1})$ and $(T^\prime_i,\bij^\prime_i)$.

		We prove that for every graph $H_i$ in $\mathcal{H}$, $\ovl{H_i}$ has mim-width at most $\max \{2, \mimwG{\overline{F}}+1\}$.
		We have shown that $\mimwG{\overline{H_0}} \le 2$ in the proof of \Cref{lem:girth_mimw}.
		Assume for a contradiction that there is an edge $e$ of $T_i$ such that the bipartite subgraph $\indG{A_1^e, A_2^e}$ of $\ovl{H_i}$ has an induced matching $M$ of size $\max \{3, \mimwG{\overline{F}}+2\}$.
		Then, we consider three cases: \ONE$^\prime$ $A_1^e \subseteq V(H_{i-1})$ and $V(F_i) \subseteq A_2^e$; \TWO$^\prime$ $A_1^e \subseteq V(F_i)$, $V(H_{i-1}) \subseteq A_2^e$ and $i \in \segsingle{n}$; and \THREE$^\prime$ $A_1^e \subseteq V(F_i)$, $V(H_{i-1}) \subseteq A_2^e$ and $i \in \seg{n+1}{n+m}$.
        In the same argument as the proof of \Cref{lem:girth_mimw}, contradictions can be derived for Cases~\ONE$^\prime$ and \TWO$^\prime$.
		We here discuss Case~\THREE$^\prime$.

		\smallskip
		\noindent \textbf{Case~\THREE$^\prime$:}
		Recall that at most two vertices in $V(H_{i-1})$, say $u$ and $w$, are adjacent to some vertex in $V(F_i)$ on $H_i$.
        As in Case~\TWO\ in the proof of \Cref{lem:girth_mimw}, we may assume that $V(M) \cap A_2^e \subseteq V(F_i) \cup \{u,w\}$.

		For the sake of contradiction, consider a branch decomposition $(T^\ast_i,\bij^\ast_i)$ of $\overline{F_i^+}$ constructed from $(T^+_i,\bij^+_i)$ as follows.
		Recall that $\ell_u$ and $\ell_w$ are the leaves of $T^{+}_i$ with $\bij^{+}_i(u) = \ell_u$ and $\bij^{+}_i(w) = \ell_w$, and $p_u$ and $p_w$ are the nodes of $T^{+}_i$ adjacent to $\ell_u$ and $\ell_w$, respectively.
		If $p_u \neq p_w$, then remove the edge $p_u \ell_u$ from $T^{+}_i$, smooth $p_u$, insert a new node $p^\prime$ into the edge $p_w \ell_w$, and then add an edge $p^\prime\ell_u$.
		(See Figures~\ref{fig:graft}(a) and~(c).)
		If $p_u = p_w$, then define $T^\ast_i = T^{+}_i$.
		This completes the construction of $T^\ast_i$.
		For each $x \in V(F_i^+)$, we let $\bij^\ast_i(x) = \bij^+_i(x)$.
		Observe that $T^\ast_i$ and $T^+_i$ differ in a location of at most one leaf.
		Thus, we have $\mimwT{T^\ast_i}{\bij^\ast_i} \le \mimwT{T^+_i}{\bij^+_i}+1 \le \mimwG{\overline{F}}+1$.
		Moreover, for any edge $e^\prime$ of $T^{\prime}_i$, there is an edge $e^\ast$ of $T^\ast_i$ such that $A_1^{e^\ast} = A_1^{e^\prime}$ and $A_2^{e^\ast}= A_2^{e^\prime} \cup \{u,w\}$.
		(See Figures~\ref{fig:graft}(b) and~(c).)

		Recall that $e$ is the edge of $T_i$ such that $A_1^e \subseteq V(F_i)$ and $V(H_{i-1}) \subseteq A_2^e$, which implies that $e$ is in the subtree $T_i^\prime$ of $T_i$.
		Thus, there is an edge $e^\prime$ of $T^\prime_i$ corresponding to $e$ such that $A_1^{e^\prime} = A_1^{e}$ and $A_2^{e^\prime}= A_2^{e} \cap V(F_i)$.
		Combined with $A_1^{e^\ast} = A_1^{e^\prime}$ and $A_2^{e^\ast}= A_2^{e^\prime} \cup \{u,w\}$, we have $A_1^{e^\ast} = A_1^e$ and $A_2^{e^\ast} = (A_2^{e} \cap V(F_i)) \cup \{u,w\}$.
		In addition, from the assumptions, $\indG{A_1^e, A_2^e}$ has an induced matching $M$ such that $V(M) \cap A_2^e \subseteq V(F_i) \cup \{u,w\}$ and $|M| \ge \mimwG{\overline{F}}+2 $.
		Therefore, $M$ is also an induced matching of $\indG{A_1^{e^\ast}, A_2^{e^\ast}}$ defined by the branch decomposition $(T^\ast_i, \bij^\ast_i)$, which contradicts that $\mimwT{T^\ast_i}{\bij^\ast_i} \le \mimwG{\overline{F}}+1$.
	\end{proof}

    As a consequence of \Cref{the:NP-hard_mim}, we have the following theorem.

	\begin{theorem}\label{the:NP-hard_mimtwo_H-free}
		Let $H$ be any fixed graph such that $\mimwG{H} \le 1$ and $H$ is not a complete graph.
		Then, \textsc{Induced $H$-free Subgraph} and \textsc{$H$-free Vertex Deletion}, as well as their connected variants and their dominating variants, are NP-hard for graphs with mim-width at most $2$, even if a branch decomposition with mim-width at most $2$ of an input graph is given.
	\end{theorem}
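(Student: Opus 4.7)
The plan is to derive the theorem as a specialisation of \Cref{the:NP-hard_mim} to the property $\Prop$ of $H$-freeness. The preconditions of \Cref{the:NP-hard_mim} are easy to verify: $H$-freeness is hereditary and nontrivial, and it admits all cliques exactly because $H$ is assumed not to be complete, so $H$ cannot be found as an induced subgraph of any $K_n$. With these in hand, the theorem reduces to the single estimate $\mimwG{\overline{F}} \le 1$, where $F$ is the base of $\compProp$-forbidden subgraphs; this would make the bound $\max\{2, \mimwG{\overline{F}}+1\}$ from \Cref{the:NP-hard_mim} collapse to $2$.

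Next I would unpack $\compProp$. Since complementation is an involution on graphs that sends induced subgraphs to induced subgraphs of the complements, the class $\compProp = \{\overline{G} : G \in \Prop\}$ coincides with the class of $\overline{H}$-free graphs. Consequently, a graph violates $\compProp$ precisely when it contains $\overline{H}$ as induced subgraph, and the two conditions characterising the base $F$ become: (i) some disjoint union $\ell F$ contains $\overline{H}$ as induced subgraph, while $(\ell-1)F$ does not; and (ii) no graph $F'$ with $\beta_{F'} <_R \beta_F$ admits any $\ell'F'$ containing $\overline{H}$.

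The core of the argument, and the step I expect to be most delicate, is to show that the base $F$ can be chosen as an induced subgraph of $\overline{H}$. The natural candidate is a disjoint union of selected connected components of $\overline{H}$, taking one representative from each isomorphism class that is maximal under induced-subgraph containment among the components of $\overline{H}$. Condition~(i) is immediate: for $\ell$ equal to the maximum multiplicity of any component type in $\overline{H}$, the graph $\ell F$ realises $\overline{H}$ as a disjoint induced subgraph. For condition~(ii), I would argue that any induced embedding $\overline{H} \hookrightarrow \ell'F'$ forces each connected component of $\overline{H}$ to sit inside a single connected component of some copy of $F'$; tracking the $\alpha$-sequences of those components and invoking minimality of $F$ among induced subgraphs of $\overline{H}$ (handled by induction along the well-founded ordering $<_R$) should yield the desired lower bound $\beta_{F'} \ge_R \beta_F$.

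Once the structural claim is established, the remaining bookkeeping is routine. Since $F$ is an induced subgraph of $\overline{H}$, \Cref{prop:mimw_subgraph} gives $\mimwG{F} \le \mimwG{\overline{H}}$, while the hypothesis $\mimwG{H} \le 1$ together with \Cref{prop:mimone_comp} yields $\mimwG{\overline{H}} \le 1$, so $\mimwG{F} \le 1$. A second application of \Cref{prop:mimone_comp} then delivers $\mimwG{\overline{F}} \le 1$, at which point \Cref{the:NP-hard_mim} yields NP-hardness for all four problem variants on graphs of mim-width at most $2$, together with a branch decomposition of that width.
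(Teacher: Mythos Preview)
Your overall strategy coincides with the paper's: reduce the statement to the estimate $\mimwG{\overline{F}}\le 1$ by showing that the base $F$ of $\compProp$-forbidden subgraphs is an induced subgraph of $\overline{H}$, and then chain \Cref{prop:mimw_subgraph} and \Cref{prop:mimone_comp} with \Cref{the:NP-hard_mim}. The identification of $\compProp$ with the class of $\overline{H}$-free graphs and the final bookkeeping are exactly as in the paper.

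The difference lies in how you establish that $F$ is induced in $\overline{H}$. You attempt a constructive route: pick a candidate $F$ built from the maximal components of $\overline{H}$ and then verify conditions~(i) and~(ii). The paper instead argues in two lines by contradiction: if $F$ were not an induced subgraph of $\overline{H}$, then since $\overline{H}\subseteq_{\mathrm{ind}}\ell F$ but $\ell F\neq\overline{H}$, one gets $\beta_{\overline{H}}<_R\beta_F$; yet $1\cdot\overline{H}$ already violates $\compProp$, contradicting condition~(ii) of the base. This bypasses the need to pin down $F$ explicitly or to track $\alpha$-sequences through an embedding.

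Your constructive approach is not wrong in spirit, but the verification of condition~(ii) that you sketch (``tracking the $\alpha$-sequences\dots should yield $\beta_{F'}\ge_R\beta_F$'') hides a genuine lemma: you would need that whenever a connected graph $D$ contains a connected graph $E$ as an induced subgraph, then $\alpha_D\ge_L\alpha_E$. This is plausible but not immediate from the definitions, and establishing it is more work than the paper's one-line contradiction. Since the base is not unique and any base suffices for \Cref{the:NP-hard_mim}, the simpler observation that $\overline{H}$ itself satisfies condition~(i) with $\ell=1$ already pins down the minimality you need.
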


	\begin{proof}
		By \Cref{prop:mimone_comp}, $\mimwG{\ovl{H}} \le 1$ holds.
        Let $\compProp$ be a collection of $\ovl{H}$-free graphs.
		We show that the base $F$ of $\compProp$-forbidden subgraphs is chosen from induced subgraphs of $\ovl{H}$, which immediately implies \Cref{the:NP-hard_mimtwo_H-free} by combining \Cref{prop:mimw_subgraph,prop:mimone_comp} and \Cref{the:NP-hard_mim}.
		Assume for a contradiction that $F$ is not an induced subgraph of $\ovl{H}$.
		From the definition of $F$, there is an integer $\ell \ge 1$ such that $\ell F$ contains $\ovl{H}$ as an induced subgraph.
        If $\ell F = \ovl{H}$, then it immediately contradicts the assumption.
		Thus, $\ovl{H}$ is a proper induced subgraph of $F$.
        However, we have $\beta_{\ovl{H}}<_R \beta_F$, which contradicts the definition of $F$.
	\end{proof}

	\begin{figure}[t]
		\centering
		\begin{tabular}{ccc}

			\begin{minipage}[t]{0.3\linewidth}
				\centering
				\begin{tikzpicture}
                    \def\edgew{1}
                    \def\nodew{2.5}
					\node[draw, circle, very thick, label=180:{$v_2$}, fill = black, inner sep=\nodew] at (0,0) (v2){};
					\node[draw, circle, very thick, label=90:{$v_3$}, fill = black, inner sep=\nodew] at ($(v2) + (45:\edgew)$) (v3){};
					\node[draw, circle, very thick, label=-90:{$v_4$}, fill = black, inner sep=\nodew] at ($(v2) + (-45:\edgew)$) (v4){};
					\node[draw, circle, very thick, label=90:{$v_1$}, fill = black, inner sep=\nodew] at ($(v4) + (45:\edgew)$) (v1){};
					\node[draw, circle, very thick, label=90:{$v_5$}, fill = black, inner sep=\nodew] at ($(v1) + (45:\edgew)$) (v5){};
                    \node[draw, circle, very thick, label=-90:{$v_6$}, fill = black, inner sep=\nodew] at ($(v1) + (-45:\edgew)$) (v6){};

					\draw [very thick] (v2) -- (v3) -- (v1) -- (v2) -- (v4) -- (v1) -- (v5) -- (v6) -- (v1);
				\end{tikzpicture}

			\end{minipage} &




			\begin{minipage}[t]{0.3\linewidth}
				\centering
				\begin{tikzpicture}

                    \def\edgew{1}
                    \def\nodew{2.5}
					\node[draw, circle, very thick, label=180:{$v_2$}, fill = black, inner sep=\nodew] at (0,0) (v2){};
					\node[draw, circle, very thick, label=90:{$v_3$}, fill = black, inner sep=\nodew] at ($(v2) + (45:\edgew)$) (v3){};
					\node[draw, circle, very thick, label=-90:{$v_4$}, fill = black, inner sep=\nodew] at ($(v2) + (-45:\edgew)$) (v4){};
					\node[draw, circle, very thick, label=90:{$v_1$}, fill = black, inner sep=\nodew] at ($(v4) + (45:\edgew)$) (v1){};

					\draw [very thick] (v3) -- (v4);

				\end{tikzpicture}
			\end{minipage} &




			\begin{minipage}[t]{0.3\linewidth}
				\centering
				\begin{tikzpicture}
                    \def\edgew{0.75}
                    \def\nodew{2.5}
					\node[draw, circle, very thick, label=-90:{$v_1$}, fill = black, inner sep=\nodew] at (0,0) (v1){};
					\node[draw, circle, very thick, fill = black, inner sep=\nodew] at ($(v1) + (0:\edgew)$) (w2){};
					\node[draw, circle, very thick, fill = black, inner sep=\nodew] at ($(w2) + (0:\edgew)$) (w3){};
                    \node[draw, circle, very thick, label=-90:{$v_4$}, fill = black, inner sep=\nodew] at ($(w3) + (0:\edgew)$) (v4){};

                        \foreach \x in {2,3} {
                            \node[draw, circle, very thick, label=-90:{$v_\x$}, fill = black, inner sep=\nodew] at ($(w\x) + (-90:\edgew)$) (v\x){ };
                            \draw [very thick] (w\x) -- (v\x);
                        }

                        \draw [very thick] (v1) -- (w2) -- (w3) -- (v4);

				\end{tikzpicture}
			\end{minipage}
			\\
			(a) & (b) & (c)
		\end{tabular}
		\caption{ Let $\Prop$ be a collection of polar graphs. (a) The base $F$ of $\compProp$-forbidden subgraphs, where $F_{1,1}$ is the graph induced by $\{v_1,v_2,v_3,v_4\}$, $c_1 = v_1$, and $d = v_2$; (b) the complement graph $\ovl{F_{1,1}}$; and (c) the good linear branch decomposition of $\ovl{F_{1,1}}$.}
		\label{fig:base_polar}
	\end{figure}
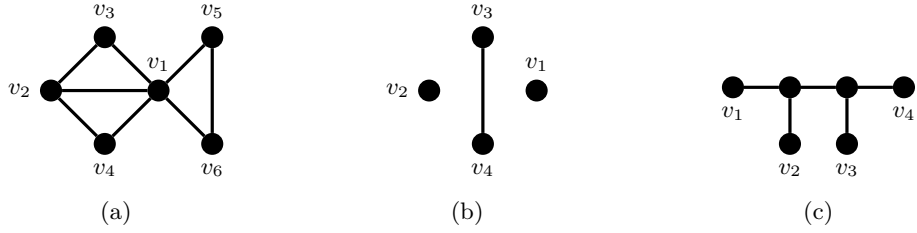

	Next, we give a stronger bound on linear mim-width.
    For the base $F$ of $\compProp$-forbidden subgraphs, recall that $c_1$ is the cut vertex of $F_1$ that realizes $\alpha_1$ and $d \in \Nei{F_{1,1}}{c_1}$ (see also \Cref{fig:forbidden_subgraph}).
    For a linear branch decomposition $(T,\bij)$ and a vertex $v$ of $\ovl{F_{1,1}}$, we denote by $p_v$ the adjacent node of $\bij(v)$.
    A linear branch decomposition $(T,\bij)$ of $\ovl{F_{1,1}}$ is called \emph{good} if $\mimwT{T}{\bij} \le \lmimwG{\ovl{F}}$ and $p_{c_1} = p_d$ (that is, one of them is an endpoint of the spine of $T$).

	\begin{theorem}\label{the:NP-hard_lmim}
		Let $\Prop$ be a fixed nontrivial hereditary graph property that admits all cliques.
        We denote by $F$ the base of $\compProp$-forbidden subgraphs.
        If there exists a good linear branch decomposition $(T,\bij)$ of $\ovl{F_{1,1}}$, then \textsc{Induced $\Prop$ Subgraph} and \textsc{$\Prop$ Vertex Deletion}, as well as their connected variants and their dominating variants, are NP-hard for graphs with linear mim-width at most $\max \{2, \lmimwG{\overline{F}}\}$, even if a branch decomposition with mim-width at most $\max \{2, \lmimwG{\overline{F}}\}$ of an input graph is given.
	\end{theorem}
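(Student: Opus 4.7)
The plan is to repeat the reduction of \Cref{the:NP-hard_main}, producing $G = \overline{H^\prime}$ from an instance of \textsc{Vertex Cover} on graphs with girth at least~$7$, and to refine the inductive construction of a branch decomposition so that it is linear and its mim-width does not exceed $\max\{2, \lmimwG{\ovl{F}}\}$. The correctness of the reduction is unchanged; only the width bookkeeping needs revision.

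I would build a linear branch decomposition $(T_i, \bij_i)$ of $\overline{H_i}$ by induction along $\mathcal{H}$. The base case $\overline{H_0} = \overline{H^\ast}$ admits any linear branch decomposition of mim-width at most~$2$ by the girth argument in the proof of \Cref{lem:girth_mimw}. For the inductive step, $(T_i, \bij_i)$ is formed by concatenating $(T_{i-1}, \bij_{i-1})$ at a spine endpoint with a linear branch decomposition $(T^\prime_i, \bij^\prime_i)$ of $\overline{F_i}$ obtained by pruning a decomposition of $\overline{F_i^+}$ at the interface leaves. For an edge attachment ($i \in \seg{n+1}{n+m}$), I would use the given good linear branch decomposition $(T, \bij)$ of $\overline{F_{1,1}} = \overline{F_i^+}$ directly: the condition $p_{c_1} = p_d$ forces this common parent to be a spine endpoint, so deleting the leaves $\bij(c_1), \bij(d)$ and smoothing yields a caterpillar whose every cut, combined with the concatenation interface, corresponds exactly to a cut of $T$ in which $c_1$ and $d$ (identified with $u$ and $w$) lie on the side that becomes the $V(H_{i-1})$-side. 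For a single-vertex attachment ($i \in \segsingle{n}$), I would splice the good decomposition of $\overline{F_{1,1}}$ with optimal linear decompositions of the remaining components of $\overline{F}$, keeping $c_1$ at the spine endpoint, to obtain a linear branch decomposition of $\overline{F^\prime}$ with $c_1$ at an endpoint and mim-width at most $\lmimwG{\ovl{F}}$.

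To bound $\mimwT{T_i}{\bij_i}$, I would perform the same case split as in \Cref{the:NP-hard_mim}. Cuts separating $V(H_{i-1})$ from $V(F_i)$ are controlled by the girth-$7$ argument of Case~\ONE\ of \Cref{lem:girth_mimw}, and cuts lying entirely inside $T_{i-1}$ reduce to the inductive hypothesis. The critical case is $A_1^e \subseteq V(F_i)$. The new observation, which removes the ``$+1$'' incurred in the rooted-layout bound, is that every $x \in V(F_i)$ is adjacent in $\ovl{H_i}$ to every vertex of $V(H_{i-1})\setminus\{u,w\}$ (respectively $\setminus\{v_i\}$), because such a pair is non-adjacent in $H_i$. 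Consequently, if any induced matching edge of $M$ has its $A_2^e$-endpoint in $V(H_{i-1}) \setminus \{u, w\}$, then any second matching edge $x^\prime y^\prime$ with $x^\prime \in V(F_i)$ would create a spurious edge in the induced subgraph, forcing $|M| = 1$; otherwise every matching edge has its $A_2^e$-endpoint in $V(F_i^+)$, so $M$ embeds into a cut of $\overline{F_i^+}$ of induced matching at most $\lmimwG{\ovl{F}}$ by the construction of $(T^\prime_i, \bij^\prime_i)$. Either way $|M| \le \lmimwG{\ovl{F}}$, yielding $\mimwT{T_i}{\bij_i} \le \max\{2, \lmimwG{\ovl{F}}\}$.

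The main obstacle is constructing the good-style decomposition of $\overline{F^\prime}$ in the single-vertex step: one must combine a truncated form of the good decomposition of $\overline{F_{1,1}}$ with decompositions of the complements of the other components of $F$, ensuring $c_1$ remains at the spine endpoint and no cut across the splicing boundary inflates the mim-width beyond $\lmimwG{\ovl{F}}$. The disjoint-union structure of $F^\prime$ and the fact that $c_1$ sits at the endpoint of the good decomposition of $\overline{F_{1,1}}$ make this feasible, but the verification of the boundary cuts is the most delicate bookkeeping in the proof.
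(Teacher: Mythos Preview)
Your overall strategy matches the paper: reuse the reduction, build $(T_i,\bij_i)$ by concatenation along $\mathcal{H}$, and use the good decomposition of $\overline{F_{1,1}}$ at each edge attachment so that every cut inside $T'_i$ corresponds to a cut of $(T^+_i,\bij^+_i)$ with both interface vertices already on the $A_2$ side, eliminating the ``$+1$'' of \Cref{the:NP-hard_mim}. Your handling of Case~\THREE$'$ is essentially the paper's, and your remark that a single matching edge with $A_2^e$-endpoint in $V(H_{i-1})\setminus\{u,w\}$ already forces $|M|\le 1$ is even a little cleaner than what is written there.

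The gap is the vertex-attachment step, $i\in\segsingle{n}$. You propose to build a decomposition of $\overline{F'}$ by ``splicing the good decomposition of $\overline{F_{1,1}}$ with optimal linear decompositions of the remaining components of $\overline{F}$''. But $F' = F - (V(F_{1,1})\setminus\{c_1\})$, so $F_{1,1}$ is precisely what has been \emph{removed}; it shares only the vertex $c_1$ with $F'$, and the good decomposition of $\overline{F_{1,1}}$ contributes nothing to a decomposition of $\overline{F'}$. Moreover $\overline{F}$, being the complement of a typically disconnected graph, is connected, so ``remaining components of $\overline{F}$'' is not a meaningful object, and concatenating decompositions of complements of pieces does not bound the mim-width of the complement of their union. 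The paper's route is much simpler: since $\overline{F'}=\overline{F_i^+}$ is an induced subgraph of $\overline{F}$, \Cref{prop:mimw_subgraph} directly gives an \emph{arbitrary} linear decomposition $(T^+_i,\bij^+_i)$ with $\mimwT{T^+_i}{\bij^+_i}\le\lmimwG{\overline{F}}$; one then deletes $\ell_{v_i}$, smooths its parent, and concatenates $T_{i-1}$ at that spot. The analysis becomes exactly your Case~\THREE$'$ argument with a single interface vertex instead of two, and what you flagged as ``the most delicate bookkeeping'' simply disappears.
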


	\begin{proof}
        This theorem is proved by modifying the proof of \Cref{the:NP-hard_mim}.
        The definition of the graph $H_i$ for $i \in \seg{0}{n+m}$ is the same.
        We modify the definitions of $(T_i, \bij_i)$, $(T^+_i,\bij^+_i)$, and $(T^\prime_i,\bij^\prime_i)$.
        Especially, we redefine them as linear branch decompositions.
        Let $(T_0, \bij_0)$ be an arbitrary linear branch decomposition of $\ovl{H_0}$.
        For $i \in \segsingle{n+m}$, we recursively construct a linear branch decomposition $(T_i, \bij_i)$ of $\ovl{H_i}$ as follows.

        When $i \in \segsingle{n}$, recall that $F_i^+$ is the subgraph of $H_i$ corresponding to a copy of $F^\prime$ attached to $v_i \in V(H)$.
        We define $(T^+_i,\bij^+_i)$ as an arbitrary linear branch decomposition of $\ovl{F_i^+}$ such that $\mimwT{T^+_i}{\bij^+_i} \le \lmimwG{\overline{F}}$.
        The subcubic tree $T_i^\prime$ is constructed by removing $\ell_{v_i}$ from $T^{+}_i$ and smoothing the neighbor $p_{v_i}$ of $\ell_{v_i}$.
        We denote by $t$ the node of $T_i^\prime$ that was adjacent to $p_{v_i}$ in $T^{+}_i$.

        When $i \in \seg{n+1}{n+m}$, recall that $F_i^+$ is the subgraph of $H_i$ corresponding to a copy of $F_{1,1}$ attached to $e_{i-n} = uw \in E(H)$, where $u$ and $v$ corresponding to $c_1$ and $d$ of $F_{1,1}$, respectively.
        Consider a good branch decomposition $(T^+_i,\bij^+_i)$ of $\ovl{F_i^+}$ such that $\mimwT{T^+_i}{\bij^+_i} \le \lmimwG{\ovl{F}}$.
        The subcubic tree $T_i^\prime$ is constructed from $T^+_i$ by removing $\ell_u$, $\ell_v$, and $p_u$, and then smoothing the neighbor $p$ of $p_u$ in $T^{+}_i$ other than $\ell_u$ and $\ell_v$.
        We denote by $t$ the node of $T_i^\prime$ that was adjacent to $p$ in $T^{+}_i$.

		In any case where $i \in \segsingle{n+m}$, we let $L^{\prime}_i(x) = L^{+}_i(x)$ for every $x \in V(F_i)$.
		Finally, we define $(T_i, \bij_i)$ as a linear decomposition obtained by a concatenation of $(T_{i-1},\bij_{i-1})$ and $(T^\prime_i,\bij^\prime_i)$, where the edge of $T^\prime_i$ incident to $t$ is operated for the concatenation.

        We prove that for every graph $H_i$ in $\mathcal{H}$, the complement $\ovl{H_i}$ has mim-width at most $\max \{2, \lmimwG{\ovl{F}}\}$.
        As in the proof of \Cref{the:NP-hard_mim}, assume for a contradiction that there is an edge $e$ of $T_i$ such that the bipartite subgraph $\indG{A_1^e, A_2^e}$ of $\ovl{H_i}$ has an induced matching $M$ of size $\max \{3, \lmimwG{\overline{F}}+1\}$.
        We again consider the three cases \ONE$^\prime$, \TWO$^\prime$, and \THREE$^\prime$ listed in the proof of \Cref{the:NP-hard_mim}, but we omit Cases~\ONE$^\prime$ and \TWO$^\prime$ because they can be shown in the same argument as the proof of \Cref{lem:girth_mimw}.

        \smallskip
		\noindent \textbf{Case~\THREE$^\prime$:} $A_1^e \subseteq V(F_i)$, $V(H_{i-1}) \subseteq A_2^e$ and $i \in \seg{n+1}{n+m}$.
		As in the proof of \Cref{the:NP-hard_mim}, we may assume that $V(M) \cap A_2^e \subseteq (A_2^e \cap V(F_i)) \cup \{u,w\}$.
		Observe that there is an edge $e^\prime$ of $T^\prime_i$ corresponding to $e$ such that $A_1^{e^\prime} = A_1^{e}$ and $A_2^{e^\prime}= A_2^{e} \cap V(F_i)$.
		Moreover, since $(T^\prime_i,\bij^\prime_i)$ is obtained from the good branch decomposition $(T^+_i,\bij^+_i)$, there is an edge $e^+$ of $T^+_i$ corresponding to $e^\prime$ such that $A_1^{e^+} = A_1^{e^\prime}$ and $A_2^{e^+}= A_2^{e^\prime} \cup \{u,v\}$, that is, $A_1^{e^+} = A_1^{e}$ and $A_2^{e^+}= (A_2^{e} \cap V(F_i)) \cup \{u,v\}$.
        This implies that $V(M) \cap A_1^e = V(M) \cap A_1^{e^+}$  and $V(M) \cap A_2^e \subseteq V(M) \cap A_2^{e^+}$.
		Therefore, $M$ is also an induced matching of $\indG{A_1^{e^+}, A_2^{e^+}}$ defined by the branch decomposition $(T^+_i, \bij^+_i)$, which contradicts that $\mimwT{T^+_i}{\bij^+_i} \le \lmimwG{\ovl{F}}$.
	\end{proof}

    We give two applications of \Cref{the:NP-hard_lmim}.
    A graph $G=(V,E)$ is a \emph{polar graph} if $V$ can be partitioned into two vertex sets $S_1$ and $S_2$ such that $\indG{S_1}$ is a cluster graph and $\indG{S_2}$ is a co-cluster graph, that is, the complement of a cluster graph.

    \begin{theorem} \label{the:NP-hard_mimtwo_polar}
		\textsc{Induced Polar Subgraph} and \textsc{Polar Vertex Deletion}, as well as their connected variants and their dominating variants, are NP-hard for graphs with linear mim-width at most~$2$, even if a branch decomposition with mim-width at most $2$ of an input graph is given.
	\end{theorem}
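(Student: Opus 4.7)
The plan is to apply \Cref{the:NP-hard_lmim} with $\Prop$ taken to be the class of polar graphs. First, I would verify that $\Prop$ is a nontrivial hereditary graph property that admits all cliques: heredity is immediate since any polar partition $(S_1, S_2)$ restricts to every induced subgraph; nontriviality is witnessed by long induced cycles on one side and cliques on the other; every clique is polar with $S_1 = V$ (a single cluster) and $S_2 = \emptyset$. Moreover, taking the complement of a graph swaps cluster graphs and co-cluster graphs, so $\Prop$ is closed under complementation and $\compProp = \Prop$.

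The crux is to show that the $6$-vertex graph $F$ depicted in \Cref{fig:base_polar}(a), with cut vertex $c_1 = v_1$, subgraph $F_{1,1}$ induced by $\{v_1, v_2, v_3, v_4\}$, and $d = v_2 \in \Nei{F_{1,1}}{c_1}$, is the base of $\compProp$-forbidden subgraphs. This requires two arguments: (i) there exists $\ell \ge 1$ for which $\ell F$ is not polar, proved by a case analysis on candidate partitions $(S_1,S_2)$ of $V(\ell F)$ in which the $K_4 - e$ substructure on $\{v_1, v_2, v_3, v_4\}$ combined with the pendant $K_2$ on $\{v_5, v_6\}$ obstructs every such partition once sufficiently many copies are present; and (ii) every graph $F^\prime$ with $\beta_{F^\prime} <_R \beta_F = \langle \langle 4, 3 \rangle \rangle$ satisfies $\ell^\prime F^\prime \in \Prop$ for every $\ell^\prime \ge 1$, shown by a structural case analysis over connected graphs whose cut-vertex signature is lexicographically smaller than $\langle 4, 3 \rangle$. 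Step (ii) is the main obstacle, since it demands a uniform construction of polar partitions for arbitrary disjoint unions of all such simpler graphs.

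Once $F$ is pinned down, the remaining verifications are short. The complement $\ovl{F_{1,1}}$ consists of the single edge $v_3v_4$ together with two isolated vertices $v_1, v_2$, as shown in \Cref{fig:base_polar}(b). In the linear branch decomposition of \Cref{fig:base_polar}(c), the leaves $\bij(c_1) = \bij(v_1)$ and $\bij(d) = \bij(v_2)$ share a common spine neighbor, so $p_{c_1} = p_d$, and a direct inspection of the prefix cuts shows mim-width at most~$1$. To bound $\lmimwG{\ovl{F}}$, I would fix the linear ordering $v_5, v_6, v_2, v_3, v_4, v_1$ and check each prefix cut: since $v_1$ is isolated in $\ovl{F}$ and each crossing bipartite subgraph is a subgraph of $K_{2,3}$ or a star, the induced-matching number is at most~$1$ throughout. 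Therefore $\lmimwG{\ovl{F}} \le 1$, the decomposition in \Cref{fig:base_polar}(c) is good, and \Cref{the:NP-hard_lmim} delivers NP-hardness for graphs with linear mim-width at most $\max\{2, \lmimwG{\ovl{F}}\} = 2$.
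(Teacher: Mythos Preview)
Your proposal is correct and follows essentially the same route as the paper: identify the six-vertex graph $F$ of \Cref{fig:base_polar}(a) as the base of $\compProp$-forbidden subgraphs (with $\compProp=\Prop$ since polarity is self-complementary), check that the caterpillar of \Cref{fig:base_polar}(c) is a good linear branch decomposition of $\ovl{F_{1,1}}$, and invoke \Cref{the:NP-hard_lmim}. The paper carries out exactly this plan, with the case analysis for condition~(ii) split according to whether $n_{i,1}=4$ or $n_{i,1}<4$; your explicit verification that $\lmimwG{\ovl{F}}\le 1$ via the ordering $v_5,v_6,v_2,v_3,v_4,v_1$ is a detail the paper leaves implicit but that is indeed needed to land on the bound $\max\{2,\lmimwG{\ovl{F}}\}=2$.

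Two small inaccuracies worth cleaning up: the phrase ``pendant $K_2$ on $\{v_5,v_6\}$'' is misleading, since $v_5$ and $v_6$ together with $v_1$ form a triangle (this is precisely why $n_{1,2}=3$); and long induced cycles are not reliable witnesses of non-polarity (e.g.\ $C_5$ is polar), so you should instead cite $2F$ itself, or any of the known minimal non-polar graphs, for nontriviality.
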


    \begin{proof}
        Let $\Prop$ be a collection of polar graphs.
        Observe that $\compProp$ is also a collection of polar graphs.
        We show that the base $F$ of $\compProp$-forbidden subgraphs is the graph depicted in \figurename~\ref{fig:base_polar}(a) and the linear branch decomposition $(T, \bij)$ of $\ovl{F_{1,1}}$ depicted in \figurename~\ref{fig:base_polar}(c) is good, where $\ovl{F_{1,1}}$ is depicted in \figurename~\ref{fig:base_polar}(b).
        Then \Cref{the:NP-hard_mimtwo_polar} follows from \Cref{the:NP-hard_mim}.

        If $F$ is indeed the base of $\compProp$-forbidden subgraphs, then $(T, \bij)$ is obviously good.
        To show that $F$ is the base of $\compProp$-forbidden subgraphs, we are required to verify that $F$ satisfies the following two conditions:
        \begin{enumerate}
    		\item there is an integer $\ell \ge 1$ such that $\ell F$ violates $\compProp$, whereas $(\ell-1) F$ satisfies $\compProp$; and
    		\item for any integer $\ell^\prime \ge 1$ and any graph $F^\prime$ with $\beta_{F^\prime} <_R \beta_{F}$, $\ell^\prime F^\prime$ satisfies $\compProp$.
    	\end{enumerate}
        By a brute-force search, one can see that $F$ is a polar graph and $2F$ is not a polar graph.
        We show that for any integer $\ell^\prime \ge 1$ and any graph $F^\prime$ with $\beta_{F^\prime} <_R \beta_{F} = \langle \langle 4, 3 \rangle \rangle$, $\ell^\prime F^\prime$ satisfies $\compProp$.
        Denote $\beta_{F^\prime} = \langle \alpha_1, \alpha_2, \ldots, \alpha_p \rangle$, where $p$ is the number of connected components of $F^\prime$.
        Recall that $\alpha_1 \ge_L \alpha_2 \ge_L \cdots \ge_L \alpha_p$, $H_i$ is a connected component of $F^\prime$ for each $i \in \segsingle{p}$, $C_{i,j}$ for $j \in \segsingle{q_i}$ is a connected component obtained by removing the cut vertex $c_i$ from $H_i$, $n_{i,j} = |V(C_{i,j})|+1$, and $\alpha_i = \langle n_{i,1}, n_{i,2}, \ldots, n_{i,q_i} \rangle$ is the non-increasing sequence.
        (See again the definitions in \Cref{sec:hardness_general}.)
        Since $\beta_{F^\prime} <_R \beta_{F} = \langle \langle 4, 3 \rangle \rangle$, we have $n_{i,j} \le 4$ for each $i \in \segsingle{p}$ and each $j \in \segsingle{q_i}$.
        For each $i \in \segsingle{p}$, we partition vertices of $H_i$ into a cluster set $A_i$ and an independent set $I_i$ as follows.

        Suppose that $n_{i,1} = 4$.
        Then, we have $n_{i,j} \le 2$ for each $j\in \seg{2}{q_i}$; otherwise, contradicting $\beta_{F^\prime} <_R \beta_{F}$.
        This implies that $C_{i,j} = K_1$ for each $j\in \seg{2}{q_i}$.
        Moreover, $C_{i,1}$ is isomorphic to $K_3$ or $P_3$.
        If $C_{i,1} = K_3$, let $I_i = \{c_i\}$ and $A_i = V(H_i) \setminus I_i$.
        If $C_{i,1} = P_3$, we denote by $w_i$ the vertex with degree $2$ of the subgraph induced by $V(C_{i,1})$, and let $A_i = \{c_i, w_i \}$ and $I_i = V(H_i) \setminus A_i$.
        Observe that $A_i$ is a cluster set and $I_i$ is an independent set of $H_i$.

        Suppose next that $n_{i,1} < 4$.
        Then, we have $n_{i,j} \le 3$ for each $j\in \segsingle{q_i}$.
        This implies that for each $j\in \segsingle{q_i}$, $C_{i,j}$ is isomorphic to $K_2$ or $K_1$.
        We let $I_i = \{c_i\}$ and $A_i = V(H_i) \setminus I_i$, and then clearly $A_i$ is a cluster set and $I_i$ is an independent set of $H_i$.

        Therefore, $\bigcup_{i \in \segsingle{p}} A_i$ is a cluster set of $F^\prime$ and $\bigcup_{i \in \segsingle{p}} I_i$ is an independent set of $F^\prime$, that is, $F^\prime$ is a polar graph.
        Moreover, for any integer $\ell^\prime \ge 1$, $\ell^\prime F^\prime$ is a polar graph.
        This completes the proof.
    \end{proof}

	\begin{theorem} \label{the:NP-hard_lmimtwo}
		All the following problems, as well as their connected variants and their dominating variants, are NP-hard for graphs with linear mim-width $2$:
		  (a) \textsc{Induced $H$-free Subgraph} for any fixed threshold graph $H$ that is not a complete graph; and
            (b) \textsc{Induced $\Prop$ Subgraph}, where $\Prop$ is a fixed nontrivial hereditary graph property that admits all cliques and excludes $\ovl{\ell K_2}$ for some integer $\ell \ge 1$.
      The NP-hardness for these problems holds even if a linear branch decomposition with mim-width at most~$2$ of an input graph is given.
	\end{theorem}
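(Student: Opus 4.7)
The plan is to derive both parts from \Cref{the:NP-hard_lmim} by identifying the base $F$ of $\compProp$-forbidden subgraphs, bounding $\lmimwG{\ovl F}\le 2$, and exhibiting a good linear branch decomposition of $\ovl{F_{1,1}}$.

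For part~(b), I would take $F=K_2$ with $\ell$ being the smallest integer for which $\ovl{\ell K_2}\notin\Prop$. Condition~1 in the definition of the base is immediate from this choice of $\ell$, and condition~2 holds because any $F'$ with $\beta_{F'}<_R\beta_{K_2}=\langle\langle 2\rangle\rangle$ is a disjoint union of isolated vertices, so $\ell'F'$ is an independent set whose complement is a clique and hence lies in $\Prop$ by the ``admits all cliques'' assumption. Then $\ovl F=\ovl{F_{1,1}}=2K_1$, so $\lmimwG{\ovl F}=0$; a subcubic caterpillar on three nodes, with the two leaves representing $c_1$ and $d$ attached to a single internal node, is a good linear branch decomposition of mim-width $0$. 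Applying \Cref{the:NP-hard_lmim} gives NP-hardness for linear mim-width at most $\max\{2,0\}=2$.

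For part~(a), let $\Prop$ be the class of $H$-free graphs, so that $\compProp$ is the class of $\ovl H$-free graphs; $\Prop$ admits all cliques because $H$ is not complete. Since $H$ is threshold, so is $\ovl H$, and the latter is $2K_2$-free and hence has at most one nontrivial connected component. Adapting the argument from the proof of \Cref{the:NP-hard_mimtwo_H-free} (to handle the possibly disconnected $\ovl H$), one can show that the base $F$ is a connected induced subgraph of $\ovl H$; consequently $F$, $\ovl F$, $F_{1,1}$, and $\ovl{F_{1,1}}$ are all threshold, as that class is hereditary and self-complementary. Ordering the vertices of $\ovl F$ by the threshold weight gives a linear branch decomposition whose every prefix-suffix cut is a chain bipartite graph (nested neighborhoods), which yields $\lmimwG{\ovl F}\le 1$.

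The principal obstacle is producing a \emph{good} linear branch decomposition of $\ovl{F_{1,1}}$, i.e., one placing $c_1$ and $d$ at a common spine endpoint. I would prove the following lemma on threshold graphs: for any threshold graph $T$ and any non-adjacent pair $u,v\in V(T)$, there is a linear ordering of $V(T)$ beginning with $u,v$ such that every prefix-suffix cut is chain bipartite. The argument uses the weight characterization of threshold graphs: assuming $w(u)\le w(v)$ and listing the remaining vertices in non-decreasing order of $w$, the key inequality $N(a)\setminus\{b\}\subseteq N(b)\setminus\{a\}$, which holds whenever $w(a)\le w(b)$, ensures that for every prefix $P$ the restricted neighborhoods $\{N(x)\cap(V(T)\setminus P):x\in P\}$ are totally ordered by inclusion, forcing the chain property. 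Applying the lemma to $\ovl{F_{1,1}}$ with $u=c_1$ and $v=d$ (non-adjacent in $\ovl{F_{1,1}}$ because $c_1 d\in E(F_{1,1})$) yields a good linear branch decomposition of mim-width at most $1=\lmimwG{\ovl F}$, and \Cref{the:NP-hard_lmim} then completes the proof.
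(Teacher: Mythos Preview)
Your proof is correct, and part~(b) matches the paper's argument exactly. For part~(a), however, you take a more laborious route than necessary.

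The paper's key observation is that \emph{every} linear branch decomposition of a threshold graph has mim-width at most~$1$. This is a one-line argument: if some cut $(A_1^e,A_2^e)$ of $\ovl{F}$ had an induced matching $\{x_1y_1,x_2y_2\}$, then the induced subgraph on $\{x_1,x_2,y_1,y_2\}$ in $\ovl{F}$ would be one of $2K_2$, $P_4$, or $C_4$ (depending on which of $x_1x_2$, $y_1y_2$ are present), contradicting that $\ovl{F}$, as an induced subgraph of the threshold graph $H$, is itself threshold. Once this is established, obtaining a good decomposition of $\ovl{F_{1,1}}$ is trivial: take \emph{any} linear branch decomposition of $\ovl{F}$ that places $\bij(c_1)$ and $\bij(d)$ at a common spine endpoint, and restrict to $V(F_{1,1})$; the restriction still has mim-width at most $1=\lmimwG{\ovl{F}}$ automatically.

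Your threshold-weight lemma is correct (and in fact holds for \emph{any} starting pair, not just non-adjacent ones, since your nesting argument never uses non-adjacency), but it is subsumed by the stronger fact above. The advantage of the paper's approach is that it sidesteps the need to carefully engineer an ordering: once you know all orderings work, the existence of a good one is free.

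A minor remark: your claim that $F$ is a \emph{connected} induced subgraph of $\ovl{H}$ is not needed and may not hold in general (e.g., $\ovl{H}$ could have isolated vertices that end up in $F$). What you actually use, and what the paper's Theorem~3.3 argument gives, is only that $F$ is an induced subgraph of $\ovl{H}$; since threshold graphs are hereditary and closed under complementation, this already yields that $F$, $\ovl{F}$, $F_{1,1}$, and $\ovl{F_{1,1}}$ are all threshold.
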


	\begin{proof}
		By \Cref{the:NP-hard_lmim}, it suffices to show that $\lmimwG{\ovl{F}} \le 2$ and there is a good linear branch decomposition of $\ovl{F_{1,1}}$.

        \smallskip
		\noindent \textbf{Case~(a):}
		As in the proof of \Cref{the:NP-hard_mimtwo_H-free}, $F$ is chosen from induced subgraphs of $\ovl{H}$.
        We claim that any linear branch decomposition $(T,\bij)$ of $\ovl{F}$ has mim-width at most~$1$.
        Assume for a contradiction that there exists an edge $e$ of $T$ such that the bipartite subgraph $\indG{A_1^e, A_2^e}$ of $\ovl{F}$ has an induced matching $\{x_1x_2, y_1y_2\}$, where $x_1, y_1 \in A_1^e$ and $x_2, y_2 \in A_2^e$.
		Observe that the vertex subset $\{x_1,x_2,y_1,y_2\}$ induces the subgraph of $F$ isomorphic to $2K_2$, $P_4$, or $C_4$.
        Thus, $\ovl{H}$ as well as $H$ contain $2K_2$, $P_4$, or $C_4$ as an induced subgraph, which is a contradiction because $2K_2$, $P_4$, and $C_4$ are forbidden subgraphs of threshold graphs~\cite{ChvatalH73}.
        Therefore, we have $\lmimwG{\ovl{F}} \le 1$.

        We show that there is a good linear branch decomposition of $\ovl{F_{1,1}}$.
        Suppose that $\lmimwG{\ovl{F}} = 0$, that is, $\ovl{F}$ has no edge.
        Then $F$ is a complete graph.
        Since $F$ has no cut vertex, we have $F_{1,1} = F$ and hence any linear branch decomposition of $\ovl{F_{1,1}}$ has mim-width~$0$.
        This immediately indicates that there is a good linear branch decomposition of $F_{1,1}$.
        Suppose next that $\lmimwG{\ovl{F}} = 1$.
        In fact, any linear branch decomposition $(T,\bij)$ of $\ovl{F}$ has mim-width exactly~$1$ by the above claim.
        Consider a linear branch decomposition $(T,\bij)$ of $\ovl{F}$ such that $p_{c_1} = p_d$, where $c_1$ and $d$ are the vertices of $F_{1,1}$.
        We define a linear branch decomposition $(T^\prime,\bij^\prime)$ of $\ovl{F_{1,1}}$ obtained from $T$ by removing nodes corresponding to the vertices in $V(F_{1,1}) \setminus V(F)$ and then smoothing each vertex of degree $2$.
        Then, $(T^\prime,\bij^\prime)$ is good because $\mimwT{T^\prime}{\bij^\prime} \le \mimwT{T}{\bij} = \lmimwG{\ovl{F}}$.

        \smallskip
		\noindent \textbf{Case~(b):}
        Let $\ell^* \ge 1$ be the minimum integer such that $\Prop$ contains no $\ovl{\ell^* K_2}$.
        Then, $\ell^* K_2$ is a forbidden subgraph of $\compProp$.
        Observe that the base $F$ of $\compProp$-forbidden subgraphs is $K_2$ and hence $F_{1,1} = F$.
        Furthermore, for any linear branch decomposition $(T,\bij)$ of $\ovl{F} = \ovl{F_{1,1}} = 2K_1$, we have $\mimwT{T}{\bij} = 0$.
        This immediately implies that there is a good linear branch decomposition of $\ovl{F_{1,1}}$.
	\end{proof}

    \textsc{Clique}, \textsc{Induced Cluster Subgraph}, \textsc{Induced $\ovl{P_3}$-free Subgraph}, and \textsc{Induced $\ovl{K_3}$-free Subgraph} are equivalent to \textsc{Induced $H$-free Subgraph} such that $H$ is $K_2$, $P_3$, $\ovl{P_3}$, and $K_3$, respectively, each of which is a threshold graph.
    \textsc{Induced Split Subgraph} is equivalent to \textsc{Induced $\Prop$ Subgraph} such that $\Prop$ is the collection of split graphs, which is nontrivial, hereditary, admits all cliques, and excludes $\ovl{2 K_2} = C_4$.
    Therefore, the following corollary is obtained from
    \Cref{the:NP-hard_mimtwo_polar,the:NP-hard_lmimtwo}.

    \begin{corollary} \label{cor:NP-hard_lmimtwo}
		All the following problems, as well as their connected variants and their dominating variants, are NP-hard for graphs with linear mim-width~$2$:
  	(i) \textsc{Clique};
		(ii) \textsc{Induced Cluster Subgraph};
		(iii) \textsc{Induced Polar Subgraph};
		(iv) \textsc{Induced $\ovl{P_3}$-free Subgraph};
		(v) \textsc{Induced Split Subgraph}; and
        (vi) \textsc{Induced $\ovl{K_3}$-free Subgraph}.
        \rev{The NP-hardness for these problems holds even if a linear branch decomposition with mim-width at most~$2$ of an input graph is given.}
	\end{corollary}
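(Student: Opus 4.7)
The plan is to derive each of the six items as an immediate consequence of either \Cref{the:NP-hard_mimtwo_polar} or \Cref{the:NP-hard_lmimtwo}, once the correct substitution for $H$ or $\Prop$ is identified. The short paragraph immediately preceding the corollary already flags these correspondences; the actual work is to verify that the hypotheses of the invoked theorems hold.

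Item~(iii), \textsc{Induced Polar Subgraph}, is the very statement of \Cref{the:NP-hard_mimtwo_polar}, so it needs no further argument. For items~(i), (ii), (iv), and~(vi), I would invoke \Cref{the:NP-hard_lmimtwo}(a). Each of these problems is equivalent to \textsc{Induced $H$-free Subgraph} with $H$ equal to $2K_1$, $P_3$, $\ovl{P_3}$, and $\ovl{K_3}$, respectively. For each such $H$ I would check two conditions: first, that $H$ is not a complete graph, which is clear by inspection; and second, that $H$ is threshold, which by the characterization of threshold graphs as $(2K_2, P_4, C_4)$-free graphs~\cite{ChvatalH73} reduces to the observation that all four candidates have at most three vertices while the forbidden subgraphs have four, so none can appear as an induced subgraph.

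For item~(v), \textsc{Induced Split Subgraph}, I would apply \Cref{the:NP-hard_lmimtwo}(b) with $\Prop$ the class of split graphs. The property is clearly nontrivial and hereditary, and it admits all cliques because any clique is a split graph with an empty independent part. Finally, since split graphs are $C_4$-free and $\ovl{2K_2} = C_4$, the property $\Prop$ excludes $\ovl{\ell K_2}$ for $\ell = 2$, satisfying the remaining hypothesis of \Cref{the:NP-hard_lmimtwo}(b). Since both invoked theorems already bundle the connected and dominating variants into their conclusions and explicitly allow a linear branch decomposition of mim-width at most~$2$ to be given alongside the input, the corresponding strengthenings in the corollary follow without additional effort.

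The only conceivable obstacle lies in the minor bookkeeping required by these checks; all of the substantive reductions have already been performed in \Cref{the:NP-hard_mimtwo_polar} and \Cref{the:NP-hard_lmimtwo}, so no new construction is needed.
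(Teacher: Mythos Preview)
Your proposal is correct and mirrors the paper's own derivation: the corollary is obtained by instantiating \Cref{the:NP-hard_lmimtwo}(a) with the appropriate small threshold graph $H$ for items (i), (ii), (iv), (vi), invoking \Cref{the:NP-hard_mimtwo_polar} directly for item (iii), and applying \Cref{the:NP-hard_lmimtwo}(b) to the class of split graphs for item (v). Your explicit verification that each $H$ is threshold (via the at-most-three-vertices observation) and not complete, and that split graphs satisfy the hypotheses of part~(b), is exactly the bookkeeping the paper leaves implicit in the paragraph preceding the corollary.
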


    \Cref{cor:NP-hard_lmimtwo} strongly suggests that the complements of graphs with linear mim-width~$2$ have unbounded mim-width, because \textsc{Independent Set}, the complementary problem of \textsc{Clique}, is solvable in polynomial time for bounded mim-width graphs.

	\section{Polynomial-time algorithms for graphs with mim-width at most 1} \label{sec:algo}

 \subsection{\textsc{Cluster Vertex Deletion}} \label{sec:cluster}

	A graph $G$ is called a \emph{cluster} if every connected component of $G$ is a complete graph.
	\textsc{Induced Cluster Subgraph} is equivalent to \textsc{Induced $P_3$-free Subgraph} and \textsc{Cluster Vertex Deletion} (in terms of polynomial-time solvability).
	From \Cref{cor:NP-hard_lmimtwo}, \textsc{Induced Cluster Subgraph} is NP-hard for graphs with linear mim-width at most~$2$.

	Recall that all graphs with mim-width at most~$1$ are perfect graphs by \Cref{prop:mimone_perfect}.
	It is known that \textsc{Clique} is solvable in polynomial time for perfect graphs~\cite{GrotschelLS88} and hence also for graphs with mim-width at most~$1$.
	In contrast, \textsc{Induced Cluster Subgraph} remains NP-hard for bipartite graphs~\cite{HsiehLLP24,Yannakakis81}, which are perfect graphs.
	Thus, the same argument as \textsc{Clique} is not applicable to \textsc{Induced Cluster Subgraph}.
	Nevertheless, assuming that a rooted layout $(T,\bij)$ of an input graph with $\mimwT{T}{\bij} = 1$ is given, we design a polynomial-time algorithm for \textsc{Induced Cluster Subgraph}.

	\begin{theorem}\label{the:cluster}
		Given a graph and its rooted layout of mim-width at most~$1$, \textsc{Induced Cluster Subgraph} is solvable in polynomial time.
	\end{theorem}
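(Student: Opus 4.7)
The plan is to perform a bottom-up dynamic programming on the given rooted layout $(T, \bij)$ of $G$. For each node $t$ of $T$, let $A_t$ denote the set of vertices of $G$ whose image under $\bij$ lies in the subtree of $T$ rooted at $t$; by \Cref{prop:mimone_chain}, the bipartite subgraph $G[A_t, \overline{A_t}]$ is a chain graph. The algorithm maintains at each $t$ a polynomial-size table of \emph{representative} sets $S \subseteq A_t$ such that $G[S]$ is a cluster, grouped by the equivalence relation $S_1 \sim_t S_2$ iff for every $S' \subseteq \overline{A_t}$, $G[S_1 \cup S']$ is a cluster exactly when $G[S_2 \cup S']$ is. For each class the table stores one $S$ of maximum cardinality, and the optimum is the largest such $|S|$ at the root.

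To bound the number of classes, I would exploit the chain structure of $G[A_t, \overline{A_t}]$. Enumerate the boundary vertices of $A_t$ (those with a neighbor in $\overline{A_t}$) as $a_1, \ldots, a_k$ with $N(a_1) \cap \overline{A_t} \subseteq \cdots \subseteq N(a_k) \cap \overline{A_t}$; symmetrically, every $b \in \overline{A_t}$ with a neighbor in $A_t$ has $N(b) \cap \{a_1, \ldots, a_k\} = \{a_{j_0(b)}, \ldots, a_k\}$ for some threshold $j_0(b)$. Consequently, any $b \in S'$ adjacent to a boundary vertex of $S$ is also adjacent to $a_{j^*}$, where $j^* = \max\{i : a_i \in S\}$; letting $C^*$ be the cluster component of $G[S]$ containing $a_{j^*}$, this forces that every other ``open'' component of $G[S]$ receives no vertex from $S'$, and that a merge with $C^*$ is possible only when $C^*$ consists entirely of boundary vertices and the merging $b$'s are adjacent to every vertex of $C^*$. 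A short case analysis then shows that the class of $S$ is determined by the quadruple $(j^*,\, i^*_{\min},\, f,\, j')$, where $i^*_{\min} = \min\{i : a_i \in C^*\}$, $f \in \{0,1\}$ records whether $C^*$ contains a non-boundary vertex, and $j' = \max\{i : a_i \in S \setminus C^*\}$ (with a distinguished symbol when empty). This yields only $O(n^3)$ classes per node.

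The DP transitions are straightforward in outline. Leaves have at most two partial solutions. At an internal node $t$ with children $t_1, t_2$, I would enumerate every pair $(S_1, S_2)$ of representatives from the children's tables and attempt the merge $S = S_1 \cup S_2$. The interface graph $G[A_{t_1}, A_{t_2}]$ is itself a chain graph (by applying \Cref{prop:mimone_chain} to the edge of $T$ incident to $t_1$), whose staircase structure lets me decide in polynomial time whether $G[S]$ is still a cluster (i.e., whether any induced $P_3$ is created across the interface) and, if so, compute the quadruple of $S$ at level $t$. Since each table has polynomially many entries and each merge is polynomial, the overall running time is polynomial.

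The main obstacle is the careful translation of the invariants across tree levels: a vertex may be boundary at child $t_1$ but interior at $t$ (if all its external neighbors lie inside $A_{t_2}$), or vice versa, and the chain orderings at $t_1$, $t_2$, and $t$ generally differ. Moreover, cluster components of $G[S_1]$ and $G[S_2]$ can merge through the chain edges of $G[A_{t_1}, A_{t_2}]$, so $C^*(S)$ need not be the union of $C^*(S_1)$ and $C^*(S_2)$. Handling this rewiring correctly requires comparing the three chain orderings and recomputing the quadruple from scratch; but since every step involves only local polynomial-size data, a meticulous case analysis completes the polynomial-time algorithm for \textsc{Induced Cluster Subgraph}.
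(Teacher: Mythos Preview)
Your approach is correct and follows the same high-level strategy as the paper: bottom-up dynamic programming on the rooted layout, where the chain structure of each cut $G[A_t,\overline{A_t}]$ bounds the number of ``extension-equivalence'' classes of cluster subsets of $A_t$ by a polynomial. Your quadruple $(j^*, i^*_{\min}, f, j')$ encodes essentially the same data as the paper's triple $(\mathsf{head}(C_1),\mathsf{tail}(C_1),\mathsf{head}(C_2))$; your flag $f$ and index $i^*_{\min}$ together play the role of $\mathsf{tail}(C_1)$, while interior vertices are harmless for extensions and can be ignored as you do.

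The one genuine methodological difference is how the cross-level bookkeeping is handled. You store, for each class, an \emph{actual} maximum-cardinality cluster set $S\subseteq A_t$, and at an internal node you merge the two full sets $S_1\cup S_2$ and recompute the signature at level $t$ from scratch. Because your equivalence is defined semantically, the composition lemma (if $S_1\sim_{t_1}S_1'$ and $S_2\sim_{t_2}S_2'$ then $S_1\cup S_2\sim_t S_1'\cup S_2'$, and clusterness is preserved) follows directly, so no compatibility between chain orderings at different nodes is needed. The paper instead keeps only three-vertex representatives and indexes the table by a \emph{pair} $(R_t,R_{\overline t})$; this is more compact but forces it to compute things like $\mathsf{rep}_t(R_a\cup R_b)$ and $\mathsf{rep}_{\overline a}(R_b\cup R_{\overline t})$ from summaries alone, which only works if the chain orders at $a$, $b$, $t$, $\overline a$, $\overline b$, $\overline t$ are globally consistent. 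That is exactly the content of the paper's Lemma on lower/upper chain orders and the accompanying Lemmas on $\mathsf{rep}$ behaving well across levels. In short: your route trades space (storing whole partial solutions) for avoiding the consistent-chain-order machinery that constitutes the most technical part of the paper's proof; both are polynomial, and the paper's version yields a cleaner and tighter running-time bound.
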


    It is known that all interval graphs, permutation graphs, distance-hereditary graphs, and convex graphs have mim-width at most~$1$ and their rooted layout of mim-width at most~$1$ can be obtained in polynomial time~\cite{BelmonteV13,Oum05}.
    Moreover, by \Cref{prop:mimone_comp}, rooted layouts with mim-width at most~$1$ for the complement of these graphs can also be obtained in polynomial time.
    Thus, our algorithm directly indicates the following corollary.

	\begin{corollary}\label{cor:cluster}
		There is an algorithm that solves \textsc{Induced Cluster Subgraph} in polynomial time for interval graphs, permutation graphs, distance-hereditary graphs, convex graphs, and their complements.
	\end{corollary}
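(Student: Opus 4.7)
The plan is to obtain \Cref{cor:cluster} as an immediate consequence of \Cref{the:cluster}, by supplying, for each listed graph class, a polynomial-time constructible rooted layout of mim-width at most~$1$. The only real work is to assemble the appropriate references from the literature and to handle the complement cases cleanly.

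First, for interval graphs, permutation graphs, and convex graphs, I would invoke the constructions of Belmonte and Vatshelle~\cite{BelmonteV13}, which produce branch decompositions of mim-width at most~$1$ in polynomial time for each of these classes. For distance-hereditary graphs, I would use the characterization of Oum~\cite{Oum05} that they coincide exactly with graphs of rank-width at most~$1$; since mim-width is bounded above by rank-width, and a rank decomposition of width~$1$ is computable in polynomial time, the associated layout has mim-width at most~$1$. In each case, any branch decomposition is converted into a rooted layout with the same mim-width by inserting a root into an arbitrary edge, as already noted in \Cref{sec:pre}.

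Next, to cover the complements, I would invoke \Cref{prop:mimone_comp}: every branch decomposition of mim-width at most~$1$ for a graph $G$ is also a branch decomposition of mim-width at most~$1$ for $\ovl{G}$. Hence, given a graph $G$ whose complement $\ovl{G}$ belongs to one of the four classes above, I would first form $\ovl{G}$ in polynomial time, compute a rooted layout of $\ovl{G}$ with mim-width at most~$1$ by the preceding paragraph, and then reuse that very same layout for $G$ itself. Feeding $G$ together with this layout into the algorithm of \Cref{the:cluster} then solves \textsc{Induced Cluster Subgraph} on $G$ in polynomial time.

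There is no substantive obstacle here; the proof is essentially a bookkeeping exercise combining known results. The only subtlety worth flagging is the distance-hereditary case, since, as noted in the introductory footnote, it does not appear to have been explicitly recorded in previous literature that distance-hereditary graphs have mim-width at most~$1$. I would therefore state this consequence of Oum's characterization explicitly within the proof, so that the chain of implications leading to \Cref{the:cluster} is fully self-contained.
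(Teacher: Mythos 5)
Your proposal is correct and follows essentially the same route as the paper: cite the known polynomial-time constructions of mim-width-$1$ decompositions for interval, permutation, convex, and distance-hereditary graphs (the last via Oum's rank-width-$1$ characterization), reuse the same decomposition for complements via \Cref{prop:mimone_comp}, and feed the result into \Cref{the:cluster}. No gaps; your explicit flagging of the distance-hereditary case matches the paper's own footnote on that point.
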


    Here we give an idea of our algorithm.
    For a rooted layout $(T,\bij)$ of a given graph with mim-width at most~$1$, we compute an optimal solution by means of dynamic programming from the leaves to the root of $T$.
    To complete the computation in polynomial time, for each node $t$ of $T$, we discard redundant partial solutions and store essential ones of polynomial size.
    This approach was also employed in the previous algorithmic work of mim-width~\cite{BergougnouxDJ23,BergougnouxK21,BergougnouxPT22,BuixuanTV13,JaffkeKST19,JaffkeKT20a,JaffkeKT20b}.
    Especially, an equivalence relation called the \emph{$d$-neighbor equivalence} plays a key role in compressing partial solutions and designing XP algorithms parameterized by mim-width~\cite{BergougnouxDJ23,BergougnouxK21,BergougnouxPT22,BuixuanTV13,JaffkeKST19}.
    However, \Cref{cor:NP-hard_lmimtwo} suggests that the $d$-neighbor equivalence does not work for designing an algorithm for \textsc{Induced Cluster Subgraph}; otherwise, we would obtain an XP algorithm parameterized by mim-width, which is quite unlikely by \Cref{cor:NP-hard_lmimtwo}.
    A rooted layout with mim-width at most~$1$ resolves the difficulty.
    Recall that $\mimwT{T}{\bij} \le 1$ if and only if $\indG{A_1^e, A_2^e}$ for any edge $e$ of $T$ is a chain graph as in \Cref{prop:mimone_chain}.
	This property allows us to give strict total orderings of vertices in $A_1^e$ and $A_2^e$ with respect to neighbors of vertices.
    We define new equivalence relations over the strict total orderings, which enables the dynamic programming to run in polynomial time.

    Let $G$ be a graph and $\baseorder_A$ be a strict total order on $A \subseteq V(G)$.
    For a vertex subset $C \subseteq A$, we denote by $\heador{C}{\baseorder_A}$ and $\tailor{C}{\baseorder_A}$ the largest and smallest vertices in $C$ with respect to $\baseorder_A$, respectively.
    More precisely, for a vertex $u \in C$, $u = \heador{C}{\baseorder_A}$ if and only if $v \baseorder_A u$ for any vertex $v \in C \setminus \{u\}$, and $u = \tailor{C}{\baseorder_A}$ if and only if $u \baseorder_A w$ for any vertex $w \in C \setminus \{u\}$, respectively.
    (For the sake of convenience, we allow $C = \emptyset$ and in this case we let $\heador{C}{\baseorder_A} = \emptyset$ and $\tailor{C}{\baseorder_A} = \emptyset$.)
    For a subset $S \subseteq A$, a partition $(C_1, C_2, \ldots, C_p)$ of $S$ into $p$ disjoint subsets $C_1, C_2, \ldots, C_p$ is called a \emph{component partition of $S$ over $G$} if for every $i \in \segsingle{p}$, the subgraph of $G$ induced by $C_i$ is a connected component of $\indG{S}$.
    We say that a partition $(C_1, C_2, \ldots, C_p)$ of $S \subseteq A$ is \emph{indexed by} $\baseorder_A$ if it holds that $\heador{C_j}{\baseorder_A} \baseorder_A \heador{C_i}{\baseorder_A}$ for any pair of integers $i,j$ with $1 \le i < j \le p$.

    For a non-empty vertex subset $A$ of $G$ such that $\mimwe{A} \le 1$, a strict total order $\chainorder{A}$ of $A$ is called a \emph{chain order} if for any two distinct vertices $v, w$ in $A$, $v \chainorder{A} w$ means $\Nei{G}{v} \setminus A \subseteq \Nei{G}{w} \setminus A$.
    (If $|A| = 1$, we define that the trivial strict total order of $A$ is also a chain order.)
    By \Cref{prop:mimone_chain} and the definition of chain graphs, there is a chain order of $A$ if and only if $\mimwe{A} \le 1$.
    Note that $\mimwe{\ovl{A}} \le 1$ also holds and thus there is a chain order $\chainorder{\ovl{A}}$ of $\ovl{A}$.

    For subsets $S_A$ and $S_A^\prime$ of $A$, let $(C_1,C_2,\ldots, C_p)$ denote the component partition of $S_A$ over $G$ and let $(C_1^\prime,C_2^\prime,\ldots, C_q^\prime)$ denote the component partition of $S_A^\prime$ over $G$, where $p$ and $q$ are positive integers and both the component partitions are indexed by a chain order $\chainorder{A}$.
    We write $S_A \eqrelation{G,\chainorder{A}} S_{A}^\prime$ if $\heador{C_1}{\chainorder{A}} = \heador{C_1^\prime}{\chainorder{A}}$, $\tailor{C_1}{\chainorder{A}} = \tailor{C_1^\prime}{\chainorder{A}}$, and $\heador{C_2}{\chainorder{A}} = \heador{C_2^\prime}{\chainorder{A}}$.
    If the graph $G$ and the chain order $\chainorder{A}$ involved in the component partitions of $S_A$ and $S_A^\prime$ are clear from the context, then we use the shorthand $\eqrelation{A}$.
    It is not hard to see that $\eqrelation{A}$ is an equivalence relation over subsets of $A$.
    A \emph{representative} of $S_A$, denoted by $\rep{A}(S_A)$, is the set $R = \{ \heador{C_1}{\chainorder{A}}, \tailor{C_1}{\chainorder{A}}, \heador{C_2}{\chainorder{A}} \}$.
    In the same way, we define an equivalence relation $\eqrelation{\chainorder{\ovl{A}}}$ over subsets of $\ovl{A}$ according to a chain order $\chainorder{\ovl{A}}$ and a representative $\rep{\ovl{A}}(S_{\ovl{A}})$ of $S_{\ovl{A}} \subseteq \ovl{A}$.

    Consider two subsets $S_A, S_A^\prime \subseteq A$ with $|S_A| \ge |S_A^\prime|$.
    Assume that for any subset $S_{\ovl{A}} \subseteq \ovl{A}$, $S_A \cup S_{\ovl{A}}$ is a cluster set of $G$ if and only if $S_A^\prime \cup S_{\ovl{A}}$ is a cluster set of $G$.
    This suggests that there is no need to store $S_A^\prime$ during dynamic programming over $T$.
    Formally, we give the following lemma.

\begin{lemma}\label{lem:rep_equivalent}
    For a vertex subset $A$ of a graph $G$ such that $\mimwe{A} \le 1$, let $S_A, S_A^\prime \subseteq A$ be cluster sets of $G$ with $S_A \eqrelation{A} S_{A}^\prime$ and let $S_{\ovl{A}}$ be any subset of $\ovl{A}$.
    Then, $S_A \cup S_{\ovl{A}}$ is a cluster set of $G$ if and only if $S_A^\prime \cup S_{\ovl{A}}$ is a cluster set of $G$.
\end{lemma}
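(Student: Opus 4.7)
The plan is to exploit the characterization of cluster graphs as the $P_3$-free graphs. Any potential induced $P_3$ in $\indG{S_A\cup S_{\ovl{A}}}$ will be classified by how its three vertices split between $S_A$ and $S_{\ovl{A}}$. Two distributions are handled immediately: a $P_3$ entirely inside $S_A$ cannot occur because both $S_A$ and $S_A^\prime$ are cluster sets by hypothesis, and a $P_3$ entirely inside $S_{\ovl{A}}$ is independent of $S_A$. The remaining task is to show that for each of the two mixed distributions, the existence of a bad $P_3$ depends only on $\rep{A}(S_A)=\{\heador{C_1}{\chainorder{A}},\tailor{C_1}{\chainorder{A}},\heador{C_2}{\chainorder{A}}\}$ and $S_{\ovl{A}}$; the biconditional then follows because $S_A\eqrelation{A}S_A^\prime$ forces $\rep{A}(S_A)=\rep{A}(S_A^\prime)$.

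The essential tool is that $\mimwe{A}\le 1$ makes $\indG{A,\ovl{A}}$ a chain graph by \Cref{prop:mimone_chain}, so by the definition of chain order, $\Nei{G}{x}\cap A$ is upward-closed in $\chainorder{A}$ for every $x\in\ovl{A}$, and symmetrically $\Nei{G}{u}\cap\ovl{A}$ is upward-closed in $\chainorder{\ovl{A}}$ for every $u\in A$. Since the component partition is indexed by $\chainorder{A}$, this makes $\heador{C_1}{\chainorder{A}}$ the $\chainorder{A}$-maximum vertex of $S_A$ and $\heador{C_2}{\chainorder{A}}$ the $\chainorder{A}$-maximum vertex of $S_A\setminus C_1$, both of which lie in $S_A$.

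For the mixed distribution with two vertices in $S_A$ and one $x\in S_{\ovl{A}}$, upward-closedness reduces the configuration ``$x$ is adjacent to some but not all of $C_i$'' to ``$x\heador{C_i}{\chainorder{A}}\in E(G)$ and $x\tailor{C_i}{\chainorder{A}}\notin E(G)$''; moreover for $i\ge 2$ this additionally forces adjacency of $x$ to $\heador{C_1}{\chainorder{A}}$, which also realizes the configuration ``$x$ is adjacent to two distinct components''. The latter configuration is in turn equivalent to $x\heador{C_2}{\chainorder{A}}\in E(G)$ (which by upward-closedness forces $x\heador{C_1}{\chainorder{A}}\in E(G)$ as well). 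Thus the existence of a bad $P_3$ of this type is equivalent to the existence of $x\in S_{\ovl{A}}$ satisfying ``$x\heador{C_1}{\chainorder{A}}\in E$ and $x\tailor{C_1}{\chainorder{A}}\notin E$'' or ``$x\heador{C_2}{\chainorder{A}}\in E$'', both of which use only the representative and $S_{\ovl{A}}$.

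The remaining distribution has one $u\in S_A$ and two vertices in $S_{\ovl{A}}$, and splits on whether $u$ is the middle or an endpoint of the $P_3$. For $u$ as middle, $\Nei{G}{u}\cap\ovl{A}\subseteq \Nei{G}{\heador{C_1}{\chainorder{A}}}\cap\ovl{A}$ and $\heador{C_1}{\chainorder{A}}\in S_A$ imply that such a $P_3$ exists for some $u\in S_A$ iff one exists for $\heador{C_1}{\chainorder{A}}$, a condition on $\heador{C_1}{\chainorder{A}}$ and $S_{\ovl{A}}$ alone. The main obstacle is the endpoint subcase, where $xy\in E$, $ux\in E$, $uy\notin E$ with $u\in S_A$ and $x,y\in S_{\ovl{A}}$; I plan to split on adjacency of $\heador{C_1}{\chainorder{A}}$ and $\tailor{C_1}{\chainorder{A}}$ to $y$. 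If $\heador{C_1}{\chainorder{A}}y\notin E$, then $\heador{C_1}{\chainorder{A}}$ itself witnesses the same endpoint-$P_3$ (note $\heador{C_1}{\chainorder{A}}x\in E$ by chain from $ux\in E$). If $\heador{C_1}{\chainorder{A}}y\in E$ but $\tailor{C_1}{\chainorder{A}}y\notin E$, then $\{\heador{C_1}{\chainorder{A}},\tailor{C_1}{\chainorder{A}},y\}$ induces a same-component $P_3$ captured by the previous case. If both $\heador{C_1}{\chainorder{A}}y,\tailor{C_1}{\chainorder{A}}y\in E$, then upward-closedness of $\Nei{G}{y}\cap A$ with $\tailor{C_1}{\chainorder{A}}\in \Nei{G}{y}$ and $u\notin \Nei{G}{y}$ forces $u\chainorder{A}\tailor{C_1}{\chainorder{A}}$ strictly, hence $u\notin C_1$; then $\tailor{C_1}{\chainorder{A}}x\in E$ by chain from $ux\in E$, so $\tailor{C_1}{\chainorder{A}}$ and $u$ lie in distinct components of $S_A$ both adjacent to $x$, yielding a different-components $P_3$ captured by $\heador{C_2}{\chainorder{A}}$. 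This final three-way case split is the heart of the proof.
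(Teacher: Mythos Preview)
Your proof is correct but takes a genuinely different route from the paper's.

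The paper works directly with component partitions on \emph{both} sides of the cut: it also takes the component partition $(Z_1,\ldots,Z_\gamma)$ of $S_{\ovl{A}}$ indexed by $\chainorder{\ovl{A}}$, assumes $S_A\cup S_{\ovl{A}}$ is a cluster set, and makes a single two-way split on whether $\heador{X_1}{\chainorder{A}}\,\heador{Z_1}{\chainorder{\ovl{A}}}\in E(G)$. If not, the chain property makes $\indG{S_A,S_{\ovl{A}}}$ edgeless and the conclusion is immediate. If yes, then $X_1\cup Z_1$ must be a maximal clique of $\indG{S_A\cup S_{\ovl{A}}}$, which yields $\tailor{X_1}{}\tailor{Z_1}{}\in E$, $\heador{X_1}{}\heador{Z_2}{}\notin E$, $\heador{X_2}{}\heador{Z_1}{}\notin E$; these three facts transfer verbatim to $S_A^\prime$ via $\eqrelation{A}$ and force $Y_1\cup Z_1$ to be the unique crossing clique in $\indG{S_A^\prime\cup S_{\ovl{A}}}$.

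Your argument instead establishes the sharper intermediate statement that $S_A\cup S_{\ovl{A}}$ contains an induced $P_3$ iff $\rep{A}(S_A)\cup S_{\ovl{A}}$ does, which immediately gives the lemma since $\rep{A}(S_A)=\rep{A}(S_A^\prime)\subseteq S_A\cap S_A^\prime$. This is more elementary---you never partition $S_{\ovl{A}}$, never assume one side is already a cluster set, and never invoke the ``at most one crossing clique'' picture---at the cost of more casework, most of it concentrated in your three-way split for the endpoint subcase. The paper's proof is shorter and exposes the global structure; yours gives a cleaner reusable fact (the representative alone determines whether adding any $S_{\ovl{A}}$ creates a $P_3$).
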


    \begin{proof}
        We may assume that $S_A$, $S_A^\prime$ and $S_{\ovl{A}}$ are all non-empty sets; otherwise, the lemma trivially holds.
        Suppose that $(X_1, \ldots, X_\alpha)$ is the component partition of $S_A$ indexed by $\chainorder{A}$; $(Y_1^\prime, \ldots, Y_\beta^\prime)$ is the component partition of $S_A^\prime$ indexed by $\chainorder{A}$; and $(Z_1, \ldots, Z_\gamma)$ is the component partition of $S_{\ovl{A}}$ indexed by $\chainorder{\ovl{A}}$, where $\alpha$, $\beta$, and $\gamma$ are positive integers.

        Suppose that $S_A \cup S_{\ovl{A}}$ is a cluster set of $G$.
        Note that $S_{\ovl{A}}$ is also a cluster set of $G$.
        For a positive integer $i$, we denote $x_i^{\mathsf{h}} = \heador{X_i}{\baseorder_A}$, $x_i^{\mathsf{t}} = \tailor{X_i}{\baseorder_A}$, $y_i^{\mathsf{h}} = \heador{Y_i}{\baseorder_A}$, $y_i^{\mathsf{t}} = \tailor{Y_i}{\baseorder_A}$, $z_i^{\mathsf{h}} = \heador{Z_i}{\chainorder{\ovl{A}}}$, and $z_i^{\mathsf{t}} = \tailor{Z_i}{\chainorder{\ovl{A}}}$.
        Recall that $x \chainorder{A} x_1^{\mathsf{h}}$ for any vertex $x \in S_A \setminus \{x_1^{\mathsf{h}}\}$, $y \chainorder{A} y_1^{\mathsf{h}}$ for any vertex $y \in S_A^\prime \setminus \{y_1^{\mathsf{h}}\}$, and $z \chainorder{\ovl{A}} z_1$ for any vertex $z \in S_{\ovl{A}} \setminus \{z_1\}$.

        If $x_1^{\mathsf{h}} z_1^{\mathsf{h}} \notin E(G)$, since $S_A \eqrelation{A} S_{A}^\prime$ and hence $x_1^{\mathsf{h}} = y_1^{\mathsf{h}}$, we have $y_1^{\mathsf{h}} z_1^{\mathsf{h}} \notin E(G)$.
        This implies that, as the bipartite subgraph $\indG{A, \ovl{A}}$ is a chain graph, there is no edge between each pair of $y \in S_A^\prime$ and $z \in S_{\ovl{A}}$, and hence $S_A^\prime \cup S_{\ovl{A}}$ is also a cluster set of $G$.

        If $x_1^{\mathsf{h}} z_1^{\mathsf{h}} \in E(G)$, then $X_1 \cup Z_1$ forms a maximal clique of $\indG{S_A \cup S_{\ovl{A}}}$.
        Thus, it holds that $x_1^{\mathsf{t}} z_1^{\mathsf{t}} \in E(G)$, $x_1^{\mathsf{h}} z_2^{\mathsf{h}} \notin E(G)$, and $x_2^{\mathsf{h}} z_1^{\mathsf{h}} \notin E(G)$.
        It follows from $S_A \eqrelation{t} S_{A}^\prime$ that we have $y_1^{\mathsf{t}} z_1^{\mathsf{t}} \in E(G)$, $y_1^{\mathsf{h}} z_2^{\mathsf{h}} \notin E(G)$, and $y_2^{\mathsf{h}} z_1^{\mathsf{h}} \notin E(G)$.
        This implies that there is an edge between any vertex in $Y_1$ and any vertex in $Z_1$; there is no edge between vertices $S_A^\prime \setminus Y_1$ and vertices in $S_{\ovl{A}}$; and there is no edge between vertices in $S_A^\prime$ and vertices in $S_{\ovl{A}} \setminus Z_1$.
        We conclude that $S_A^\prime \cup S_{\ovl{A}}$ is also a cluster set of $G$.
        Therefore, if $S_A \cup S_{\ovl{A}}$ is a cluster set of $G$, then $S_A^\prime \cup S_{\ovl{A}}$ is a cluster set of $G$, and vice versa.
    \end{proof}

    \Cref{lem:rep_equivalent} asserts that the equivalence relation $\eqrelation{A}$ allows us to determine vertex sets to be stored.
    However, \Cref{lem:rep_equivalent} is not enough to construct a dynamic programming algorithm.
    If a chain order is arbitrarily given for each node $t$ of $T$, then the ordering of the stored sets may change, which causes the algorithm to output an incorrect solution.
    To avoid the inconsistency, we need to define chain orders with additional constraints.

    Let $(T,\bij)$ be a rooted layout of a graph $G=(V,E)$.
	For a node $t$ of $T$, we denote by $T_t$ the subtree of $T$ rooted at $t$.
    We define $ V_t = \{\bij^{-1}(\ell) \mid  \text{$\ell$ is a leaf of $T_t$} \}$, $\overline{V_t} = V \setminus V_t$, $G_t = \indG{V_t}$, and
	$G_{\overline{t}} = \indG{\overline{V_t}}$.
	We use the shorthand notations $\nodecut{t}$ for the bipartite subgraph $\indG{V_t, \overline{V_t}}$ and $\rep{t}$ for the representative $\rep{V_t}$.
   We define a strict total order $\chainorder{t}$ on vertices in $V_t$, called a \emph{lower chain order}, that satisfies the two conditions below:
    \begin{enumerate} [($\ell$-1)]
		\item $\chainorder{t}$ is a chain order of $V_t$; and
		\item if $t$ has a child $c$, then for any pair of distinct vertices $v, w$ in $V_{c}$, it holds that $v \chainorder{c} w$ if and only if $v \chainorder{t} w$.
	\end{enumerate}
    \medskip
    We also define an \emph{upper chain order} $\chainorder{\ovl{t}}$ as a strict total order on vertices in $\ovl{V_t}$ that holds the following three conditions:
	\begin{enumerate} [($u$-1)]
		\item $\chainorder{\ovl{t}}$ is a chain order of $\ovl{V_t}$;
        \item if $t$ has a child $c$, then for any pair of distinct vertices $v, w$ in $\ovl{V_t}$, it holds that $v \chainorder{\ovl{t}} w$ if and only if $v \chainorder{\ovl{c}} w$; and
		\item if $t$ has the parent $p$, then for any pair of distinct vertices $v, w$ in $\ovl{V_t} \cap V_{p}$, it holds that $v \chainorder{\ovl{t}} w$ if and only if $v \chainorder{p} w$, where $\chainorder{p}$ is a lower chain order on $V_p$.
	\end{enumerate}

    \Cref{lem:chainorder} asserts that the above strict total orders can be found in polynomial time.
    The proof of \Cref{lem:chainorder} will be provided in \Cref{sec:chainorder}.

    \begin{lemma}\label{lem:chainorder}
    	   Let $(T,\bij)$ be a rooted layout of a graph $G$ with $\mimwT{T}{\bij} \le 1$.
            For every node $t$ of $T$, a lower chain order $\chainorder{t}$ and an upper chain order $\chainorder{\overline{t}}$ exist and can be obtained in polynomial time.
    \end{lemma}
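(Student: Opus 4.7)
The plan is to construct the lower chain orders by a bottom-up traversal of $T$ and then construct the upper chain orders by a top-down traversal. The pivotal observation, via \Cref{prop:mimone_chain}, is that for every node $t$ the bipartite cut $\nodecut{t}$ is a chain graph, so the family $\{\Nei{G}{v}\cap\ovl{V_t} : v \in V_t\}$ is totally ordered by inclusion. In particular, two vertices of $V_t$ with neighborhoods of the same size in $\ovl{V_t}$ have \emph{identical} neighborhoods there, leaving freedom to break ties by the child orders.

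For the bottom-up pass, a leaf carries the trivial order. At an internal node $t$ with children $c_1, c_2$, I would sort $V_t$ non-decreasingly by $|\Nei{G}{v}\cap\ovl{V_t}|$ and break ties by placing the $V_{c_1}$-vertices first in $\chainorder{c_1}$-order, followed by the $V_{c_2}$-vertices in $\chainorder{c_2}$-order. Condition ($\ell$-1) follows from the chain property. For ($\ell$-2), the inclusion $\ovl{V_t} \subseteq \ovl{V_{c_i}}$ implies that $\chainorder{c_i}$ is consistent with the primary key on $V_{c_i}$, so the tie-breaking step simply reproduces $\chainorder{c_i}$ on each primary tie class.

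For the top-down pass, the root is trivial. For a non-root $t$ with parent $p$ and sibling $s$, I would combine the already-built $\chainorder{s}$ (from the bottom-up pass) and $\chainorder{\ovl{p}}$ (from the top-down pass) into $\chainorder{\ovl{t}}$ by the same recipe: sort $\ovl{V_t}=V_s\cup\ovl{V_p}$ by $|\Nei{G}{v}\cap V_t|$, then break ties using $\chainorder{s}$ on $V_s$-vertices and $\chainorder{\ovl{p}}$ on $\ovl{V_p}$-vertices (with an arbitrary convention on which group precedes within a tie class). Condition ($u$-1) follows from the chain property of $\nodecut{t}$; condition ($u$-3) follows from $V_t\subseteq\ovl{V_s}$, which lets $\chainorder{s}$ induce the correct chain order on $V_s$ with respect to $V_t$. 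The delicate point is ($u$-2), which must hold when, later, I construct $\chainorder{\ovl{c}}$ at a child $c$ of $t$: the rigidity that $|\Nei{G}{v}\cap V_t|=|\Nei{G}{w}\cap V_t|$ forces $\Nei{G}{v}\cap V_t=\Nei{G}{w}\cap V_t$ guarantees that $V_t$-ties intersect $V_c$ to $V_c$-ties, and both orders then use $\chainorder{\ovl{t}}$ as the tiebreaker on $\ovl{V_t}$, so the restriction of $\chainorder{\ovl{c}}$ to $\ovl{V_t}$ coincides with $\chainorder{\ovl{t}}$.

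The hard part is precisely this mutual consistency of the upper chain orders through the tree: the tie-breaks chosen at $t$ must still be honored when building $\chainorder{\ovl{c}}$ at every descendant $c$, even though descendant nodes see more vertices on the other side of the cut. The chain-graph rigidity is what rules out conflicting ties and lets a single sort at each node suffice. Since each sort is polynomial and $T$ has $O(|V(G)|)$ nodes, the whole procedure runs in polynomial time.
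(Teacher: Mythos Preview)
Your approach is correct and takes a different, more direct route than the paper. The paper builds, for each node $t$, a directed acyclic graph $H_t^\ell$ (bottom-up) and $H_t^u$ (top-down) whose edges record all \emph{strict} neighbourhood inclusions witnessed at $t$ or at any descendant (respectively, ancestor); it then proves these are acyclic, defines layers, shows that vertices in the same layer have identical cut-neighbourhoods, and finally extracts the order from the layers using a single global tiebreaker $\originorder$ (or $\chainorder{r}$ in the upper pass). Three auxiliary lemmas are needed to make this go through. By contrast, you sort by cut-neighbourhood size and break ties recursively with the child or parent/sibling orders already computed. The chain property of each $\nodecut{t}$ makes size a faithful proxy for inclusion, and the containment $\ovl{V_t}\subseteq\ovl{V_c}$ (resp.\ $V_t\subseteq\ovl{V_s}$, $V_c\subseteq V_t$) guarantees that the primary key never contradicts the recursive tiebreaker, which is exactly what your consistency arguments for $(\ell\text{-}2)$, $(u\text{-}2)$, $(u\text{-}3)$ use.

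What each approach buys: the paper's DAG-and-layers construction is uniform (a single global tiebreaker $\originorder$ serves all nodes) and makes the acyclicity the central object to verify; your construction is conceptually lighter, needing only the observation that strict inclusion at a coarser cut forces strict inclusion (hence the same direction) at the finer cut, so a single lexicographic sort per node suffices. Your sketch of $(u\text{-}2)$ is the right place to be careful, and your argument is sound: for $v,w\in\ovl{V_t}$, either their $V_c$-neighbourhood sizes differ, in which case the chain property together with $V_c\subseteq V_t$ forces the same direction in $\chainorder{\ovl{t}}$, or they tie at $V_c$, in which case both are handed to the $\chainorder{\ovl{t}}$-tiebreaker.
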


    We here give the following two lemmas, which are keys to show the correctness of our algorithm given later.

    \begin{lemma}\label{lem:rep_child}
        Let $(T,\bij)$ be a rooted layout of a graph $G$ with $\mimwT{T}{\bij} \le 1$ and $t$ be an internal node of $T$ with a child $c$.
        For any subset $S \subseteq \ovl{V_{c}} \cap V_t$ of $G$, it holds that $\rep{\ovl{c}}(S) = \rep{t}(S)$.
    \end{lemma}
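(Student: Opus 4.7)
The plan is to derive the claim as a near-immediate consequence of condition ($u$-3) in the definition of the upper chain order. Since $c$ is a child of the internal node $t$, condition ($u$-3) (with the roles of $t$ and $p$ in that condition played by $c$ and $t$, respectively) states that for any two distinct vertices $v, w \in \ovl{V_c} \cap V_t$, we have $v \chainorder{\ovl{c}} w$ if and only if $v \chainorder{t} w$. Because $S \subseteq \ovl{V_c} \cap V_t$, the two strict total orders $\chainorder{\ovl{c}}$ and $\chainorder{t}$ restrict to exactly the same order on $S$.

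Next, I would observe that the component partition $(C_1, C_2, \ldots, C_p)$ of $S$ over $G$ depends only on $S$ and $G$, not on any chain order: it is simply the partition of $S$ induced by the connected components of $\indG{S}$. Hence the same unordered collection of components appears in both the definition of $\rep{\ovl{c}}(S)$ and that of $\rep{t}(S)$. For each $C_i \subseteq S$, the coincidence of the two orders on $S$ (and thus on $C_i$) immediately yields $\heador{C_i}{\chainorder{\ovl{c}}} = \heador{C_i}{\chainorder{t}}$ and $\tailor{C_i}{\chainorder{\ovl{c}}} = \tailor{C_i}{\chainorder{t}}$.

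Finally, I would check that the indexing of the components themselves also coincides. Both $\chainorder{\ovl{c}}$ and $\chainorder{t}$ order the components by their head vertices, which were just shown to agree; hence the component labelled $C_1$ under one order is the same as under the other, and likewise for $C_2$. Combining these observations,
\[
\rep{\ovl{c}}(S) = \{ \heador{C_1}{\chainorder{\ovl{c}}}, \tailor{C_1}{\chainorder{\ovl{c}}}, \heador{C_2}{\chainorder{\ovl{c}}} \} = \{ \heador{C_1}{\chainorder{t}}, \tailor{C_1}{\chainorder{t}}, \heador{C_2}{\chainorder{t}} \} = \rep{t}(S),
\]
where the degenerate cases $S = \emptyset$ and $p = 1$ (so $C_2 = \emptyset$) are handled by the convention that $\heador{\emptyset}{\cdot} = \tailor{\emptyset}{\cdot} = \emptyset$. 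No substantial obstacle is expected; the lemma is essentially a bookkeeping consequence of ($u$-3), with the only care required being to verify that heads, tails, and the component ordering are all preserved once the two ambient chain orders are restricted to the common subset $\ovl{V_c} \cap V_t$.
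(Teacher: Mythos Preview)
Your proposal is correct and follows the same approach as the paper: invoke condition ($u$-3) to conclude that $\chainorder{\ovl{c}}$ and $\chainorder{t}$ agree on $S \subseteq \ovl{V_c}\cap V_t$, and hence the representatives coincide. The paper's own proof is a terse two-sentence version of exactly this argument, whereas you additionally spell out why the component partition, heads, tails, and indexing all transfer; this extra care is fine and does not deviate in substance.
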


    \begin{proof}
        By the condition~($u$-3) for an upper chain order, for distinct vertices $v, w \in S$, it holds that $v \chainorder{\ovl{c}} w$ if and only if $v \chainorder{t} w$.
        Thus, the orderings of vertices in $S$ over $\chainorder{\ovl{c}}$ and $\chainorder{t}$ are the same, which immediately means that $\rep{\ovl{c}}(S) = \rep{t}(S)$.
    \end{proof}

    \begin{lemma}\label{lem:rep_composite}
        Let $(T,\bij)$ be a rooted layout of a graph $G$ with $\mimwT{T}{\bij} \le 1$ and let $t$ be an internal node of $T$ with children $a$ and $b$.
        For disjoint cluster sets $X \subseteq V_a$ and $Y \subseteq V_b$, if $X\cup Y$ is a cluster set of $G$, then $\rep{t}(X\cup Y) = \rep{t}( \rep{t}(X)\cup \rep{t}(Y))$ holds.
        Moreover, for a cluster set $Z \subseteq \ovl{V_t}$ of $G$, if $X \cup Z$ (resp.\ $Y \cup Z$) is a cluster set of $G$, then $\rep{\ovl{b}}(X \cup Z) = \rep{\ovl{b}}( \rep{\ovl{b}}(X)\cup \rep{\ovl{b}}(Z))$ (resp.\ $\rep{\ovl{a}}(Y \cup Z) = \rep{\ovl{a}}( \rep{\ovl{a}}(Y)\cup \rep{\ovl{a}}(Z))$) holds.
    \end{lemma}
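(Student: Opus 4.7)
Both equations of the lemma fit a single template: for disjoint vertex sets $S \subseteq P$ and $T \subseteq Q$ that are cluster sets with $S \cup T$ also a cluster, I will show that the representative of $S \cup T$ under a fixed reference chain order on $P \cup Q$ equals the representative of $\mathsf{rep}(S) \cup \mathsf{rep}(T)$ under that same order, where $(P,Q)$ is a pair such that $\indG{P,Q}$ is a chain graph whose chain orders on $P$ and $Q$ are consistent with the reference order. For the first identity I take $P = V_a$, $Q = V_b$ and reference $\chainorder{t}$: $\indG{V_a,V_b}$ is a chain graph since $\mimwe{V_a}\le 1$, and condition~($\ell$-2) makes $\chainorder{t}$ restricted to $V_a$ and to $V_b$ coincide with $\chainorder{a}$ and $\chainorder{b}$ respectively. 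For the remaining two identities I take $(P,Q)=(V_a,\ovl{V_t})$ with reference $\chainorder{\ovl{b}}$, and $(V_b,\ovl{V_t})$ with reference $\chainorder{\ovl{a}}$; here $\indG{V_a,\ovl{V_t}}$ is a chain graph as a bipartite subgraph of the chain graph $\indG{V_a,\ovl{V_a}}$, and conditions~($u$-2), ($u$-3), ($\ell$-2) make $\chainorder{\ovl{b}}$ agree with $\chainorder{a}$ on $V_a$ and with $\chainorder{\ovl{t}}$ on $\ovl{V_t}$.

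Within this template, I analyze the component structure of $\indG{S\cup T}$. Since $S$, $T$ and $S\cup T$ are all clusters, every connected component $C$ of $\indG{S\cup T}$ is a clique in $G$; consequently $C \cap S$ is either empty or exactly one component of $\indG{S}$, and likewise for $C \cap T$. Denote the components of $\indG{S}$ and $\indG{T}$ indexed by the reference order by $C_1^S, C_2^S,\ldots$ and $C_1^T, C_2^T,\ldots$, with heads and tails $h_i^S, t_i^S, h_i^T, t_i^T$ so that $\mathsf{rep}(S) = \{h_1^S, t_1^S, h_2^S\}$ and $\mathsf{rep}(T) = \{h_1^T, t_1^T, h_2^T\}$. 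The crucial propagation claim is: if $C_1^S$ has any edge in $G$ to some $C_j^T$, then $h_1^S h_1^T \in E(G)$. Indeed, given any edge $uv$ with $u \in C_1^S$ and $v \in C_j^T$, the chain-graph property of $\indG{P,Q}$ on the $P$-side (using $u \le h_1^S$ in the chain order on $P$) gives $h_1^S v \in E(G)$, and the same property on the $Q$-side (using $v \le h_1^T$) then gives $h_1^S h_1^T \in E(G)$. Because $S \cup T$ is a cluster, this forces $C_1^S$ and $C_1^T$ into the same component of $\indG{S \cup T}$ whenever $C_1^S$ touches $T$; symmetrically, no $C_i^S$ with $i \ge 2$ can share a component with $C_1^T$, because that would collapse $C_1^S$ and $C_i^S$ into one clique.

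From this structural picture the first component $C_1$ of $\indG{S \cup T}$ under the reference order is one of $C_1^S$, $C_1^T$, or $C_1^S \cup C_1^T$, depending on which of $h_1^S, h_1^T$ is larger and on whether $C_1^S, C_1^T$ are adjacent in $G$. In every subcase, $D_1 := (\mathsf{rep}(S) \cup \mathsf{rep}(T)) \cap C_1$ equals $\{h_1^S, t_1^S\}$, $\{h_1^T, t_1^T\}$, or $\{h_1^S, t_1^S, h_1^T, t_1^T\}$ accordingly, because $C_1$ is a clique in $G$ while the propagation argument places $h_2^S, h_2^T$ outside $C_1$. A direct computation then yields that the head and tail of $D_1$ under the reference order match those of $C_1$. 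The head of $C_2$ is the reference-order maximum of $(S\cup T)\setminus C_1$, which by the same computation coincides with the head of the second component of $\indG{\mathsf{rep}(S) \cup \mathsf{rep}(T)}$. Thus all three representative entries match, establishing the first identity; the second and third follow by instantiating the template with $(P,Q)=(V_a,\ovl{V_t})$ and $(V_b,\ovl{V_t})$.

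The main obstacle is the propagation claim, since it is the only place where the chain-graph property and the chain-order consistency conditions interact nontrivially; the remainder is short case analysis comparing at most six reference-order values. A handful of degenerate situations ($S$ or $T$ empty, fewer than two components in $\indG{S}$ or $\indG{T}$, $\mathsf{rep}(S)$ or $\mathsf{rep}(T)$ of size less than three, $C_1^S$ isolated from $T$) must be handled separately but all reduce to simpler instances of the same analysis.
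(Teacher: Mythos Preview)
Your proof is correct and follows essentially the same approach as the paper: both arguments hinge on showing that any cross-edge from the top component on one side forces $h_1^S h_1^T \in E(G)$ via the chain-graph property of the relevant cut (your ``propagation claim'' is exactly the paper's Case~\TWO\ contradiction), after which a short case analysis on whether $C_1^S$ and $C_1^T$ merge finishes the job. Your template abstraction covering all three identities at once is a clean organizational choice, but the underlying argument is the same as the paper's.
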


        \begin{proof}
        We only prove that if $X\cup Y$ is also a cluster set of $G$, then $\rep{t}(X\cup Y) = \rep{t}( \rep{t}(X)\cup \rep{t}(Y))$, because the other claims can be shown in the same way.
       Suppose that $(X_1, \ldots, X_\alpha)$ is the component partition of $X$ and $(Y_1^\prime, \ldots, Y_\beta^\prime)$ is the component partition of $Y$, where $\alpha$ and $\beta$ are positive integers and both the component partitions are indexed by $\chainorder{A}$.

        We assume that $\alpha \ge 2$ and $\beta \ge 2$ because the lemma in the other cases can be proved in the same way.
        We denote $x_1^\textsf{h} = \head{t}{X_1}$, $x_1^\textsf{t} = \tail{t}{X_1}$, $x_2^\textsf{h} = \head{t}{X_2}$, $y_1^\textsf{h} = \head{t}{Y_1}$, $y_1^\textsf{t} = \tail{t}{Y_1}$, and $y_2^\textsf{h} = \head{t}{Y_2}$.
        Notice that $x_1^\textsf{t}$ (resp.\ $y_1^\textsf{t}$) may be identical to $x_1^\textsf{h}$ (resp.\ $y_1^\textsf{h}$).
        In addition, recall that $\rep{t}(X) = \{x_1^\textsf{h}, x_1^\textsf{t}, x_2^\textsf{h} \}$ and $\rep{t}(Y) = \{y_1^\textsf{h}, y_1^\textsf{t}, y_2^\textsf{h} \}$.
        To prove the lemma, we show that $X\cup Y \eqrelation{A} \rep{t}(X)\cup \rep{t}(Y)$.

        Let $(C_1, \ldots, C_p)$ be the component partition of $X\cup Y$ and let $(C_1^\prime, \ldots, C_q^\prime)$ be the component partition of $\rep{A}(X)\cup \rep{A}(Y)$, where $p$ and $q$ are positive integers and both the component partitions are indexed by $\chainorder{A}$.
        Our task is to show that $\head{t}{C_1} = \head{t}{C_1^\prime}$, $\tail{t}{C_1} = \tail{t}{C_1^\prime}$, and $\head{t}{C_2} = \head{t}{C_2^\prime}$.
        We consider the following two cases: \ONE~$x_1^\textsf{h}y_1^\textsf{h} \in E(G)$; and \TWO~$x_1^\textsf{h}y_1^\textsf{h} \notin E(G)$.

        \smallskip
		\noindent \textbf{Case~\ONE:}
        Since $X$, $Y$, and $X\cup Y$ are cluster sets of $G$, $X_1 \cup Y_1$ forms a clique of $G$.
        More precisely, $C_1 = X_1 \cup Y_1$ holds; if some vertex $z \in X \cup Y \setminus (X_1 \cup Y_1)$ is in $C_1$, then $z$ is adjacent to all vertices in $X_1 \cup Y_1$, which contradicts the component partitions of $X$ and $Y$.
        Similarly, $C_1^\prime = \{ x_1^\textsf{h}, y_1^\textsf{h}, x_1^\textsf{t}, y_1^\textsf{t} \}$ holds.
        Moreover, we have $\head{t}{C_1}, \head{t}{C_1^\prime} \in \{ x_1^\textsf{h}, y_1^\textsf{h} \}$ and $\head{t}{C_1}, \head{t}{C_1^\prime} \in \{ x_1^\textsf{t}, y_1^\textsf{t} \}$.
        If $\head{t}{C_1} = x_1^\textsf{h}$, then $\head{t}{C_1^\prime} = x_1^\textsf{h}$, and if $\head{t}{C_1} = y_1^\textsf{h}$, then $\head{t}{C_1^\prime} = y_1^\textsf{h}$.
        Thus, we have $\head{t}{C_1} = \head{t}{C_1^\prime}$, and similarly, we have $\tail{t}{C_1} = \tail{t}{C_1^\prime}$.

        We can also observe that $\head{t}{C_2} \in \{x_2^\textsf{h}, y_2^\textsf{h} \}$ and $\head{t}{C_2^\prime} \in \{x_2^\textsf{h}, y_2^\textsf{h} \}$.
        Without loss of generality, suppose that $\head{t}{C_2} = x_2^\textsf{h}$.
        This implies that $y_2^\textsf{h} \chainorder{A} x_2^\textsf{h}$.
        Therefore, we also have $\head{t}{C_2^\prime} = x_2^\textsf{h}$ as $C_1^\prime = \{ x_1^\textsf{h}, y_1^\textsf{h}, x_1^\textsf{t}, y_1^\textsf{t} \}$ and hence $\head{t}{C_2} = \head{t}{C_2^\prime}$.

        \smallskip
		\noindent \textbf{Case~\TWO:}
        In this case, we have $\head{t}{C_1} \in \{x_1^\textsf{h}, y_1^\textsf{h} \}$.
        Without loss of generality, suppose that $\head{t}{C_1}  = x_1^\textsf{h}$, that is, $X_1 \subseteq C_1$.
        We here claim that $C_1 = X_1$ holds.
        Assume for a contradiction that $C_1 \neq X_1$.
        Then, observe that $C_1 = X_1 \cup Y_i$ holds for some $i \in \seg{2}{\beta}$, and thus we have $\head{t}{X_1}\head{t}{Y_j} \in E(G)$.
        Recall that $a$ and $b$ are the children of $t$, $X \subseteq V_a$, and $Y \subseteq V_b$.
        This means that $X_1$ is the subset of $V_a$ and $Y_1, \ldots, Y_\beta$ are the subsets of $\ovl{V_{a}}$.
        By the condition ($\ell$-2) for the lower chain order and the condition ($u$-3) for the upper chain order, we have $\head{a}{X_1}\head{\ovl{a}}{Y_j} \in E(\nodecut{a})$.
        Moreover, since $\nodecut{a}$ is a chain graph, we have $\head{a}{X_1}\head{\ovl{a}}{Y_1} \in E(\nodecut{a})$.
        However, as $\head{a}{X_1} = x_1^\textsf{h}$ and $\head{\ovl{a}}{Y_1} = y_1^\textsf{h}$, this contradicts that $x_1^\textsf{h}y_1^\textsf{h} \notin E(G)$.

        By showing $C_1 = X_1$, we also have $C_1^\prime = \{ x_1^\textsf{h}, x_1^\textsf{t} \}$.
        This means that $\head{t}{C_1} = \head{t}{C_1^\prime}$ and $\tail{t}{C_1} = \tail{t}{C_1^\prime}$.
        In addition, we can see that $\head{t}{C_2} \in \{ y_1^\textsf{h}, x_2^\textsf{h}  \}$.
        If $\head{t}{C_2} = y_1^\textsf{h}$, then $z \chainorder{A} y_1^\textsf{h}$ holds for every vertex $z \in X\cup Y \setminus C_1$, and hence we have $\head{t}{C_2^\prime} = y_1^\textsf{h}$.
        Similarly, if $\head{t}{C_2} = x_2^\textsf{h}$, then $\head{t}{C_2^\prime} = x_2^\textsf{h}$.
        Therefore, $\head{t}{C_2} = \head{t}{C_2^\prime}$.
    \end{proof}

	We now provide a polynomial-time algorithm for \textsc{Induced Cluster Subgraph}.
	Suppose that $(T,\bij)$ is a rooted layout of a graph $G$ with $\mimwT{T}{\bij} \le 1$ and $t$ is a node of $T$.
    We let $\Repset{t} = \{\rep{t}(S_t) : S_t \subseteq V_t\}$ and $\Repset{\ovl{t}} = \{\rep{\ovl{t}}(S_{\ovl{t}}) : S_{\ovl{t}} \subseteq \ovl{V_t} \}$.
	For two sets $R_t \in \Repset{t}$ and $R_{\ovl{t}} \in \Repset{\ovl{t}}$, we define $\clf{t}{R_t}{R_{\ovl{t}}}$ as the function that returns the largest size of a subset $S_t \subseteq V_t$ such that
	\begin{enumerate}
  	\item $\rep{t}(S_t) = R_t$; and
		\item $S_t \cup R_{\ovl{t}}$ is a cluster set of $G$.
	\end{enumerate}
 \medskip
	We let $\clf{t}{R_{t}}{R_{\ovl{t}}} = -\infty$ if there is no subset satisfying the above conditions.
	For each triple of $t \in V(T)$, $R_t \in \Repset{t}$, and $R_{\ovl{t}} \in \Repset{\ovl{t}}$, we compute $\clf{t}{R_{t}}{R_{\ovl{t}}}$ by means of dynamic programming from the leaves to the root $r$ of $T$.
	As $G = G_r$, we obtain the maximum size of cluster sets of $G$ by computing $\min \{\clf{r}{R_{r}}{\emptyset} : R_{r} \in \Repset{r} \}$.
    Notice that, for simplicity, our algorithm computes the size of an optimal solution.
	One can easily modify our algorithm so that it finds the largest cluster set in the same time complexity.

	\smallskip
	\noindent \textbf{The case where $t$ is a leaf of $T$.}
	Denote by $v$ the unique vertex in $V_t$. Then, $\Repset{t} = \{ \emptyset, \{v\} \}$.
    If $R_t = \emptyset$, only $S_t = \emptyset$ satisfies the prescribed conditions for any $R_{\ovl{t}} \in \Repset{\ovl{t}}$.
    If $R_t = \{v\}$, then $S_t = \{v\}$ and we have to check that $\{v\} \cup R_{\ovl{t}}$ is a cluster set of $G$.
    In summary, we have
	\begin{align*}
		\clf{t}{R_{t}}{R_{\ovl{t}}} =
		\begin{cases}
			0 & \text{if $R_t = \emptyset$ and $R_{\ovl{t}}$ is a cluster set of $G$,} \\ 
			1 & \text{if $R_t = \{v\}$ and $\{v\} \cup R_{\ovl{t}}$ is a cluster set of $G$,} \\
			-\infty & \text{otherwise.}
		\end{cases}
	\end{align*}

	\smallskip
	\noindent \textbf{The case where $t$ is an internal node of $T$.}
	Suppose that $t$ has children $a$ and $b$, and $\clf{a}{R_{a}}{R_{\ovl{a}}}$ and $\clf{b}{R_{b}}{R_{\ovl{b}}}$ have already been computed for any $R_a \in \Repset{a}$, $R_{\ovl{a}} \in \Repset{\ovl{a}}$, $R_b \in \Repset{b}$, and $R_{\ovl{b}} \in \Repset{\ovl{b}}$.
    For the largest subset $S_t \subseteq V_t$ that satisfies the prescribed conditions, $S_t$ can be partitioned into two cluster sets $S_t \cap V_a$ and $S_t \cap V_b$.
    In addition, $(S_t \cap V_b)\cup R_{\ovl{t}}$ and $(S_t \cap V_a) \cup R_{\ovl{t}}$ form cluster sets of $\indG{V_{\ovl{a}}}$ and $\indG{V_{\ovl{b}}}$, respectively.
	We guess that $\rep{t}(S_t \cap V_a) = \rep{a}(S_t \cap V_a) = R_a \in \Repset{a}$ and $\rep{t}(S_t \cap V_b) = \rep{b}(S_t \cap V_b) = R_b \in \Repset{b}$.
    By \Cref{lem:rep_composite}, $R_t$ can be represented as follows:
    \begin{align*}
        R_t  & = \rep{t}(S_t) \\
        & = \rep{t}((S_t \cap V_a)\cup (S_t \cap V_b)) \\
        & = \rep{t}(\rep{t}(S_t \cap V_a)\cup \rep{t}(S_t \cap V_b)) \\
        & = \rep{t}(R_a \cup R_b).
	\end{align*}
    To obtain the value $\clf{t}{R_{t}}{R_{\ovl{t}}}$, we calculate the sum of $\clf{a}{R_{a}}{\rep{\ovl{a}}((S_t \cap V_b) \cup R_{\ovl{t}})}$ and $\clf{b}{R_{b}}{\rep{\ovl{b}}((S_t \cap V_a)\cup R_{\ovl{t}}) }$ for each pair $(R_a, R_b)$ such that $R_a \in \Repset{a}$, $R_b \in \Repset{b}$, and $R_t = \rep{t}(R_a \cup R_b)$.
    Combining \Cref{lem:rep_child,lem:rep_composite} with the fact that $\rep{\ovl{a}}(R_{\ovl{t}}) = \rep{\ovl{t}}(R_{\ovl{t}})$, which is observed from the condition ($u$-2) for an upper chain order, it holds that
	\begin{align*}
        \rep{\ovl{a}}((S_t \cap V_b) \cup R_{\ovl{t}}) & = \rep{\ovl{a}}(\rep{\ovl{a}}(S_t \cap V_b) \cup \rep{\ovl{a}}(R_{\ovl{t}})) \\
        & = \rep{\ovl{a}}(\rep{t}(S_t \cap V_b) \cup \rep{\ovl{t}}(R_{\ovl{t}})) \\
        & = \rep{\ovl{a}}(R_{b} \cup R_{\ovl{t}}).
	\end{align*}
    Similarly, we have $\rep{\ovl{b}}((S_t \cap V_a)\cup R_{\ovl{t}}) = \rep{\ovl{b}}(R_{a} \cup R_{\ovl{t}})$.
    We conclude that
	\begin{align*}
		\clf{t}{R_{t}}{R_{\ovl{t}}} = \max_{R_{a} \in \Repset{a} \land R_{b} \in \Repset{b} } \{ & \clf{a}{R_{a}}{\rep{\ovl{a}}(R_{b} \cup R_{\ovl{t}}) } \\
                                        & + \clf{b}{R_{b}}{\rep{\ovl{b}}(R_{a} \cup R_{\ovl{t}}) }  : R_t = \rep{t}(R_a \cup R_b)\}.
	\end{align*}

    We discuss the running time of our algorithm.
    Let $n$ be the number of vertices of a given graph $G$.
    Obviously, the total running time of our algorithm heavily depends on the computation of $\clf{t}{R_{t}}{R_{\ovl{t}}}$ for an internal node $t$ of $T$.
    We first construct $\Repset{t} = \{\rep{t}(S_t) : S_t \subseteq V_t\}$ and $\Repset{\ovl{t}} = \{\rep{\ovl{t}}(S_{\ovl{t}}) : S_{\ovl{t}} \subseteq \ovl{V_t} \}$.
    Notice that, from the definition of $\eqrelation{t}$, the sizes of $\rep{t}(S_t)$ and $\rep{\ovl{t}}(S_{\ovl{t}})$ are at most $3$ for any $S_t \subseteq V_t$ and any $S_{\ovl{t}} \subseteq \ovl{V_t}$, respectively.
    Thus, the construction can be done in $O(n^3)$ time.
    We then initialize $\clf{t}{R_{t}}{R_{\ovl{t}}} = -\infty$ for each pair of $R_t \in \Repset{t}$ and $R_{\ovl{t}} \in \Repset{\ovl{t}}$.
    This initialization takes $O(n^6)$ time.
    Finally, for each triple $R_a \in \Repset{a}$, $R_b \in \Repset{b}$, and $R_{\ovl{t}} \in \Repset{\ovl{t}}$, where $a$ and $b$ are children of $t$, we calculate $\clf{a}{R_{a}}{\rep{\ovl{a}}(R_{b} \cup R_{\ovl{t}}) } + \clf{b}{R_{b}}{\rep{\ovl{b}}(R_{a} \cup R_{\ovl{t}}) }$, and if the result is larger than the current stored value of $\clf{t}{R_{t}}{R_{\ovl{t}}}$ with $R_t = \rep{t}(R_a \cup R_b)$, update $\clf{t}{R_{t}}{R_{\ovl{t}}}$ to the calculated value.
    Notice that $\rep{\ovl{a}}(R_{b} \cup R_{\ovl{t}})$, $\rep{\ovl{b}}(R_{a} \cup R_{\ovl{t}})$, and $\rep{t}(R_a \cup R_b)$ are computed in $O(1)$ time because the sizes of $R_{b} \cup R_{\ovl{t}}$, $R_{a} \cup R_{\ovl{t}}$, and $R_a \cup R_b$ are at most $6$.
    This update is completed in $O(n^9)$ time.
    Since the tree $T$ has at most $O(n)$ nodes, our algorithm runs in $O(n^{10})$ time.
    This completes the proof of \Cref{the:cluster}.

    \subsection{Proof of \Cref{lem:chainorder}} \label{sec:chainorder}

    In this subsection, we show that \Cref{lem:chainorder}, whose proof was postponed in \Cref{sec:cluster}, by providing a polynomial-time algorithm that achieves our goal.

    \subsubsection{Algorithm} \label{sec:chainorder_alg}
        We first obtain a lower chain order $\chainorder{t}$ for every node $t$ of $T$ by constructing a directed graph $H_t^\ell$ with a bottom-up approach in $T$ as follows.
        If $t$ is a leaf of $T$, then $H_t^\ell$ is a graph with a single vertex $\bij^{-1}(t)$.
        For an internal node $t$ with children $a$ and $b$, suppose that $H_{a}^\ell$ and $H_{b}^\ell$ have already been constructed.
        For each of distinct vertices $u,w \in V_t$, a directed edge $(v,w)$ of $H_t^\ell$ is drawn if and only if one of the following conditions is satisfied: $(v,w)$ also exists in $H_{a}^\ell$; $(v,w)$ also exists in $H_{b}^\ell$; or $\Nei{\nodecut{t}}{v} \subset \Nei{\nodecut{t}}{w}$.
        We will explain later how $\chainorder{t}$ is obtained from $H_t^\ell$.

        After computing $\chainorder{t}$ for every node $t$ of $T$, we construct a directed graph $H_t^u$ by a top-down approach in $T$ to obtain an upper chain order $\chainorder{\overline{t}}$.
        For the root $r$ of $T$, $H_r^u$ is a graph with no vertices.
        For a node $t$ of $T$ with $t \neq r$, suppose that $p$ is the parent of $t$ and $H_{p}^u$ has already been constructed.
        The vertex set of $H_t^u$ equals to $\ovl{V_t}$.
        For each of distinct vertices $v,w \in \ovl{V_t}$, a directed edge $(v,w)$ of $H_t^u$ is drawn if and only if $(v,w)$ also exists in $H_{p}^u$ or $\Nei{\nodecut{t}}{v} \subset \Nei{\nodecut{t}}{w}$ holds.

        We will show later that for every node $t$ of $T$, $H_t^\ell$ and $H_t^u$ are directed acyclic graphs.
        Assuming for now that this is true, this leads us to define the \emph{layers} of $H_t^\ell$ and $H_t^u$.
        Let $L_0 = \emptyset$.
        For a positive integer $i$, we denote by $L_i$ the set of vertices with indegree zero of $H_t^\ell - \bigcup_{j \in \seg{0}{i-1} } L_{j}$.
        Note that there is an integer $\alpha \ge 1$ such that $V_t = \bigcup_{i \in \seg{0}{\alpha}} L_i$ because $H_t^\ell$ is a directed acyclic graph.
        Similarly, the layers of $H_t^u$ are defined.

        For each node $t$ of $T$, we construct a lower chain order $\chainorder{t}$ from the layers of $H_t^\ell$ as follows.
        Let $\originorder$ be an arbitrary strict total order of vertices in $G$, which is unchanged during constructing the lower chain orders of all nodes of $T$.
        For a positive integer $i$ such that $L_{i-1} = \emptyset$ and $L_i \neq \emptyset$, remove the smallest vertex from $L_i$ with respect to $\originorder$.
        Iterate this until $L_\alpha = \emptyset$, and define a lower chain order $\chainorder{t}$ so that $v \chainorder{t} w$ for distinct vertices $v,w$ of $H_t^\ell$ if and only if $v$ is removed from the layers of $H_t^\ell$ prior to $w$.
        We construct an upper chain order $\chainorder{\ovl{t}}$ in the same way, but remove the smallest vertex from $L_i$ with respect to the lower chain order $\chainorder{r}$ for the root $r$ of $T$.
       It is clear that $\chainorder{t}$ and $\chainorder{\ovl{t}}$ for each node $t$ of $T$ are computed in polynomial time.

        \subsubsection{Correctness of our algorithm} \label{sec:chainorder_correct}

        We here prove that for every node $t$ of $T$, $H_t^\ell$ and $H_t^u$ are directed acyclic graphs, which is postponed in \Cref{sec:chainorder_alg}.
        \Cref{lem:directed_edges} is an auxiliary claim to show the the correctness of our algorithm.

        \begin{lemma} \label{lem:directed_edges}
            Let $t$ be a node of $T$.
            For vertices $v,w$ of $V_t$, if there exists a directed edge $(v,w)$ in $H_t^\ell$, then $\Nei{\nodecut{t}}{v} \subseteq \Nei{\nodecut{t}}{w}$ holds.
            Similarly, for vertices $v,w$ of $\ovl{V_t}$, if there exists a directed edge $(v,w)$ in $H_t^u$, then $\Nei{\nodecut{t}}{v} \subseteq \Nei{\nodecut{t}}{w}$ holds.
        \end{lemma}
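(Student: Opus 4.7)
The plan is to prove both claims by induction that mirrors the recursive construction of the auxiliary graphs: bottom-up along $T$ for $H_t^\ell$, and top-down along $T$ for $H_t^u$. The edge-drawing rule in each case is a disjunction, one of whose clauses is exactly the desired conclusion (strict neighborhood inclusion with respect to $\nodecut{t}$). So the only real content is to verify that edges \emph{inherited} from a child graph (for the lower case) or from the parent graph (for the upper case) also satisfy the neighborhood-inclusion property with respect to the \emph{new} cut $\nodecut{t}$.

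For $H_t^\ell$, the key elementary identity I would first record is the following: if $a$ is a child of $t$ and $v \in V_a$, then
\[
\Nei{\nodecut{t}}{v} \;=\; \Nei{\nodecut{a}}{v} \cap \ovl{V_t},
\]
because $V_a \subseteq V_t$ forces $\ovl{V_t} \subseteq \ovl{V_a}$, and both sides are just $N_G(v)$ intersected with the appropriate complementary side. The base case is a leaf, where $H_t^\ell$ has a single vertex and no edges, so the statement is vacuous. For the inductive step, an edge $(v,w) \in H_t^\ell$ either comes from $H_a^\ell$, from $H_b^\ell$, or from $\Nei{\nodecut{t}}{v} \subsetneq \Nei{\nodecut{t}}{w}$ directly. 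The last case is immediate; in the first two cases, the induction hypothesis gives $\Nei{\nodecut{a}}{v} \subseteq \Nei{\nodecut{a}}{w}$ (resp.\ for $b$), and intersecting both sides with $\ovl{V_t}$ and applying the identity above yields $\Nei{\nodecut{t}}{v} \subseteq \Nei{\nodecut{t}}{w}$.

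For $H_t^u$, I would use the symmetric identity: if $p$ is the parent of $t$ and $v \in \ovl{V_p}$, then $v \in \ovl{V_t}$ (since $\ovl{V_p} \subseteq \ovl{V_t}$) and
\[
\Nei{\nodecut{t}}{v} \;=\; \Nei{\nodecut{p}}{v} \cap V_t,
\]
because $V_t \subseteq V_p$. The base is the root $r$, where $H_r^u$ is empty and the claim is vacuous. Inductively, an edge $(v,w) \in H_t^u$ is either inherited from $H_p^u$ or arises from the explicit neighborhood-inclusion clause; the second case is immediate, while in the first case both endpoints necessarily lie in $\ovl{V_p}$ (the vertex set of $H_p^u$), so the induction hypothesis applies and intersecting with $V_t$ transports the inclusion to $\nodecut{t}$.

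No serious obstacle is expected; the whole argument is bookkeeping around the two intersection identities. The only mildly delicate point I would flag is that an edge of $H_t^u$ inherited from $H_p^u$ forces both endpoints into $\ovl{V_p}$, so vertices of $\ovl{V_t} \setminus \ovl{V_p} = V_p \setminus V_t$ (i.e., those sitting in the sibling subtree of $t$) can only receive edges via the explicit neighborhood-inclusion clause, for which the conclusion is automatic. Once this observation is in place, the two inductions close cleanly.
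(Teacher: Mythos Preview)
Your proposal is correct and takes essentially the same approach as the paper. The paper merely compresses your induction: it observes directly that any edge $(v,w)$ of $H_t^\ell$ traces back to some descendant node $d$ of $T_t$ where the strict inclusion $\Nei{\nodecut{d}}{v} \subset \Nei{\nodecut{d}}{w}$ was first triggered, and then uses $\ovl{V_t} \subseteq \ovl{V_d}$ (your intersection identity) to carry the inclusion to the cut at $t$; the upper case is handled symmetrically via ancestors and $V_t \subseteq V_a$.
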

        \begin{proof}
            A directed edge $(v,w)$ in $H_t^\ell$ implies that there is a node $d$ of the subtree $T_t$ such that $\Nei{\nodecut{d}}{v} \subset \Nei{\nodecut{d}}{w}$.
            Since $\ovl{V_{t}} \subseteq \ovl{V_{d}}$, we have $\Nei{\nodecut{t}}{v} \subseteq \Nei{\nodecut{t}}{w}$.
            Similarly, a directed edge $(v,w)$ in $H_t^u$ implies that $\Nei{\nodecut{t}}{v} \subset \Nei{\nodecut{t}}{w}$ or there is an ancestor node $a$ of $t$ such that $\Nei{\nodecut{a}}{v} \subset \Nei{\nodecut{a}}{w}$.
            Since $V_{t} \subseteq V_{a}$, we conclude that $\Nei{\nodecut{t}}{v} \subseteq \Nei{\nodecut{t}}{w}$.
        \end{proof}

        \begin{lemma} \label{lem:DAG}
            For every node $t$ of $T$, $H_t^\ell$ and $H_t^u$ are directed acyclic graphs.
        \end{lemma}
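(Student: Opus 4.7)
The plan is to prove both claims by induction on the layout tree $T$: for $H_t^\ell$ I will induct from the leaves toward the root, and for $H_t^u$ I will induct from the root toward the leaves. The pivotal observation, which follows immediately from \Cref{lem:directed_edges}, is that any directed cycle $v_1 \to v_2 \to \cdots \to v_k \to v_1$ in either graph would force the chain $\Nei{\nodecut{t}}{v_1} \subseteq \Nei{\nodecut{t}}{v_2} \subseteq \cdots \subseteq \Nei{\nodecut{t}}{v_k} \subseteq \Nei{\nodecut{t}}{v_1}$, so that all the neighborhoods $\Nei{\nodecut{t}}{v_i}$ must coincide. In particular, no ``newly added'' edge at $t$, whose defining condition is the strict containment $\Nei{\nodecut{t}}{v} \subsetneq \Nei{\nodecut{t}}{w}$, can occur inside such a cycle.

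For $H_t^\ell$, the leaf case is trivial since $H_t^\ell$ has a single vertex and no edges. For an internal node $t$ with children $a$ and $b$, I will assume $H_a^\ell$ and $H_b^\ell$ are acyclic and suppose towards contradiction that $H_t^\ell$ contains a cycle. By the observation above, every edge of this cycle must be inherited from $H_a^\ell$ or from $H_b^\ell$. Since $V_a \cap V_b = \emptyset$ and inherited edges from either child only join vertices within that child's vertex set, the cycle must lie entirely within $V_a$ or entirely within $V_b$, contradicting the inductive hypothesis for $a$ or $b$.

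The argument for $H_t^u$ proceeds symmetrically. The base case is the root $r$, where $H_r^u$ has no vertices. For a non-root $t$ with parent $p$, assuming $H_p^u$ acyclic, any cycle in $H_t^u$ again avoids all strict-containment edges and therefore must consist entirely of edges inherited from $H_p^u$; since such inherited edges only connect pairs of vertices in $\ovl{V_p} \subseteq \ovl{V_t}$, all vertices of the cycle lie in $\ovl{V_p}$, so the cycle already exists in $H_p^u$, a contradiction.

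I do not anticipate a substantial obstacle: the entire argument hinges on the dichotomy ``strict containment is required for a new edge, but a cycle forces equality of neighborhoods everywhere.'' The only bookkeeping to execute carefully is to justify that edges inherited from a child (resp.\ the parent) truly remain within that child's (resp.\ parent's) vertex set, which is immediate from the construction of $H_t^\ell$ and $H_t^u$.
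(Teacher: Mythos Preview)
Your proposal is correct and takes essentially the same approach as the paper: both argue by induction along $T$ (bottom-up for $H_t^\ell$, top-down for $H_t^u$), use \Cref{lem:directed_edges} to deduce that a hypothetical cycle forces equality of all the neighborhoods $\Nei{\nodecut{t}}{\cdot}$ along it, and then obtain a contradiction with the induction hypothesis via the disjointness of the relevant vertex sets. The only cosmetic difference is the order of the contradiction: the paper first argues that the cycle must contain a strict-containment edge and then derives $\Nei{\nodecut{t}}{c_1}\subseteq\Nei{\nodecut{t}}{c_h}\subsetneq\Nei{\nodecut{t}}{c_1}$, whereas you first rule out all strict-containment edges and then push the cycle back into $H_a^\ell$, $H_b^\ell$, or $H_p^u$.
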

        \begin{proof}
            We show that $H_t^\ell$ is a directed acyclic graph by induction on a node $t$ from leaves to the root of $T$.
            If $t$ is a leaf of $T$, then $H_t^\ell$ is clearly a directed acyclic graph.
            For an internal node $t$ of $T$ with children $a$ and $b$, assume for a contradiction that $H_t^\ell$ has a directed cycle $C = \langle c_1, c_2, \ldots, c_h, c_1 \rangle$ of length $h \ge 2$.
            By the induction hypothesis, $H_{a}^\ell$ and $H_{b}^\ell$ are directed acyclic graphs, and hence $C$ contains at least one directed edge $(v,w)$ such that $\Nei{\nodecut{t}}{v} \subset \Nei{\nodecut{t}}{w}$.
            Without loss of generality, suppose that $\Nei{\nodecut{t}}{c_h} \subset \Nei{\nodecut{t}}{c_1}$.
            On the other hand, a directed edge $(c_i, c_{i+1})$ for each $i \in \segsingle{h-1}$ suggests that $\Nei{\nodecut{t}}{c_i} \subseteq \Nei{\nodecut{t}}{c_{i+1}}$ from \Cref{lem:directed_edges}.
            Therefore, we have $\Nei{\nodecut{t}}{c_1} \subseteq \Nei{\nodecut{t}}{c_h}$, a contradiction.
            In the same way, it can be shown that $H_t^u$ is a directed acyclic graph.
        \end{proof}

        We next show that the strict total orders generated by our algorithm are indeed lower chain orders and upper chain orders.
        To this end, we give the following useful lemma.

        \begin{lemma} \label{lem:same_layer}
            For a node $t$ of $T$, let $v$ and $w$ be vertices that are in the same layer of $H_t^\ell$.
            Then, $\Nei{\nodecut{d}}{v} = \Nei{\nodecut{d}}{w}$ holds for every node $d \in V(T_t)$ such that $v,w \in V_d$.
            Moreover, if $t$ has the parent $p$, then $v$ and $w$ are also in the same layer of $H_{p}^\ell$.
        \end{lemma}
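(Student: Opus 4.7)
The plan is to treat the two assertions of \Cref{lem:same_layer} separately.

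For the first assertion, I argue by contradiction. Suppose some $d \in V(T_t)$ with $v, w \in V_d$ satisfies $\Nei{\nodecut{d}}{v} \neq \Nei{\nodecut{d}}{w}$. Since $\mimwT{T}{\bij} \le 1$, \Cref{prop:mimone_chain} makes $\nodecut{d}$ a chain graph, so the two neighbourhoods are nested; assume $\Nei{\nodecut{d}}{v} \subset \Nei{\nodecut{d}}{w}$. The construction rule for $H_d^\ell$ then introduces the edge $(v, w)$, and because every ancestor of $d$ inherits all edges from the $H^\ell$'s of its children, this edge persists in $H_t^\ell$. But then $w$ must sit in a strictly later layer than $v$ in $H_t^\ell$, contradicting the hypothesis.

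For the second assertion, I first record two facts. (i) Applying the first assertion at $d = t$ yields $\Nei{\nodecut{t}}{v} = \Nei{\nodecut{t}}{w}$; intersecting with $\ovl{V_p} \subseteq \ovl{V_t}$ gives $\Nei{\nodecut{p}}{v} = \Nei{\nodecut{p}}{w}$. (ii) Restricted to $V_t \times V_t$, the edge set of $H_p^\ell$ coincides with that of $H_t^\ell$. Indeed, given $x, y \in V_t$ with $\Nei{\nodecut{p}}{x} \subset \Nei{\nodecut{p}}{y}$, the chain-graph property at $\nodecut{t}$ makes $\Nei{\nodecut{t}}{x}$ and $\Nei{\nodecut{t}}{y}$ comparable; equality would force $\Nei{\nodecut{p}}{x} = \Nei{\nodecut{p}}{y}$ and the reverse inclusion would force $\Nei{\nodecut{p}}{x} \supseteq \Nei{\nodecut{p}}{y}$, so $\Nei{\nodecut{t}}{x} \subset \Nei{\nodecut{t}}{y}$ and $(x, y) \in H_t^\ell$ already.

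With (i) and (ii) in hand, I proceed by strong induction on the layer index $i$ of $v, w$ in $H_t^\ell$. For $i = 1$, both $v$ and $w$ have empty in-neighbourhoods from $V_t$ by (ii), while by (i) the rule at $p$ contributes identical $V_{t'}$-in-neighbours to both; hence their $H_p^\ell$-in-neighbourhoods agree and they share a layer. For the inductive step, the hypothesis applied pairwise implies that every vertex in $L_j^t$ with $j < i$ has a common $H_p^\ell$-layer, denoted $\phi_p(j)$; moreover, since each vertex in $L_j^t$ possesses an in-neighbour in $L_{j-1}^t$ that persists in $H_p^\ell$, the sequence $\phi_p$ is strictly increasing. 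Consequently the largest $H_p^\ell$-layer achieved by any $V_t$-in-neighbour of $v$ equals $\phi_p(i-1)$, and the same holds for $w$; combined with the identical $V_{t'}$-contributions, this forces $v$ and $w$ into a common $H_p^\ell$-layer.

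The principal obstacle is that $v$ and $w$ may have genuinely different in-neighbour sets within $V_t$ in $H_t^\ell$---for instance when they reside in disjoint subtrees below $t$---so no vertex-wise matching between their predecessors is available. The induction sidesteps this difficulty by tracking only the maximum layer of the in-neighbour sets and invoking the strict monotonicity of $\phi_p$.
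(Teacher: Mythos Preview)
Your proof is correct. For the first assertion your argument coincides with the paper's (both derive the contradictory edge $(v,w)$ or $(w,v)$ in $H_t^\ell$; you just make the nesting step via the chain-graph property explicit).

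For the second assertion your route is genuinely different from the paper's. The paper argues by contradiction: assuming $v\in L_j^p$ and $w\in L_k^p$ with $j<k$, it finds a vertex $x\in L_j^p$ with a directed path $P_{x,w}$ in $H_p^\ell$, and then does an induction on the $H_t^\ell$-layer index $i$ of $v,w$; in the base case it shows some edge of $P_{x,w}$ must be new at $p$ and invokes \Cref{lem:directed_edges} to obtain $\Nei{\nodecut{p}}{v}\subset\Nei{\nodecut{p}}{w}$, contradicting the first assertion, and in the inductive step it manufactures a second path $P_{y,v}$ and derives a contradiction from $x,y$ landing in the same $H_p^\ell$-layer. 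Your argument is instead direct: you isolate the structural fact (ii) that $H_p^\ell$ restricted to $V_t\times V_t$ equals $H_t^\ell$ (which the paper never states), combine it with (i) to see that $v$ and $w$ have identical $V_{t'}$-in-neighbourhoods in $H_p^\ell$, and then compute their $H_p^\ell$-layers as $1+\max(\phi_p(i-1),m)$ via a monotone map $\phi_p$. Your approach avoids \Cref{lem:directed_edges} entirely and yields the cleaner byproduct that the whole layer $L_i^t$ is sent to a single $H_p^\ell$-layer through a strictly increasing map; the paper's approach trades this structural statement for path-chasing that stays closer to the DAG's combinatorics.
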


        \begin{proof}
            For the former claim, assume for a contradiction that there exists a node $d \in V(T_t)$ such that $v,w \in V_d$ and $\Nei{\nodecut{d}}{v} \neq \Nei{\nodecut{d}}{w}$.
            Then, the directed edge $(v,w)$ or $(w,v)$ is drawn in $H_t^\ell$ from the construction of $H_t^\ell$.
            This contradicts that $v$ and $w$ are in the same layer of $H_t^\ell$.

            We next show that the latter claim of the lemma.
            Assume for a contradiction that $v$ and $w$ are in different layers in $H_p^\ell$.
            Without loss of generality, suppose that $v \in L_j^p$ and $w \in L_k^p$ for $1 \le j < k \le \alpha_p$, where $L_i^p$ is the $i$-th layer of $H_p^\ell$ and $\alpha_p$ is the number of the layers of $H_p^\ell$.
            Then, for some vertex $x \in L_j^p$, there exists a directed path $P_{x,w}$ from $x$ to $w$; otherwise, there exists a vertex $z$ in the layer $L_{k^\prime}^p$ with $j < k^\prime \le k$ such that $z$ has indegree zero in $H_t^\ell - \bigcup_{j^\prime \in \seg{0}{j-1} } L_{j^\prime}^p$, which contradicts the definition of layers.

            Suppose that $v,w \in L_i$, where $L_i$ is the layer of $H_{t}^\ell$, $i\in \segsingle{\alpha}$, and $\alpha$ is the number of the layers of $H_{t}^\ell$.
            We lead a contradiction by induction on $i$.

            Consider the case where $i=1$.
            Then, there is no directed path from $x$ to $w$ in $H_t^\ell$.
            This implies that some directed edge of $P_{x,w}$ in $H_p^\ell$ is added when $H_p^\ell$ is constructed.
            Combined with \Cref{lem:directed_edges}, we have $\Nei{\nodecut{p}}{x} \subset \Nei{\nodecut{p}}{w}$.
            Since $v$ and $x$ are in the same layer of $H_p^\ell$, we also have $\Nei{\nodecut{p}}{v} = \Nei{\nodecut{p}}{x}$ from the former claim of \Cref{lem:same_layer} and hence $\Nei{\nodecut{p}}{v} \subset \Nei{\nodecut{p}}{w}$.
            On the other hand, $\Nei{\nodecut{t}}{v} = \Nei{\nodecut{t}}{w}$ holds because $v$ and $w$ are in the same layer of $H_t^\ell$.
            It follows from $\ovl{V_p} \subseteq \ovl{V_t}$ that we have $\Nei{\nodecut{p}}{v} = \Nei{\nodecut{p}}{w}$, a contradiction.

            Consider next the case where $i>1$ and assume that the latter claim of \Cref{lem:same_layer} holds for all pairs of two vertices in the layer $L_{i^\prime}$ of $H_t^\ell$ with $1 \le i^\prime < i$.
            If there is no directed path from $x$ to $w$ in $H_t^\ell$, then a contradiction occurs in the same reason above.
            Assume that there is a directed path from $x$ to $w$ in $H_t^\ell$, and let $L_{i^{\prime\prime}}$ be a layer of $H_t^\ell$ such that $x \in L_{i^{\prime\prime}}$ and $1 \le i^{\prime\prime} < i$.
            In addition, let $y$ be an arbitrary vertex in $L_{i^{\prime\prime}}$ such that there exists a directed path $P_{y,v}$ from $y$ to $v$ in $H_t^\ell$.
            Notice that the existence of $y$ is guaranteed in the same reason as that of $P_{x,w}$ in $H_p^\ell$.
            From the induction hypothesis, $x$ and $y$ are in the same layer in $H_p^\ell$.
            Therefore, all of $v$, $x$, and $y$ are in the same layer in $H_p^\ell$, which contradicts that there exists the directed path $P_{y,v}$ in $H_t^\ell$ and hence $P_{y,v}$ also exists in $H_p^\ell$ from the construction of $H_p^\ell$.
        \end{proof}

        For each node $t$ of $T$, one can have a lemma for the graph $H_t^u$ analogous to \Cref{lem:same_layer}.
        The next lemma completes the correctness of \Cref{lem:chainorder}.

        \begin{lemma}
            For each node $t$ of $T$, our algorithm generates a lower chain order $\chainorder{t}$ and an upper chain order $\chainorder{\ovl{t}}$.
        \end{lemma}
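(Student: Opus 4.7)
The plan is to verify each axiom of the lower and upper chain orders in turn, leaning on the directed-acyclicity and neighborhood lemmas already established. I would handle the two orders in parallel, since the arguments for $\chainorder{\ovl{t}}$ closely mirror those for $\chainorder{t}$.

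First, I would establish ($\ell$-1), that $\chainorder{t}$ is a chain order of $V_t$. Take distinct $v,w \in V_t$ with $v \chainorder{t} w$. If $v$ and $w$ lie in the same layer of $H_t^\ell$, then \Cref{lem:same_layer} (applied with $d=t$) gives $\Nei{\nodecut{t}}{v}=\Nei{\nodecut{t}}{w}$, so the chain-order condition holds trivially. Otherwise $v$ lies in a strictly earlier layer. Since $\nodecut{t}$ is a chain graph by \Cref{prop:mimone_chain}, the neighborhoods $\Nei{\nodecut{t}}{v}$ and $\Nei{\nodecut{t}}{w}$ are comparable. If $\Nei{\nodecut{t}}{w} \subsetneq \Nei{\nodecut{t}}{v}$, then the construction of $H_t^\ell$ inserts the directed edge $(w,v)$, which would force $v$ to a strictly later layer than $w$; hence $\Nei{\nodecut{t}}{v} \subseteq \Nei{\nodecut{t}}{w}$. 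The argument for ($u$-1) is identical with $H_t^u$ in place of $H_t^\ell$.

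Next, I would prove ($\ell$-2). The easy case is when $v,w \in V_c$ share a layer in $H_c^\ell$: by the second half of \Cref{lem:same_layer} they also share a layer in $H_t^\ell$, and within a common layer both orderings default to $\originorder$, so the two rankings coincide. The remaining case is that $v,w$ lie in distinct layers of $H_c^\ell$, say $v$ strictly earlier. Here I would argue that the relative layer order is preserved in $H_t^\ell$ by combining two facts: (a) every edge of $H_c^\ell$ is inherited by $H_t^\ell$, so the chain of incoming edges witnessing that $w$ is deeper than $v$ in $H_c^\ell$ still exists in $H_t^\ell$; and (b) any newly added edge $(x,v)$ coming from the $\nodecut{t}$-criterion with $x \in V_b$ satisfies $\Nei{\nodecut{t}}{x} \subsetneq \Nei{\nodecut{t}}{v}$, and the chain-graph comparability of $\Nei{\nodecut{t}}{v}$ with $\Nei{\nodecut{t}}{w}$ — together with the inclusion $\Nei{\nodecut{c}}{v} \subseteq \Nei{\nodecut{c}}{w}$ already obtained from ($\ell$-1) applied to $c$ — forces $(x,w)$ to be an edge of $H_t^\ell$ as well, so $w$ is delayed at least as much as $v$. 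Since the $\originorder$-tiebreaker inside each layer is the same globally, this yields $v \chainorder{c} w \iff v \chainorder{t} w$.

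Finally, I would handle the upper-order axioms. ($u$-2) follows from the symmetric version of \Cref{lem:same_layer} for $H_t^u$ (which is constructed top-down and whose vertex set $\ovl{V_t}$ is strictly contained in $\ovl{V_c}$ for a child $c$ of $t$); the same reasoning as in ($\ell$-2) applies, noting that edges are inherited along the top-down construction. For ($u$-3), I would exploit the global tiebreaker: recall that upper chain orders use $\chainorder{r}$ as the tiebreak within a layer, where $r$ is the root. Since $\chainorder{r}$ is itself a lower chain order on $V=V_r$, the restriction of $\chainorder{\ovl{t}}$ to $\ovl{V_t}\cap V_p$ is governed by the same layering of a DAG $H_t^u$ that differs from $H_p^u$ only by the inheritance from the parent, and the same argument as in ($\ell$-2) shows consistency with $\chainorder{p}$.

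The main obstacle will be the layer-preservation step inside ($\ell$-2) (and its analogue for upper orders): showing that introducing the new $\nodecut{t}$-edges into $H_t^\ell$ cannot flip the relative order of two vertices of $V_c$. Everything else is essentially bookkeeping around the two lemmas already in the excerpt, but this step requires the careful chain-graph comparability argument described above to rule out the scenario where an added incoming edge to $v$ from $V_b$ has no analogue targeting $w$.
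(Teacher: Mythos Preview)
Your treatment of ($\ell$-1) and ($u$-1) is fine (and in fact slightly slicker than the paper's: you invoke chain-graph comparability directly rather than chasing layers). The argument for ($\ell$-2) in the different-layer case is shaky but salvageable; the paper does not try to control new incoming edges to $v$ at all, and instead picks a vertex $x$ in the same $H_c^\ell$-layer as $v$ that has a directed path to $w$, then uses the second half of \Cref{lem:same_layer} to say $x$ and $v$ still share a layer in $H_t^\ell$, so the inherited path forces $w$ strictly later. Your (a)+(b) sketch does not obviously yield ``$w$ is delayed at least as much as $v$'' without this same-layer preservation fact applied to the endpoints of the witnessing path.

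The real gap is ($u$-3). Your description ``$H_t^u$ \dots differs from $H_p^u$ only by the inheritance from the parent'' misidentifies the objects: the vertices in $\ovl{V_t}\cap V_p$ are exactly the vertices of the sibling $t'$ of $t$, and none of them lie in $\ovl{V_p}$, so they are absent from $H_p^u$ entirely. Condition ($u$-3) is comparing the upper DAG $H_t^u$ (restricted to $V_{t'}$) against the \emph{lower} order $\chainorder{p}$ coming from $H_p^\ell$; there is no inheritance relation between these two DAGs, and the ($\ell$-2)-style argument does not transfer. The paper handles this by contradiction: assuming $v \chainorder{\ovl{t}} w$ but $w \chainorder{p} v$, it finds a directed $H_t^u$-path to $w$ whose last edge must be a \emph{newly added} $\nodecut{t}$-edge (since $w\notin\ovl{V_p}$), extracts a witness $y\in V_t$ with $wy\in E(G)$ and $vy\notin E(G)$, and then observes that $w \chainorder{p} v$ together with ($\ell$-2) gives $w \chainorder{t'} v$, hence $\Nei{\nodecut{t'}}{w}\subseteq\Nei{\nodecut{t'}}{v}$, which contradicts the existence of $y\in\ovl{V_{t'}}$. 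You are missing this sibling-based contradiction entirely.
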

        \begin{proof}
            Consider the strict total order $\chainorder{t}$ generated by our algorithm.
            If $t$ is a leaf of $T$, then it is clear that $\chainorder{t}$ is a lower chain order.
            We assume that $t$ is an internal node of $T$.

            Let $v_1, v_2, \ldots, v_n$ be vertices in $V_t$, where $n = |V_t|$, that are removed from $H_t^\ell$ in this order by our algorithm.
            Recall that $v_i \chainorder{t} v_j$ holds for any $1 \le i < j \le n$.
            To show that $\chainorder{t}$ satisfies the condition ($\ell$-1), it suffices to claim that $\Nei{\nodecut{t}}{v_i} \subseteq \Nei{\nodecut{t}}{v_{i+1}}$ for each $i \in \segsingle{n-1}$.

            If $v_{i}$ and $v_{i+1}$ are in the same layer, we have $\Nei{\nodecut{t}}{v_i} = \Nei{\nodecut{t}}{v_{i+1}}$ from \Cref{lem:same_layer}.
            Consider the case where $v_{i}$ and $v_{i+1}$ are in different layers of $H_t^\ell$.
            If $v_{i}$ is in a layer $L_j$ for some $j \in \segsingle{\alpha-1}$, then $v_{i+1}$ is in a layer $L_{j+1}$.
            Moreover, $L_j$ has a vertex $x$ such that a directed edge $(x, v_{i+1})$ is drawn in $H_t^\ell$; otherwise, contradicting that $v_{i+1}$ is in $L_{j+1}$.
            From \Cref{lem:directed_edges}, the directed edge $(x, v_{i+1})$ implies that $\Nei{\nodecut{t}}{x} \subseteq \Nei{\nodecut{t}}{v_{i+1}}$.
            Thus, since $x$ and $v_i$ are in the same layer and hence $\Nei{\nodecut{t}}{v_i} = \Nei{\nodecut{t}}{x}$ holds from \Cref{lem:same_layer}, we have $\Nei{\nodecut{t}}{v_i} \subseteq \Nei{\nodecut{t}}{v_{i+1}}$.

            We next show that $\chainorder{t}$ satisfies the condition ($\ell$-2).
            Suppose that the node $t$ of $T$ has a child $c$.
            By symmetry, we only claim that for any distinct vertices $v,w \in V_c$, we have $v \chainorder{t} w$ if $v \chainorder{c} w$ holds.

            If $v$ and $w$ are in the same layer of $H_c^\ell$, then $v \chainorder{c} w$ means that $v \chainorder{0} w$.
            Moreover, $v$ and $w$ are also in the same layer of $H_t^\ell$ from \Cref{lem:same_layer}.
            Thus, we also have $v \chainorder{t} w$.
            Consider the case where $v$ and $w$ are in different layers of $H_c^\ell$.
            Analogous to the proof of \Cref{lem:same_layer}, there exists a vertex $x$ such that $x,v$ are in the same layer of $H_c^\ell$ and there exists a directed path $P_{x,w}$ from $x$ to $w$ in $H_c^\ell$.
            From \Cref{lem:same_layer}, this means that $x,v$ are in the same layer $L_i$ of $H_t^\ell$ for some $i \in \segsingle{\alpha-1}$.
            Moreover, since the directed path $P_{x,w}$ also exists in $H_t^\ell$ from the construction of $H_t^\ell$, $w$ is contained in a layer of $H_t^\ell$ after $L_i$.
            Therefore, we have $v \chainorder{t} w$.
            We conclude that the strict total order $\chainorder{t}$ generated by the above algorithm is indeed a lower chain order.

            In the same way as above, one can also verify that the strict total order $\chainorder{\ovl{t}}$ generated by our algorithm satisfies the conditions ($u$-1) and ($u$-2).
            We show that $\chainorder{\ovl{t}}$ satisfies the condition ($u$-3).
            Suppose that $t$ has the parent $p$.
            By symmetry, we only claim that for any distinct vertices $v,w \in V_p$, we have $v \chainorder{p} w$ if $v \chainorder{\ovl{t}} w$ holds.

            If $v$ and $w$ are in the same layer of $H_t^u$, then $v \chainorder{r} w$ holds from the construction of $\chainorder{\ovl{t}}$, where $r$ is the root of $T$.
            Moreover, since $v,w \in V_p$, we have $v,w \in V_{p^\prime}$ for every ancestor node $p^\prime$ of $t$.
            Applying the condition ($\ell$-2) for every node on the path from $r$ to $p$ iteratively, we have $v \chainorder{p} w$.
            Consider next the case where $v$ and $w$ are in different layers of $H_t^u$.
            Assume for a contradiction that $v \chainorder{\ovl{t}} w$ and $w \chainorder{p} v$ hold.
            Let $x$ be a vertex in $\ovl{V_t}$ such that $v$ and $x$ are in the same layer of $H_t^u$ and there exists a directed path $P_{x,w}$ from $x$ to $w$ in $H_t^u$.
            The existence of $x$ can be shown as in the proof of \Cref{lem:same_layer}.
            Since $w \in \ovl{V_t}$ and $w \notin \ovl{V_p}$, there is a vertex $w^\prime \in \ovl{V_t} \setminus \{w\}$ such that the directed edge $(w^\prime, w)$ in $P_{x,w}$ is newly added when $H_t^u$ is constructed.
            This implies that $\Nei{\nodecut{t}}{w^\prime} \subset \Nei{\nodecut{t}}{w}$.
            Applying \Cref{lem:directed_edges} along $P_{x,w}$, we also have $\Nei{\nodecut{t}}{x} \subset \Nei{\nodecut{t}}{w}$.
            In addition, since $v$ and $x$ are in the same layer of $H_t^u$, we have $\Nei{\nodecut{t}}{v} = \Nei{\nodecut{t}}{x}$ from \Cref{lem:same_layer} and hence $\Nei{\nodecut{t}}{v} \subset \Nei{\nodecut{t}}{w}$.
            It follows from $\Nei{\nodecut{t}}{v}, \Nei{\nodecut{t}}{w} \subseteq V_t$ that we obtain a vertex $y \in V_t$ such that $wy \in E(G)$ and $vy \notin E(G)$.
            On the other hand, consider a child $t^\prime$ of $p$ different from $t$.
            Observe that $v,w \in \ovl{V_t} \cap V_p$ means $v,w \in V_{t^\prime}$.
            Moreover, the condition ($\ell$-2) ensures that $w \chainorder{t^\prime} v$ from the assumption that $w \chainorder{p} v$.
            Thus, we have $\Nei{\nodecut{t^\prime}}{w} \subseteq \Nei{\nodecut{t^\prime}}{v}$ from the condition ($\ell$-1).
            However, since $y \in V_t$ means that $y \in \ovl{V_{t^\prime}}$, this contradicts that $wy \in E(G)$ and $vy \notin E(G)$.
        \end{proof}

    \subsection{\textsc{Induced Polar Subgraph}} \label{sec:polar}

    Recall that a graph $G=(V,E)$ is a polar graph if $V$ can be partitioned into two vertex sets $S_1$ and $S_2$ such that $\indG{S_1}$ is a cluster graph and $\indG{S_2}$ is a co-cluster graph.
    The \textsc{Polarity} problem asks whether a given graph is a polar graph.
    \textsc{Polarity} is known to be NP-complete for general graphs~\cite{ChernyakC86,Farrugia04}, and as far as we know, the complexity status of \textsc{Polarity} on perfect graphs is still open.
    Obviously, \textsc{Polarity} is solvable if \textsc{Induced Polar Subgraph} is also solvable.
    We here give a sketch of a polynomial-time algorithm for \textsc{Induced Polar Subgraph} that works on graphs with mim-width~$1$, including various subclasses of perfect graphs.

	\begin{theorem}\label{the:polarity}
		Given a graph and its rooted layout of mim-width at most $1$, \textsc{Induced Polar Subgraph} is solvable in polynomial time.
	\end{theorem}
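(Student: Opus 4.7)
The plan is to extend the dynamic programming scheme from the proof of \Cref{the:cluster} so that two independent sub-structures are tracked in parallel. Recall that a polar decomposition $V(G[S]) = S_1 \cup S_2$ requires $G[S_1]$ to be a cluster graph and $G[S_2]$ to be a co-cluster graph, while the edges between $S_1$ and $S_2$ are unrestricted; the two constraints therefore decouple, and the co-cluster constraint on $G[S_2]$ is equivalent to the cluster constraint on $\ovl{G}[S_2]$.

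By \Cref{prop:mimone_comp}, the given rooted layout $(T,\bij)$ also has mim-width at most~$1$ with respect to $\ovl{G}$. Hence the machinery of \Cref{sec:cluster} — chain orders, representatives of size at most~$3$, and \Cref{lem:rep_child,lem:rep_composite} — applies verbatim to cluster sets of $\ovl{G}$, i.e.\ to co-cluster sets of $G$. Moreover, if $\chainorder{t}$ is a chain order of $V_t$ in $G$, its reverse is a chain order of $V_t$ in $\ovl{G}$, so \Cref{lem:chainorder} (run once on $G$, then reversed at every node) yields coherent lower and upper chain orders for both $G$ and $\ovl{G}$ at every node of $T$. I will write $\mathsf{rep}_t^{\mathsf{cl}}$ and $\mathsf{rep}_t^{\mathsf{co}}$ for the corresponding representative maps, and likewise for $\ovl{t}$.

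I would then augment the DP state at $t$ from a pair to a quadruple $(R_1^t, R_2^t, R_1^{\ovl{t}}, R_2^{\ovl{t}})$, where the subscripts~$1$ and~$2$ refer to the guessed cluster part and co-cluster part respectively. The table $\polf{t}{(R_1^t,R_2^t)}{(R_1^{\ovl{t}},R_2^{\ovl{t}})}$ stores the largest $|S_1 \cup S_2|$ such that $S_1, S_2 \subseteq V_t$ are disjoint, $\mathsf{rep}_t^{\mathsf{cl}}(S_1) = R_1^t$, $\mathsf{rep}_t^{\mathsf{co}}(S_2) = R_2^t$, the set $S_1 \cup R_1^{\ovl{t}}$ is a cluster set of $G$, and $S_2 \cup R_2^{\ovl{t}}$ is a co-cluster set of $G$. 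The optimum is $\max \polf{r}{(R_1^r, R_2^r)}{(\emptyset, \emptyset)}$ at the root~$r$.

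At an internal node $t$ with children $a,b$, the recurrence guesses how $S_1$ and $S_2$ split across $V_a$ and $V_b$, applies \Cref{lem:rep_composite} twice — once in $G$ to combine cluster representatives and once in $\ovl{G}$ to combine co-cluster representatives — and applies \Cref{lem:rep_child} twice to translate $R_1^{\ovl{t}}, R_2^{\ovl{t}}$ into the children's upper representatives. Disjointness of $S_1$ and $S_2$ is enforced locally at the leaves, where the unique vertex is assigned to at most one of the two parts. Since every representative has size at most~$3$, the DP has polynomially many states and each transition is evaluated in constant time per guess, yielding a polynomial overall running time by the same accounting as in \Cref{the:cluster}. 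The main obstacle I anticipate is verifying that the chain orders used simultaneously in $G$ and in $\ovl{G}$ satisfy the coherence conditions $(\ell$-$2)$, $(u$-$2)$, and $(u$-$3)$ for both graphs at once; this should follow because reversing a chain order at every node preserves all three conditions, so the algorithm of \Cref{sec:chainorder_alg}, executed on $G$ and then reversed to yield the orders for $\ovl{G}$, provides a single family of compatible orders driving both halves of the DP.
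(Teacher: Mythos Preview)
Your proposal is correct and follows essentially the same approach as the paper: the paper also decouples the polar constraint into a cluster part tracked via $\eqrelation{A}$ and a co-cluster part tracked via the reversed chain order on $\ovl{G}$ (formalised there as $\eqrelationcc{A}$), then runs the DP over pairs of representatives with the same recurrence and the same leaf-level enforcement of disjointness. Your anticipated obstacle about coherence of the reversed orders is exactly what the paper handles implicitly by defining $\eqrelationcc{A}$ through $\chainorder{A}^{\mathsf{rev}}$, and your reasoning that reversal preserves $(\ell\text{-}2)$, $(u\text{-}2)$, $(u\text{-}3)$ is sound.
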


    We adjust the algorithm for \textsc{Induced Cluster Subgraph} in \Cref{sec:cluster}.
    Let $G$ be a graph and $\baseorder_A$ be a strict total order on $A \subseteq V(G)$.
    A strict total order $\baseorder_A^\textsf{rev}$ is called a \emph{reverse order} of $\baseorder_A$ if for any distinct vertices $u,v \in A$, $u \baseorder_A w$ if and only if $w \baseorder_A^\textsf{rev} u$.

    For a non-empty vertex subset $A$ of $G$ such that $\mimwe{A} \le 1$, we denote by $\chainorder{A}^\textsf{rev}$ and $\chainorder{\ovl{A}}^\textsf{rev}$ the reverse orders of chain orders $\chainorder{A}$ and $\chainorder{\ovl{A}}$, respectively.
    We write $S_A \eqrelationcc{A} S_A^\prime$ if $S_A \eqrelation{\ovl{G}, \baseorder_{A}^\textsf{rev}} S_A^\prime$.
    In other words, for the component partition $(C_1,C_2,\ldots, C_p)$ of $S_A$ over $\ovl{G}$ and the component partition $(C_1^\prime,C_2^\prime,\ldots, C_q^\prime)$ of $S_A^\prime$ over $\ovl{G}$, where $p$ and $q$ are positive integers and both the component partitions are indexed by $\chainorder{A}^\textsf{rev}$, it holds that $\heador{C_1}{\chainorder{A}^\textsf{rev}} = \heador{C_1^\prime}{\chainorder{A}^\textsf{rev}}$, $\tailor{C_1}{\chainorder{A}^\textsf{rev}} = \tailor{C_1^\prime}{\chainorder{A}^\textsf{rev}}$, and $\heador{C_2}{\chainorder{A}^\textsf{rev}} = \heador{C_2^\prime}{\chainorder{A}^\textsf{rev}}$.

    We show the following lemma analogous to \Cref{lem:rep_equivalent}.
	\begin{lemma}\label{lem:repcc_equivalent}
		For a vertex subset $A$ of a graph $G$ such that $\mimwe{A} \le 1$, let $S_A, S_A^\prime \subseteq A$ be co-cluster sets of $G$ with $S_A \eqrelationcc{A} S_{A}^\prime$ and let $S_{\ovl{A}}$ be any subset of $\ovl{A}$.
		Then, $S_A \cup S_{\ovl{A}}$ is a co-cluster set of $G$ if and only if $S_A^\prime \cup S_{\ovl{A}}$ is a co-cluster set of $G$.
	\end{lemma}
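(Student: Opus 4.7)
The plan is to reduce Lemma~\ref{lem:repcc_equivalent} to Lemma~\ref{lem:rep_equivalent} by passing to the complement graph $\ovl{G}$. Three observations drive the reduction. First, a vertex subset $S$ is a co-cluster set of $G$ if and only if $\ovl{G}[S]$ is a cluster, i.e., $S$ is a cluster set of $\ovl{G}$; in particular, connected components in $\ovl{G}[S]$ correspond to the ``co-components'' used implicitly in the co-cluster partition of $G[S]$. Second, by Proposition~\ref{prop:mimone_comp}, $\mimwe{A} \le 1$ transfers from $G$ to $\ovl{G}$, so the chain-graph machinery is available on $\ovl{G}$ as well. Third, the reverse order $\chainorder{A}^{\mathsf{rev}}$ is itself a chain order of $A$ with respect to $\ovl{G}$: if $v \chainorder{A} w$ in $G$, then $\Nei{G}{v}\setminus A \subseteq \Nei{G}{w}\setminus A$, hence $\ovl{A}\setminus \Nei{G}{w} \subseteq \ovl{A}\setminus \Nei{G}{v}$, which is exactly $\Nei{\ovl{G}}{w}\setminus A \subseteq \Nei{\ovl{G}}{v}\setminus A$; recasting this inclusion in terms of $\chainorder{A}^{\mathsf{rev}}$ (where $w \chainorder{A}^{\mathsf{rev}} v$) confirms it is a chain order of $A$ in $\ovl{G}$. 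The analogous argument works for $\chainorder{\ovl{A}}^{\mathsf{rev}}$ on $\ovl{A}$.

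With these observations, the proof essentially writes itself. By definition, $S_A \eqrelationcc{A} S_A^\prime$ is nothing other than $S_A \eqrelation{\ovl{G},\chainorder{A}^{\mathsf{rev}}} S_A^\prime$, where the component partitions used to define the equivalence are taken inside $\ovl{G}$ and indexed by $\chainorder{A}^{\mathsf{rev}}$. Since $S_A$ and $S_A^\prime$ are co-cluster sets of $G$, they are cluster sets of $\ovl{G}$. Apply Lemma~\ref{lem:rep_equivalent} with the ambient graph $\ovl{G}$, the chain order $\chainorder{A}^{\mathsf{rev}}$, and the arbitrary subset $S_{\ovl{A}} \subseteq \ovl{A}$: one obtains that $S_A \cup S_{\ovl{A}}$ is a cluster set of $\ovl{G}$ if and only if $S_A^\prime \cup S_{\ovl{A}}$ is a cluster set of $\ovl{G}$. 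Translating back through $\ovl{\,\ovl{G}\,} = G$, this is exactly the equivalence ``co-cluster set of $G$'' claimed in the lemma.

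The only non-trivial step is bookkeeping: making sure that the definition of $\eqrelationcc{A}$ via reverse chain orders really does match the hypothesis of Lemma~\ref{lem:rep_equivalent} when the latter is read in $\ovl{G}$, and that component partitions taken over $\ovl{G}$ (indexed by $\chainorder{A}^{\mathsf{rev}}$) are exactly the objects appearing in the definition of $\eqrelationcc{A}$. I expect this to be the sole source of potential confusion, but no new combinatorial content is required; Lemma~\ref{lem:rep_equivalent}, already proved, does all of the real work, and the present lemma is just its image under graph complementation.
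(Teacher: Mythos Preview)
Your proposal is correct and follows essentially the same approach as the paper: pass to the complement $\ovl{G}$, observe that $S_A \eqrelationcc{A} S_A^\prime$ is by definition $S_A \eqrelation{\ovl{G},\chainorder{A}^{\mathsf{rev}}} S_A^\prime$, and invoke Lemma~\ref{lem:rep_equivalent} in $\ovl{G}$. If anything, you are slightly more careful than the paper in explicitly verifying that $\chainorder{A}^{\mathsf{rev}}$ is a chain order for $\ovl{G}$, which the paper leaves implicit.
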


    \begin{proof}
        Suppose that $S_A \cup S_{\ovl{A}}$ is a co-cluster set of $G$.
        Clearly, $S_A$, $S_A^\prime$, and $S_A \cup S_{\ovl{A}}$ are cluster sets of $\ovl{G}$.
        Recall that $S_A \eqrelationcc{A} S_{A}^\prime$ means that $S_A \eqrelation{\ovl{G}, \baseorder_{A}^\textsf{rev}} S_A^\prime$.
        By \Cref{lem:rep_equivalent}, $S_A^\prime \cup S_{\ovl{A}}$ is a cluster set of $\ovl{G}$, that is, $S_A^\prime \cup S_{\ovl{A}}$ is a co-cluster set of $G$.
        The converse is also true.
    \end{proof}

    We say that a pair $(S_{1,A},S_{2,A})$ of disjoint subsets of $A$ is a \emph{polar pair} if $S_{1,A}$ is a cluster set of $G$ and $S_{2,A}$ is a co-cluster set of $G$.
    For pairs $(S_{1,A},S_{2,A})$ and $(S_{1,A}^\prime,S_{2,A}^\prime)$ of disjoint subsets of $A$, we denote $(S_{1,A},S_{2,A}) \eqrelationp{A} (S_{1,A}^\prime, S_{2,A}^\prime)$ if $S_{1,A} \eqrelation{A} S_{1,A}^\prime$ and $S_{2,A} \eqrelationcc{A} S_{2,A}^\prime$.
    For a pair $(S_{1,A},S_{2,A})$ of disjoint subsets of $A$, a representative $\rep{t}(S_{1,A}, S_{2,A})$ is a pair $(R_{1,A},R_{2,A})$ such that the size of $R_{1,A} \cup R_{2,A}$ is minimized over all pairs with $(R_{1,A},R_{2,A}) \eqrelationp{t} (S_{1,A},S_{2,A})$.
    Similarly, we can define an equivalence relation $\eqrelationp{\ovl{A}}$ over subsets of $\ovl{A}$ and a representative $\rep{\ovl{A}}$ for a pair of disjoint subsets of $\ovl{A}$.
    Combined with \Cref{lem:rep_equivalent,lem:repcc_equivalent}, we can obtain the following proposition.
    \begin{proposition} \label{prop:rep_equivalent_polar}
        For a vertex subset $A$ of a graph $G$ such that $\mimwe{A} \le 1$, let $(S_{1,A},S_{2,A})$ and $(S_{1,A}^\prime,S_{2,A}^\prime)$ be polar pairs of $A$ with $(S_{1,A},S_{2,A}) \eqrelationp{A} (S_{1,A}^\prime,S_{2,A}^\prime)$ and let $(S_{1,\ovl{A}},S_{2,\ovl{A}})$ be any pair of disjoint subsets of $\ovl{A}$.
		Then,
        \begin{enumerate}
            \item $S_{1,A} \cup S_{1,\ovl{A}}$ is a cluster set of $G$ if and only if $S_{1,A}^\prime \cup S_{1,\ovl{A}}$ is a cluster set of $G$; and
            \item $S_{2,A} \cup S_{2,\ovl{A}}$ is a co-cluster set of $G$ if and only if $S_{2,A}^\prime \cup S_{2,\ovl{A}}$ is a co-cluster set of $G$.
        \end{enumerate}
    \end{proposition}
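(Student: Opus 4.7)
The plan is to derive this proposition directly from Lemmas~\ref{lem:rep_equivalent} and~\ref{lem:repcc_equivalent}, which already handle the cluster and co-cluster sides separately. The key observation is that the definition of $\eqrelationp{A}$ decomposes into the two independent conditions $S_{1,A} \eqrelation{A} S_{1,A}^\prime$ and $S_{2,A} \eqrelationcc{A} S_{2,A}^\prime$, and the two claims of the proposition also concern the cluster part and the co-cluster part independently. Since $(S_{1,A}, S_{2,A})$ and $(S_{1,A}^\prime, S_{2,A}^\prime)$ are polar pairs, both $S_{1,A}$ and $S_{1,A}^\prime$ are cluster sets of $G$, and both $S_{2,A}$ and $S_{2,A}^\prime$ are co-cluster sets of $G$, which supplies exactly the hypotheses required by the two lemmas.

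To establish claim~(1), I would apply Lemma~\ref{lem:rep_equivalent} with the subset of $\ovl{A}$ taken to be $S_{1,\ovl{A}}$. Because $S_{1,A}$ and $S_{1,A}^\prime$ are cluster sets of $G$ with $S_{1,A} \eqrelation{A} S_{1,A}^\prime$, that lemma immediately yields that $S_{1,A} \cup S_{1,\ovl{A}}$ is a cluster set of $G$ if and only if $S_{1,A}^\prime \cup S_{1,\ovl{A}}$ is a cluster set of $G$. For claim~(2), the argument is parallel: invoke Lemma~\ref{lem:repcc_equivalent} with the subset of $\ovl{A}$ taken to be $S_{2,\ovl{A}}$. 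Since $S_{2,A}$ and $S_{2,A}^\prime$ are co-cluster sets of $G$ with $S_{2,A} \eqrelationcc{A} S_{2,A}^\prime$, the lemma delivers precisely the required equivalence for co-cluster sets.

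No serious obstacle is expected here: the proposition is essentially a packaging result that combines the two pre-existing equivalence statements, one per coordinate of the polar pair. The only minor point to notice is that the disjointness of $(S_{1,\ovl{A}}, S_{2,\ovl{A}})$ on the complement side plays no role in the deduction, because each claim only touches one of the two sets at a time; so there is nothing to verify regarding how the two sides interact, and the assumption $\mimwe{A} \le 1$ is used only implicitly through the two lemmas that guarantee the chain orders $\chainorder{A}$ and $\chainorder{A}^\textsf{rev}$ (and thus the equivalence relations) are well-defined.
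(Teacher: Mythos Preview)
Your proposal is correct and matches the paper's approach: the paper itself states only that the proposition is obtained by combining Lemmas~\ref{lem:rep_equivalent} and~\ref{lem:repcc_equivalent}, and you have spelled out precisely that combination, applying each lemma to the corresponding coordinate of the polar pair.
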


    Here, we describe a polynomial-time algorithm for \textsc{Induced Polar Subgraph}.
    We omit proofs for the correctness of our algorithm because they can be shown in the same way as in \Cref{sec:cluster}.
    Suppose that $(T,\bij)$ is a rooted layout of a graph $G$ with $\mimwT{T}{\bij} \le 1$ and $t$ is a node of $T$.
    Let $\Repset{t} = \{\rep{t}(S_{1,t},S_{2,t}) : S_{1,t},S_{2,t} \subseteq V_t, S_{1,t}\cap S_{2,t} = \emptyset \}$ and $\Repset{\ovl{t}} = \{\rep{\ovl{t}}(S_{1,\ovl{t}},S_{2,\ovl{t}}) : S_{1,\ovl{t}},S_{2,\ovl{t}} \subseteq \ovl{V_t}, S_{1,\ovl{t}}\cap S_{2,\ovl{t}} = \emptyset \}$.
	For two pairs $\mathcal{R}_t = (R_{1,t}, R_{2,t}) \in \Repset{t}$ and $\mathcal{R}_{\ovl{t}} = (R_{1,\ovl{t}}, R_{2,\ovl{t}}) \in \Repset{\ovl{t}}$, we define $\polf{t}{\mathcal{R}_t}{\mathcal{R}_{\ovl{t}}}$ as the function that returns the largest size of $S_{1,t} \cup S_{2,t}$ such that
	\begin{enumerate}
        \item $S_{1,t} \cap S_{2,t} = \emptyset$;
        \item $\rep{t}(S_{1,t}, S_{2,t}) = \mathcal{R}_t$; and
		\item $S_{1,t} \cup R_{1,\ovl{t}}$ is a cluster set of $G$ and $S_{2,t} \cup R_{2,\ovl{t}}$ is a co-cluster set of $G$.
	\end{enumerate}
 \medskip
	We let $\polf{t}{\mathcal{R}_t}{\mathcal{R}_{\ovl{t}}} = -\infty$ if there is no subset satisfying the above conditions.
	We compute $\polf{t}{\mathcal{R}_t}{\mathcal{R}_{\ovl{t}}}$ by means of dynamic programming from leaves to the root $r$ of $T$ as follow.
    If $t$ is a leaf of $T$, we have
    \begin{align*}
		\polf{t}{\mathcal{R}_t}{\mathcal{R}_{\ovl{t}}} =
		\begin{cases}
			    0 & \text{if $\mathcal{R}_t = (\emptyset, \emptyset)$, $R_{1,\ovl{t}}$ is a cluster set of $G$,} \\
              & \text{~~~and $R_{2,\ovl{t}}$ is a co-cluster set of $G$;} \\
			1 & \text{if $\mathcal{R}_t = (\{v\}, \emptyset)$, $\{v\} \cup R_{1,\ovl{t}}$ is a cluster set of $G$,} \\
              & \text{~~~and $R_{2,\ovl{t}}$ is a co-cluster set of $G$;} \\
            1 & \text{if $\mathcal{R}_t = (\emptyset, \{v\})$, $R_{1,\ovl{t}}$ is a cluster set of $G$,} \\
              & \text{~~~and $\{v\} \cup R_{2,\ovl{t}}$ is a co-cluster set of $G$;} \\
			-\infty & \text{otherwise,}
		\end{cases}
	\end{align*}
    where $v$ is the unique vertex in $V_t$.

    For two pairs $\mathcal{R} = (R_1, R_2)$ and $\mathcal{R}^\prime = (R_1^\prime, R_2^\prime)$, we denote $\mathcal{R} \uplus \mathcal{R}^\prime = (R_1\cup R_1^\prime, R_2 \cup R_2^\prime)$.
    If $t$ is an internal node of $T$ with children $a$ and $b$, we have
    \begin{align*}
		\polf{t}{\mathcal{R}_t}{\mathcal{R}_{\ovl{t}}} = \max_{\mathcal{R}_{a} \in \Repset{a} \land \mathcal{R}_{b} \in \Repset{b} } \{ & \clf{a}{\mathcal{R}_{a}}{\rep{\ovl{a}}(\mathcal{R}_{b} \uplus \mathcal{R}_{\ovl{t}}) } \\
        & + \clf{b}{\mathcal{R}_{b}}{\rep{\ovl{b}}(\mathcal{R}_{a} \uplus \mathcal{R}_{\ovl{t}}) }  : \mathcal{R}_t = \rep{t}(\mathcal{R}_a \uplus \mathcal{R}_b)\}.
	\end{align*}

    Each of $\Repset{a}$, $\Repset{b}$, and $\Repset{\ovl{t}}$ consists of vertex sets of size at most $6$ and $T$ has $O(n)$ nodes.
    Therefore, our algorithm for \textsc{Induced Polar Subgraph} runs in $O(n^{6\cdot3+1}) = O(n^{19})$ time.
    This completes the proof of \Cref{the:polarity}.

    \subsection{Connected variant} \label{sec:con}

    In this section, we briefly explain how to handle a connected variant of \textsc{Induced $\Prop$ Subgraph}.

    For vertex subsets $S_A, S_A^\prime$ of $A \subseteq V(G)$, let $(C_1,\ldots, C_{p})$ and $(C_1^\prime,\ldots, C_{q}^\prime)$ be the component partitions of $S_A$ and $S_A^\prime$ over $G$ indexed by a strict total order $\chainorder{A}$, respectively, where $p$ and $q$ are positive integers.
    We define an equivalence relation $\eqrelationc{A}$ as follows: $S_A \eqrelationc{A} S_A^\prime$ if $\head{A}{C_1} = \head{A}{C_1^\prime}$ and $\head{A}{C_p} = \head{A}{C_q^\prime}$.

    \begin{lemma}\label{lem:rep_equivalent_con}
		For a vertex subset $A$ of a graph $G$ such that $\mimwe{A} \le 1$, let $S_A$ and $S_A^\prime$ be vertex subsets of $A$ with $S_A \eqrelationc{A} S_A^\prime$ and let $S_{\ovl{A}}$ be any vertex subset of $\ovl{A}$.
		Then, $\indG{S_A \cup S_{\ovl{A}}}$ is connected if and only if $\indG{S_A^\prime \cup S_{\ovl{A}}}$ is connected.
	\end{lemma}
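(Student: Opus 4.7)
The plan is to introduce an auxiliary bipartite ``reachability'' graph $B$ whose connectivity captures that of $\indG{S_A \cup S_{\ovl{A}}}$, and then argue that the combinatorial data determining connectivity of $B$ depends on $S_A$ only through $\head{A}{C_1}$ and $\head{A}{C_p}$, which by hypothesis agree with $\head{A}{C_1^\prime}$ and $\head{A}{C_q^\prime}$. The key structural observation, following from $\mimwe{A}\le 1$ and \Cref{prop:mimone_chain}, is that $\indG{A,\ovl{A}}$ is a chain graph, so for any $X\subseteq A$ one has $N_G(X)\cap \ovl{A} = N_G(v^\ast)\cap \ovl{A}$, where $v^\ast$ is the $\chainorder{A}$-largest vertex of $X$. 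Applied to each component $C_i$ of $\indG{S_A}$, this gives $N_G(C_i)\cap \ovl{A} = N_G(\head{A}{C_i})\cap \ovl{A}$, and the chain order yields the nested chain $N_G(\head{A}{C_p})\cap \ovl{A} \subseteq \cdots \subseteq N_G(\head{A}{C_1})\cap \ovl{A}$.

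Next, I would let $D_1,\ldots,D_r$ be the connected components of $\indG{S_{\ovl{A}}}$ and define $B$ as the bipartite graph on $\{C_1,\ldots,C_p\}\cup\{D_1,\ldots,D_r\}$ with $C_iD_j$ an edge iff there is an edge of $G$ between a vertex of $C_i$ and a vertex of $D_j$. Since each $C_i$ and each $D_j$ is internally connected, $\indG{S_A\cup S_{\ovl{A}}}$ is connected iff $B$ is connected (the empty case is handled separately by convention). By the nested chain above, $C_iD_j\in E(B)$ implies $C_{i^\prime}D_j\in E(B)$ for every $i^\prime<i$; in particular, whenever some $C_i$ has an edge into $S_{\ovl{A}}$, the vertex $C_1$ of $B$ dominates all $D_j$'s that touch $S_A$. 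This reduces connectivity of $B$ to an explicit criterion: either (I) $N_G(\head{A}{C_1})\cap S_{\ovl{A}}=\emptyset$, in which case $B$ is connected iff $p=1$ and $r=0$; or (II) $N_G(\head{A}{C_1})\cap S_{\ovl{A}}\neq\emptyset$, in which case $B$ is connected iff every $D_j$ meets $N_G(\head{A}{C_1})$ and $N_G(\head{A}{C_p})\cap S_{\ovl{A}}\neq\emptyset$ (using the chain to reduce ``every $C_i$ touches $S_{\ovl{A}}$'' to the weakest head $\head{A}{C_p}$).

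Finally, I would observe that each clause of the criterion above depends on $S_A$ only through the two vertices $\head{A}{C_1}$ and $\head{A}{C_p}$, and, since distinct components have distinct heads, $p=1$ holds iff $\head{A}{C_1}=\head{A}{C_p}$. By the assumption $S_A \eqrelationc{A} S_A^\prime$, the same criterion yields the same verdict for $S_A^\prime\cup S_{\ovl{A}}$, proving the equivalence. The main obstacle is carefully bookkeeping the edge cases (one of $S_A,S_A^\prime,S_{\ovl{A}}$ being empty, or $\head{A}{C_1}$ having no neighbor in $S_{\ovl{A}}$) and verifying that $\eqrelationc{A}$ is consistent with these degenerate cases using the convention $\head{A}{\emptyset}=\emptyset$; once that is in place, the chain structure makes the characterization essentially forced.
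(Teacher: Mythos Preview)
Your proposal is correct and rests on the same structural observation as the paper: since $\indG{A,\ovl{A}}$ is a chain graph, the cross-neighborhood of each component $C_i$ collapses to that of its head, and the heads are nested. The packaging differs slightly: you abstract to a bipartite ``component graph'' $B$ and extract an explicit connectivity criterion that visibly depends only on $\head{A}{C_1}$ and $\head{A}{C_p}$, while the paper proceeds more directly by picking witness edges $x_1 z_\gamma$ and $x_\alpha z_1$ (using the chain order on $\ovl{A}$ as well) and verifying connectivity of $\indG{S_A^\prime\cup S_{\ovl{A}}}$ by hand. Your route is a bit more systematic and actually only needs the chain order on the $A$ side, whereas the paper's argument is shorter but implicitly relies on both orders; either way the content is the same.
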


    \begin{proof}
		Suppose that $(X_1, \ldots, X_\alpha)$ is the component partition of $S_A$ indexed by $\chainorder{A}$; $(Y_1^\prime, \ldots, Y_\beta^\prime)$ is the component partition of $S_A^\prime$ indexed by $\chainorder{A}$; and $(Z_1, \ldots, Z_\gamma)$ is the component partition of $S_{\ovl{A}}$ indexed by $\chainorder{\ovl{A}}$, where $\alpha$, $\beta$, and $\gamma$ are positive integers.
        To prove the lemma, we consider the following two cases: \ONE\ $S_{\ovl{A}} = \emptyset$; and \TWO\ $S_{\ovl{A}} \neq \emptyset$.

        Suppose that $S_{\ovl{A}} = \emptyset$.
        If $\indG{S_A}$ is connected, then we have $p=1$.
        Moreover, as $S_A \eqrelationc{A} S_A^\prime$, $\head{A}{C_1} = \head{A}{C_1^\prime}$ and $\head{A}{C_p} = \head{A}{C_q^\prime}$ hold.
        Combined with $p=1$, this implies that $\head{A}{C_1^\prime} = \head{A}{C_q^\prime}$, that is, $\indG{S_A^\prime}$ consists of a single connected component.
        Therefore, when $S_{\ovl{A}} = \emptyset$, if $\indG{S_A}$ is connected, then $\indG{S_A^\prime}$ is also connected, and vice versa.

        Suppose next that $S_{\ovl{A}} \neq \emptyset$.
        By symmetry, we only show the sufficiency.
        For a positive integer $i$, we denote $x_i = \heador{X_i}{\baseorder_A}$, $y_i = \heador{Y_i}{\baseorder_A}$, and $z_i = \heador{Z_i}{\chainorder{\ovl{A}}}$.
        If $\indG{S_A \cup S_{\ovl{A}}}$ is connected, then we have $x_1 z_\gamma \in E(G)$; otherwise, since $\indG{A, \ovl{A}}$ is the chain graph, there is no edge between vertices in $S_{A}$ and vertices in $Z_\gamma$, which contradicts that $\indG{S_A \cup S_{\ovl{A}}}$ is connected.
        For the same reason, we have $x_\alpha z_1 \in E(G)$.
        From the definition of $\eqrelationc{A}$, this means that $y_\beta z_1 \in E(G)$.
        Any vertex in $S_A^\prime \setminus Y_\beta$ is adjacent to $z_1$ and there is a path from any vertex in $Y_\beta$ to $z_1$ via $y_\beta$.
        Similarly, any vertex in $S_{\ovl{A}} \setminus Z_\gamma$ is adjacent to $x_1$ and there is a path from any vertex in $Z_\gamma$ to $x_1$ via $z_\gamma$.
        Moreover, $x_1 z_1 \in E(G)$ because $x_1 z_\gamma \in E(G)$ and $\indG{A, \ovl{A}}$ is the chain graph.
        We conclude that there is a path between any two vertices in $S_A^\prime \cup S_{\ovl{A}}$, that is, $\indG{S_A^\prime \cup S_{\ovl{A}}}$ is connected.
    \end{proof}

    As an example, consider \textsc{Connected Induced Polar Subgraph}.
    Given a rooted layout $(T, \bij)$ of a graph $G$ with $\mimwT{T}{\bij} \le 1$, we define a new equivalence relation $\eqrelationcp{t}$ for each node $t$ of $T$ as follows: for pairs $(S_{1,t},S_{2,t})$ and $(S_{1,t}^\prime,S_{2,t}^\prime)$ of disjoint subsets of $V_t$, we write $(S_{1,t},S_{2,t}) \eqrelationcp{t} (S_{1,t}^\prime,S_{2,t}^\prime)$ if $S_{1,t}\cup S_{2,t}\eqrelationc{t} S_{1,t}^\prime\cup S_{2,t}^\prime$ and $(S_{1,t},S_{2,t}) \eqrelationp{t} (S_{1,t}^\prime,S_{2,t}^\prime)$.
    We also define $\eqrelationcp{\ovl{t}}$ analogously.
    By slightly modifying the algorithm in \Cref{sec:polar}, we can construct a polynomial-time algorithm for \textsc{Connected Induced Polar Subgraph} based on $\eqrelationcp{t}$ and $\eqrelationcp{\ovl{t}}$.

    \subsection{More general cases} \label{sec:meta_algo}
    In this section, we consider more general problems.
    Let $k$ be a fixed integer and let $\Prop_1, \Prop_2,\ldots, \Prop_k$ be $k$ graph properties.
    A $(\Prop_1, \Prop_2,\ldots, \Prop_k)$-\emph{coloring} of a graph $G=(V,E)$ is a partition $(V_1, V_2, \ldots, V_k)$ of $V$ such that $V_i$ is a $\Prop_i$-set of $G$ for every $i \in \segsingle{k}$.
    The $(\Prop_1, \Prop_2,\ldots, \Prop_k)$-\textsc{Partition} problem asks whether a given graph $G$ has a $(\Prop_1, \Prop_2,\ldots, \Prop_k)$-coloring.
    A graph property $\Prop$ is said to be \emph{additive} if for any two graphs $G_1$ and $G_2$ that satisfy $\Prop$, the disjoint union of $G_1$ and $G_2$ also satisfies $\Prop$.
    For any fixed nontrivial additive hereditary graph properties $\Prop_1, \Prop_2,\ldots, \Prop_k$, Farrugia showed that $(\Prop_1, \Prop_2,\ldots, \Prop_k)$-\textsc{Partition} is NP-hard unless $k=2$ and both $\Prop_1$ and $\Prop_2$ are the classes of edgeless graphs (that is, unless the problem of finding a proper $2$-coloring of $G$)~\cite{Farrugia04}.
    Furthermore, in~\cite{TamuraIZ20}, the minimization variant of $(\Prop_1, \Prop_2,\ldots, \Prop_k)$-\textsc{Partition}, namely \textsc{Min $(\Prop_1, \Prop_2,\ldots, \Prop_k)$-Partition}, was introduced, where we are required to find a $(\Prop_1, \Prop_2,\ldots, \Prop_k)$-coloring such that $V_1$ is minimized.
    Naturally, the maximization variant of $(\Prop_1, \Prop_2,\ldots, \Prop_k)$-\textsc{Partition}, namely \textsc{Max $(\Prop_1, \Prop_2,\ldots, \Prop_k)$-Partition}, can also be defined in the similar way.

    We here define the generalized problem including the above problems.
    A \emph{degree constraint matrix} $D_k$, which was introduced by Telle and Proskurowski~\cite{TelleP97}, is a $k \times k$ matrix such that each entry is a subset of the natural numbers $\mathbb{N}$.
    Let $\textsf{R} \in \{\textsc{Ext}, \textsc{Min}, \textsc{Max} \}$ be a type of a problem.
    The \textsf{R}-$(\Prop_1, \Prop_2,\ldots, \Prop_k)$-\textsc{Partition With $D_k$} asks for finding a partition $(V_1, V_2, \ldots, V_k)$ of a vertex set of a given graph $G$ that satisfies the following three conditions:
    \begin{itemize}
        \item $(V_1, V_2, \ldots, V_k)$ is a $(\Prop_1, \Prop_2,\ldots, \Prop_k)$-coloring of $G$;
        \item for each $i,j \in \segsingle{k}$ and each $v \in V_i$, it holds that $|\Nei{G}{v} \cap V_j | \in D_k[i,j]$; and
        \item $V_1$ is minimized if $\textsf{R} = \textsc{Min}$, $V_1$ is maximized if $\textsf{R} = \textsc{Max}$, and we do not care about the size of $V_1$ if $\textsf{R} = \textsc{Ext}$.
    \end{itemize}

    For instance, for the degree constraint matrix $D_2$ defined as follows, \textsc{Min}-$(\Prop_1, \Prop_2)$-\textsc{Partition With $D_2$} such that each of $\Prop_1$ and $\Prop_2$ contains all graphs is equivalent to \textsc{Dominating Set}:
    \[
    D_2 =
    \begin{pmatrix}
        \mathbb{N} & \mathbb{N} \\
        \mathbb{N} \setminus \{ 0 \}  & \mathbb{N} \\
    \end{pmatrix}.
    \]
    The entry $D_k[2,1]$ promises that each vertex in $V_2$ is dominated by at least one vertex in $V_1$.
    Furthermore, when $\Prop_1$ is the collection of all cliques and $\Prop_2$ contains all graphs, \textsc{Max}-$(\Prop_1, \Prop_2)$-\textsc{Partition With $D_2$} is equivalent to \textsc{Dominating Clique}.
    It is known that various problems can be expressed in this form, see~\cite{TamuraIZ20,TelleP97} in details.

    Bui-Xuan et al.\ demonstrated that, if each of $\Prop_1$ and $\Prop_2$ contains all graphs and $D_k[i,j]$ for each pair of $i,j \in \segsingle{k}$ is a finite or co-finite subset of $\mathbb{N}$, then \textsf{R}-$(\Prop_1, \Prop_2,\ldots, \Prop_k)$-\textsc{Partition With $D_k$} is solvable in polynomial time for bounded mim-width graphs by using the $d$-neighbor equivalence (assuming that a branch decomposition with constant mim-width is given)~\cite{BuixuanTV13}.
    The algorithm in \Cref{sec:polar} suggests that \textsf{R}-$(\Prop_1, \Prop_2,\ldots, \Prop_k)$-\textsc{Partition} is tractable if an equivalence relation $\equiv^i$ for each property $\Prop_i$ is defined appropriately and a family of representatives defined by $\equiv^i$ has polynomial size.
    Moreover, in \Cref{sec:polar,sec:con}, we have described how to deal with a complementary property and a connected variant on a branch decomposition with mim-width at most~$1$.
    In conclusion, combining with the work in~\cite{BuixuanTV13}, we obtain the following theorem.

    \begin{theorem} \label{the:mimone_solvability}
        Let $k$ be a fixed integer, $\Prop_1, \Prop_2,\ldots, \Prop_k$ be fixed graph properties, and $D_k$ be a degree constraint matrix such that $D_k[i,j]$ for each $i,j \in \segsingle{k}$ is a finite or co-finite subset of $\mathbb{N}$.
        Suppose that for each $i \in \segsingle{k}$, $\Prop_i$ or $\compProp_i$ is the collection of all graphs or all cluster graphs.
        Then, given a graph and its branch decomposition with mim-width at most~$1$, there is a polynomial-time algorithm that solves \textsf{R}-$(\Prop_1, \Prop_2,\ldots, \Prop_k)$-\textsc{Partition With $D_k$}, as well as the variant that some of $V_1, V_2, \ldots V_k$ are required to be connected.
    \end{theorem}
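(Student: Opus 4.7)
The plan is to extend the dynamic-programming scheme of \Cref{sec:cluster,sec:polar,sec:con} by maintaining, for each node $t$ of the given rooted layout $(T,\bij)$, a compound state that records (i)~one representative per color class $V_i$, chosen according to the type of $\Prop_i$, (ii)~a single $d$-neighbor equivalence class (in the sense of Bui-Xuan~et~al.~\cite{BuixuanTV13}) for each entry $D_k[i,j]$ that needs to be tracked across the cut $(V_t,\ovl{V_t})$, and (iii)~if $V_i$ is required to be connected, an additional $\eqrelationc{t}$-tag. For each color $i$ the representative is picked as follows: if $\Prop_i$ is the collection of all graphs, the color is unconstrained and only the $d$-neighbor coordinates concern it; if $\Prop_i$ is the collection of all cluster graphs we use $\eqrelation{t}$ from \Cref{sec:cluster}; and if $\compProp_i$ is one of the above we use the analogous relation on the complement with reversed chain orders, as was done for co-cluster in \Cref{sec:polar}. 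For the connected variants we additionally refine by $\eqrelationc{t}$ along the lines of \Cref{sec:con}.

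The key observation is that every coordinate of the state has only polynomially many classes at each node: the cluster and co-cluster representatives are determined by at most three vertices of $V_t$ (respectively $\ovl{V_t}$); the connectivity refinement of \Cref{lem:rep_equivalent_con} adds only two more; and, because $D_k[i,j]$ is finite or co-finite, only a constant cap $d$ on the neighbor count is relevant, so the $d$-neighbor equivalence on a graph of mim-width at most~$1$ has $n^{O(d)}$ classes. Taking the product over the fixed number $k$ of colors leaves $n^{O(1)}$ compound states per node. The chain orders of \Cref{lem:chainorder} are computed once, globally, so that representatives of parent nodes are compatible with those of their children.

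The DP proceeds bottom-up on $T$. Leaves are initialized by inspection. At an internal node $t$ with children $a$ and $b$, we iterate over all pairs of compound child states and glue them: the cluster and co-cluster coordinates compose via the direct analog of \Cref{lem:rep_composite} (with the reversed-order version handling the complementary case and \Cref{prop:rep_equivalent_polar} ensuring soundness), the connectivity coordinates compose via \Cref{lem:rep_equivalent_con} together with an edge-check between the single surviving representative vertices of the two sides, and the $d$-neighbor classes compose via the merger operation of~\cite{BuixuanTV13}. Lower/upper chain-order compatibility (\Cref{lem:rep_child}) guarantees that the representatives reported by $a$ and $b$ may be substituted into the equivalence class at $t$. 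The objective — existence, minimization, or maximization of $|V_1|$ — is maintained as a numerical value attached to each state, and the degree matrix $D_k$ is checked at the root using the recorded $d$-neighbor counts; the contribution from inside $V_t$ is already enforced incrementally while merging.

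The main obstacle will be the \emph{simultaneous} composition step, where we must verify that the cluster/co-cluster representatives, the connectivity tags, and the $d$-neighbor classes remain mutually consistent after gluing, and in particular that passing to the complement (for colors with $\compProp_i$ of the prescribed type) does not break the $d$-neighbor merger rule. This reduces to checking that each of the local lemmas of \Cref{sec:cluster,sec:polar,sec:con} depends only on the coordinate of a single color class, so that the coordinates are effectively independent; once this is established, correctness follows by induction on $T$ and the polynomial running time follows from the polynomial state-count per node together with $|V(T)| = O(n)$.
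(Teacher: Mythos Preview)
Your proposal is correct and follows the same approach as the paper, which in fact gives no detailed proof of this theorem: the paper simply remarks that the algorithm of \Cref{sec:polar} combines with the $d$-neighbor equivalence framework of~\cite{BuixuanTV13} and the connectivity handling of \Cref{sec:con} to yield the result. Your write-up is a faithful (and more explicit) elaboration of exactly this combination.
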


    Finally, we show \Cref{the:dichotomy} as an application of \Cref{the:mimone_solvability}.

    \begin{theorem} \label{the:dichotomy}
        All the following problems, as well as their connected variants and their dominating variants, are NP-hard for graphs with mim-width at most $2$:
  	(i) \textsc{Clique};
		(ii) \textsc{Induced Cluster Subgraph};
		(iii) \textsc{Induced Polar Subgraph};
		(iv) \textsc{Induced $\ovl{P_3}$-free Subgraph};
		(v) \textsc{Induced Split Subgraph}; and
        (vi) \textsc{Induced $\ovl{K_3}$-free Subgraph}.
        On the other hand, given a graph and its branch decomposition of mim-width at most $1$, all the above problems, as well as their connected variants and their dominating variants, are solvable in polynomial time.
    \end{theorem}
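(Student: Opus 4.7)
The NP-hardness half is immediate from \Cref{cor:NP-hard_lmimtwo}: that corollary proves NP-hardness under the stronger restriction $\lmimwG{G}\le 2$, and since $\mimwG{G}\le \lmimwG{G}$ for every graph and a linear branch decomposition with mim-width~$2$ is also a branch decomposition with mim-width~$2$, the hardness statement of \Cref{the:dichotomy} (together with its connected and dominating variants) follows without further work.

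For the polynomial-time half, the plan is to cast each of the six problems as an \textsf{R}-$(\Prop_1,\dots,\Prop_k)$-\textsc{Partition With $D_k$} fitting the hypotheses of \Cref{the:mimone_solvability}: each $\Prop_i$ or its complement must be the class of all graphs or the class of cluster graphs, and every entry of $D_k$ must be finite or co-finite. For \textsc{Clique}, take the Max variant with $\Prop_1=$ cluster and $V_1$ required connected, so that $V_1$ is forced to be a single clique. For \textsc{Induced Cluster Subgraph}, drop the connectivity requirement. For \textsc{Induced $\ovl{P_3}$-free Subgraph}, use $\compProp_1=$ cluster, since the $\ovl{P_3}$-free graphs are exactly the co-cluster graphs. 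For \textsc{Induced Polar Subgraph}, use the Min variant with $k=3$, $\Prop_1=$ all graphs (the removed set, which is minimized), $\Prop_2=$ cluster, and $\compProp_3=$ cluster. For \textsc{Induced Split Subgraph}, use a similar Min variant with $\Prop_2=$ cluster and $V_2$ connected (yielding a clique), and $\Prop_3=$ all graphs together with $D_3[3,3]=\{0\}$ to force $V_3$ to be independent. Finally, for \textsc{Induced $\ovl{K_3}$-free Subgraph}, the crucial observation is that by \Cref{prop:mimone_perfect} the input $G$ is perfect, so $\alpha(G[V_1])\le 2$ becomes, via $\omega=\chi$ applied to $\overline{G[V_1]}$, the condition that $V_1$ is the union of at most two cliques; accordingly, use the Min variant with $k=3$, $\Prop_1=$ all graphs, $\Prop_2=\Prop_3=$ cluster, with $V_2$ and $V_3$ required connected (or empty, handled by a simple case split).

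The connected variants are obtained by imposing the corresponding connectivity constraint on the kept or removed part; when that part is a single $V_i$ it fits the statement of \Cref{the:mimone_solvability} directly, and when it is the union $V_2\cup\cdots\cup V_k$ (as for the connected induced-subgraph variants of the Min formulations) it is still handled by the equivalence-relation dynamic programming of \Cref{sec:con}, which compresses partial solutions by the head/tail pair of their chain-ordered components. The dominating variants are encoded through the degree matrix: for the Max formulations one sets $D_k[i,1]=\mathbb{N}\setminus\{0\}$ for every $i\ne 1$, and for the Min formulations one first refines $V_1$ into a bounded number of sub-parts according to which of $V_2,\dots,V_k$ supplies a dominating neighbour to each vertex, and then imposes $\mathbb{N}\setminus\{0\}$ in the corresponding $D_k$ entries while keeping the total number of parts constant. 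I expect the main obstacle to be the \textsc{Induced $\ovl{K_3}$-free} case: its natural ``independence number at most two'' description is not cluster-shaped and enters the framework only through the perfect-graph detour above. A secondary subtlety is the dominating variants of the three-part Min formulations, where a single entry of $D_k$ cannot directly express ``dominated somewhere in the kept set'' and the sub-part refinement just described is needed to stay within the framework of \Cref{the:mimone_solvability}.
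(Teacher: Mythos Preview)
Your proposal is correct and follows essentially the same plan as the paper: the NP-hardness part is identical (invoke \Cref{cor:NP-hard_lmimtwo}), and the tractability part casts each problem as an \textsf{R}-$(\Prop_1,\dots,\Prop_k)$-\textsc{Partition With $D_k$} satisfying the hypotheses of \Cref{the:mimone_solvability}, with the same encodings for (i)--(v).

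The one genuine difference is in case (vi). The paper first passes to $\ovl{G}$ via \Cref{prop:mimone_comp}, observes that on a perfect graph $K_3$-free is equivalent to bipartite, and then encodes \textsc{Induced Bipartite Subgraph} on $\ovl{G}$ using two parts with $D_3[2,2]=D_3[3,3]=\{0\}$. You instead stay in $G$ and use perfectness to rewrite ``$\ovl{K_3}$-free'' as ``coverable by two cliques'', encoding the kept set as two connected cluster parts. These are dual arguments (two independent sets in $\ovl{G}$ versus two cliques in $G$); yours avoids the detour through \Cref{prop:mimone_comp} at the price of needing the connectivity clause of \Cref{the:mimone_solvability}, while the paper's avoids connectivity but needs the complement step. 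Both are valid. (Minor slip: in your explanation you call the kept set $V_1$ when stating $\alpha(G[V_1])\le 2$, then switch to $V_1$ being the deleted set in the formulation; the intent is clear.) For the connected and dominating variants the paper simply says ``by similar arguments''; your more explicit discussion of how to push a connectivity constraint on $V_2\cup\cdots\cup V_k$ through the machinery of \Cref{sec:con}, and of the sub-part refinement for domination, is a welcome elaboration rather than a deviation.
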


    \begin{proof}
        The NP-hardness of the above problems on graphs mim-width at most~$2$ was shown by \Cref{cor:NP-hard_lmimtwo}.

        It is not hard to see that \textsc{Clique}, \textsc{Induced Cluster Subgraph}, \textsc{Induced Polar Subgraph}, and \textsc{Induced $\ovl{P_3}$-free Subgraph} can be described in the forms of \textsf{R}-$(\Prop_1, \Prop_2,\ldots, \Prop_k)$-\textsc{Partition With $D_k$} such that the conditions in \Cref{the:mimone_solvability} are satisfied.
        In addition, \textsc{Induced Split Subgraph} is equivalent to \textsc{Min}-$(\Prop_1, \Prop_2, \Prop_3)$-\textsc{Partition With $D_3$} such that $\Prop_1$ and $\Prop_2$ are the collections of all graphs, $\Prop_3$ is the collection of all connected cluster graphs (that is, cliques), and $D_3$ is defined as follows:
        \[
            D_3 =
            \begin{pmatrix}
                \mathbb{N} & \mathbb{N} & \mathbb{N} \\
                \mathbb{N} & \{ 0 \} & \mathbb{N} \\
                \mathbb{N} & \mathbb{N} & \mathbb{N} \\
            \end{pmatrix}.
        \]
        The entry $D_3[2,2]$ requires $V_2$ to be an independent set of a given graph $G$.
        By \Cref{the:mimone_solvability}, the problems (i)--(v) are solvable in polynomial time.

        Our remaining task is to show the polynomial-time solvability of \textsc{Induced $\ovl{K_3}$-free Subgraph}.
        To this end, we provide the simple observation that \textsc{Induced $K_3$-free Subgraph} is solvable in polynomial time if a branch decomposition with mim-width at most~$1$ is given.
        Let $G$ be a graph with mim-width at most~$1$ and $S$ be a vertex subset of $G$.
        We claim that $\indG{S}$ is $K_3$-free if and only if $\indG{S}$ is bipartite.
        Since $G$ is perfect from \Cref{prop:mimone_perfect}, $G$ has no induced odd cycle of length at least~$5$.
        Thus, if $\indG{S}$ is $K_3$-free, then $\indG{S}$ has no induced odd cycle, that is, $\indG{S}$ is bipartite.
        Obviously, the converse is also true.
        Therefore, \textsc{Induced $K_3$-free Subgraph} is equivalent to \textsc{Induced Bipartite Subgraph}, which can be expressed in \textsc{Min}-$(\Prop_1, \Prop_2, \Prop_3)$-\textsc{Partition With $D_3^\prime$} such that $\Prop_1$, $\Prop_2$, and $\Prop_3$ are the collections of all graphs and $D_3^\prime$ is defined as follows:
       \[
            D_3^\prime =
            \begin{pmatrix}
                \mathbb{N} & \mathbb{N} & \mathbb{N} \\
                \mathbb{N} & \{ 0 \} & \mathbb{N} \\
                \mathbb{N} & \mathbb{N} & \{ 0 \} \\
            \end{pmatrix}.
        \]
        In conclusion, reducing \textsc{Induced $\ovl{K_3}$-free Subgraph} to \textsc{Induced $K_3$-free Subgraph} by taking the complement of a given graph, from \Cref{prop:mimone_comp} and \Cref{the:mimone_solvability}, \textsc{Induced $\ovl{K_3}$-free Subgraph} is solvable in polynomial time if a branch decomposition with mim-width at most~$1$ is given.
        The connected variant and dominating variant of the problems (i)--(vi) are also shown to be solvable in polynomial time by similar arguments.
    \end{proof}

    \section{Concluding remarks}

    We discuss future work here.
    Our proof of \Cref{the:NP-hard_main} relies on the assumption that a fixed graph property $\Prop$ admits all cliques, and hence \Cref{the:NP-hard_main} is not applicable to \textsc{Induced $\Prop$ Subgraph} such that $\Prop$ excludes some clique.
    Such problems include \textsc{Independent Set}, \textsc{Induced Matching}, \textsc{Longest Induced Path}, and \textsc{Feedback Vertex Set}.
    In fact, there exist XP algorithms of the problems listed above when parameterized by mim-width~\cite{BelmonteV13,BuixuanTV13,JaffkeKT20a,JaffkeKT20b}.
    This motivates us to seek a graph property $\Prop$ such that \textsc{Induced $\Prop$ Subgraph} is NP-hard for bounded mim-width graphs although $\Prop$ excludes some clique.
    As the first step, it would be interesting to consider \textsc{Induced $K_3$-free Subgraph}.

    In~\cite{BergougnouxDJ23}, Bergougnoux et al.~showed that \textsc{Clique} is expressible in $\textsf{A}\&\textsf{C DN}+\forall$, which is \textsf{A}\&\textsf{C DN} logic that allows to use a single universal quantifier $\forall$, and hence their meta-theorem cannot be extended to $\textsf{A}\&\textsf{C DN}+\forall$.
    Our results in this paper suggest that the barrier could be broken down for graphs with mim-width at most~$1$.
    The next goal is to obtain a more general logic than \textsf{A}\&\textsf{C DN} such that all problems expressible in the logic are solvable in polynomial time for graphs with mim-width at most~$1$.

    Finally, we end this paper by leaving the biggest open problem concerning mim-width: Given a graph $G$, is there a polynomial-time algorithm that computes a branch decomposition with mim-width~$1$, or concludes that $G$ has mim-width more than~$1$?



	\bibliography{reference}


\end{document}